\newcommand{\ra}[1]{\renewcommand{\arraystretch}{#1}}
\newtheorem{definition}{Definition}
\newtheorem{lemma}{Lemma}
\newtheorem{proposition}{Proposition}
\newtheorem{theorem}{Theorem}
\newtheorem{corollary}{Corollary}
\newtheorem{example}{Example}
\newtheorem{observation}{Observation}
\newtheorem{condition}{Condition}
\newtheorem{assumption}{Assumption}
\newtheorem{remark}{Remark}
\def\gX{{\mathcal{X}}}
\title{GPU-Accelerated Computation of Vietoris-Rips Persistence Barcodes} 
\author{Simon Zhang\thanks{Department of Computer Science and Engineering, The Ohio State University. Email: zhang.680@osu.edu} \qquad Mengbai Xiao\thanks{Department of Computer Science and Engineering, The Ohio State University. Email: xiao.736@osu.edu} \qquad  Hao Wang\thanks{Department of Computer Science and Engineering, The Ohio State University. Email: wang.2721@osu.edu}}
\global\newcommand{\ccsdesc}[2]{\small{#1~\relax$\rightarrow$~\relax#2;~~\relax}}
\newcommand*\Copyright[1]{%
  \def\@copyrightholder{#1}
  \def\@Copyright{%
    \setbox\@tempboxa\hbox{\includegraphics[height=14\p@,clip]{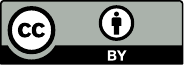}}%
    \@rightskip\@flushglue \rightskip\@rightskip
    \hangindent\dimexpr\wd\@tempboxa+0.5em\relax
    \href{https://creativecommons.org/licenses/by/3.0/}%
         {\smash{\lower\baselineskip\hbox{\unhcopy\@tempboxa}}}\enskip
    \textcopyright\ %
    \ifx!#1!\textcolor{red}{Author: Please fill in the \string\Copyright\space macro}\else#1\fi
    ;\\%
    licensed under Creative Commons License CC-BY\ifx!#1!\\\null\fi\par}}
\def\subjclassHeading{%
  \textcolor{darkgray}{\fontsize{9}{12}\sffamily\bfseries
                       2012 ACM Subject Classification\enskip}}
\def\Keywords{%
  \textcolor{darkgray}{\fontsize{9}{12}\sffamily\bfseries
                       Keywords and phrases\enskip}}
\def\Category{%
  \textcolor{darkgray}{\fontsize{9}{12}\sffamily\bfseries
                       Category\enskip}}
\def\Supplemental{%
  \textcolor{darkgray}{\fontsize{9}{12}\sffamily\bfseries
                       Supplemental Material\enskip}}
\def\Funding{%
  \textcolor{darkgray}{\fontsize{9}{12}\sffamily\bfseries
                       Funding\enskip}}
\def\Acknowledgements{%
  \textcolor{darkgray}{\fontsize{9}{12}\sffamily\bfseries
                       Acknowledgements\enskip}}
\date{}
\begin{document}
	
	\maketitle
	
	\begin{abstract}
	The computation of Vietoris-Rips persistence barcodes is both execution-intensive and memory-intensive. In this paper, we study its computational structure and identify several unique mathematical properties and algorithmic opportunities with connections to the GPU. Mathematically and empirically, we look into the properties of apparent pairs, which are independently identifiable persistence pairs comprising up to 99\% of persistence pairs. We prove tight upper and lower bounds of the apparent pair rate and some probabilistic lower bounds. We also design massively parallel algorithms to take advantage of the very large number of simplices that can be processed independently of each other. Having identified these opportunities, we develop a GPU-accelerated software for computing Vietoris-Rips persistence barcodes, called Ripser++. Under nice sampling conditions, we show that the expected work complexity of our algorithm is near linear in the number of simplices. The expected depth complexity is dependent only on the computation of the expected number of $p$-dimensional homological cycles. The software achieves up to 30x speedup over the total execution time of the original Ripser and also reduces CPU-memory usage by up to 2.0x. We believe our GPU-acceleration based efforts open a new chapter for the advancement of topological data analysis in the post-Moore's Law era.  
	\end{abstract}
    
\newcommand{\mycommentfont}{\color{blue}\small}

{\subjclassHeading}
\ccsdesc{Theory of computation }{Massively parallel algorithms}
\ccsdesc{Software and its engineering}{Massively parallel systems}
\ccsdesc{Theory of computation Randomness}{geometry and discrete structures}
\medskip


\Keywords{\small{Parallel Algorithms, Topological Data Analysis, Vietoris-Rips, Persistent Homology, Apparent Pairs, High Performance Computing, GPU, Random Graphs}}
\medskip

\Category{\small{Extended Version of Regular Paper presented in SoCG 2020}}
\medskip


\Supplemental{\small{Open Source Software: {https://www.github.com/simonzhang00/ripser-plusplus}}}
\medskip 

\Funding{\small{This work has been partially supported by the National Science Foundation under grants CCF-1513944, CCF-1629403, CCF-1718450, and an IBM Fellowship.}}
\medskip

\Acknowledgements{\small{We would like to thank Ulrich Bauer for technical discussions on Ripser and Greg Henselman for discussions on Eirene. We also thank Greg Henselman, Matthew Kahle, and Cheng Xin on discussions about probability and apparent pairs. We acknowledge Birkan Gokbag for his help in developing Python bindings for Ripser++. We appreciate the constructive comments and suggestions of the anonymous reviewers. Finally, we are grateful for the insights and expert judgement in many discussions with Tamal Dey.}}

    \normalsize{
	\section{Introduction}
\label{sec:introduction}
Topological data analysis (TDA) \cite{carlsson2009topology} is an emerging field in the era of big data, which has a strong mathematical foundation. As one of the core tools of TDA, persistent homology seeks to find topological or qualitative features of data (usually represented by a finite metric space). It has many applications, such as in neural networks~\cite{guss2018characterizing}, sensor networks~\cite{de2007coverage}, bioinformatics~\cite{dabaghian2012topological}, deep learning~\cite{hofer2017deep}, manifold learning~\cite{niyogi2008finding}, and neuroscience~\cite{luetgehetmann2019computing}. One of the most popular and useful topological signatures persistent homology can compute are Vietoris-Rips barcodes. There are two challenges to Vietoris-Rips barcode computation. The first one is its highly computing- and memory-intensive nature in part due to the exponentially growing number of simplices it must process. The second one is its irregular computation patterns with high dependencies such as its matrix reduction step \cite{zhang2019hypha}. 
Therefore, sequential computation is still the norm in computing persistent homology. There are several CPU-based software packages in sequential mode for computing persistent homology~\cite{bauer2014distributed, bauer2017phat,henselman2016matroid,maria2014gudhi,ripser}. Ripser ~\cite{ripser,tralie2018ripser} is a representative and computationally efficient software specifically designed to compute Vietoris-Rips barcodes, achieving state of the art performance~\cite{bauer2019ripser, otter2017roadmap} by using effective and mathematically based algorithmic optimizations.

The usage of hardware accelerators like GPU is inevitable for computation in many areas. To continue advancing the computational geometry field, we must include hardware-aware algorithmic efforts. The ending of Moore’s law~\cite{theis2017end} and the termination of Dennard scaling~\cite{dennard1974design} technically limits the performance improvement of general-purpose CPUs~\cite{esmaeilzadeh2012dark}. The computing ecosystem is rapidly evolving from conventional CPU computing to a new disruptive accelerated computing environment where hardware accelerators such as GPUs play the main roles of computation for performance improvement. 

Our goal in this work is to develop GPU-accelerated computation for Vietoris-Rips barcodes, not only significantly improving the performance, but also to lead a new direction in computing for topological data analysis. We identify two major computational components for computing Vietoris-Rips barcodes, namely filtration construction with clearing and matrix reduction. We look into the underlying mathematical and empirical properties tied to the hidden massive parallelism and data locality of these computations. Having laid mathematical foundations, we develop efficient parallel algorithms for each component, and put them together to create a computational software. Our approach is theoretically sound and, under nice sampling conditions, has a complexity that is nearly linear in expectation in the number of generated simplices. 

Our contributions presented in this paper are as follows:
\begin{enumerate}
\item  We prove theoretical bounds on the number of so-called ``apparent pairs." These apparent pairs have a natural algorithmic connection to the GPU. 
\item 	We design and implement hardware-aware massively parallel algorithms that accelerate the two major computation components of Vietoris-Rips barcodes as well as a data structure for persistence pairs for matrix reduction.
\item  We show in $d_{amb}$-dimensional Euclidean space, that under a uniform sampling condition with $k$ cavities the expected work complexity of our Algorithm is $O(kn_p\log(n_0)+n_{p+1})$ with expected depth complexity of  $O(kn_p\log(n_0)+n_0(p+1))$ for $n_0$ points, $n_p$ $p$-dimensional simplices.
\item  We perform extensive experiments justifying our algorithms' computational effectiveness as well as dissecting the nature of Vietoris-Rips barcode computation, including looking into the expected number of apparent pairs as a function of the number of points.
\item  We achieve up to 30x speedup over the original Ripser software; surprisingly, up to 2.0x CPU memory efficiency and requires, at best, 60\% of the CPU memory used by Ripser on the GPU device memory.
\item     Ripser++ is an open source software in the public domain to serve the TDA community and relevant application areas.
\end{enumerate}
}
    \normalsize{

    \section{Background}
    A metric is defined as follows:
 \begin{definition}
     A metric $d: X \times X \rightarrow \mathbb{R}^+$ on a point set $X$ is a real-valued function on pairs of points so that:
     \begin{itemize}
         \item $d(x,x)=0, \forall x \in X$ (identity distance is zero)
         \item $d(x,y)>0$ if $x \neq y \in X$ (distance between differing point is positive)
         \item $d(x,y)=d(y,x), \forall x,y \in X$ (symmetry)
         \item $d(x,z)\leq d(x,y)+d(y,z), \forall x,y,z \in X$ (triangle inequality)
     \end{itemize}
      We say a metric $d$ has \textbf{no equidistant points} if 
      \begin{equation}
          d(x,y)\neq d(z,w), \forall x,y,z,w \in X \text{ and } (x,y)\neq (z,w)
      \end{equation}
 \end{definition}
    \begin{definition}
       A \textbf{metric measure space} is the triple $(\gX,d,P)$ where $\gX$ is a point set equipped with a metric $d$ where $P$ is a probability distribution supported on $\gX$.
    \end{definition}

    
	We assume our \emph{data} is a finite set of i.i.d. samples from $P$, denoted as $X$. 
 
 Since $X$ is finite, the metric $d$ over $\gX$ can be represented by its \textbf{distance matrix $D$}:
 \begin{equation}
     D[i,j]= d \text{(point $i$, point $j$) with }D[i,i]=0
 \end{equation} 
 where the points from $X$ are assigned a unique integer from $0,...,n_0-1$.

 \begin{definition}
      With the metric $d$, we can define the aspect ratio of $X$ as 
 \begin{equation}
    a(X)_{d}\triangleq \frac{\max_{x,y \in X} d(x,y)}{\min_{x,y \in X} d(x,y)} 
 \end{equation}
 \end{definition}
	\begin{definition}
    \label{def: simplicialcomplex}
	    Define an (abstract) simplicial complex $K$ as a collection of simplices closed under the subset relation, where a simplex $\sigma$ is defined as a subset of $X$. This means that any simplex $\sigma': \sigma'\subseteq \sigma$ must also belong to $K$.

        The dimension of a simplex $\sigma$ is $\lvert\sigma \rvert -1$.
	\end{definition}
    \begin{definition}
         A filtration of simplicial complexes is a sequence of simplicial complexes ordered by the subset relationship.
    \end{definition}
	 A particularly popular and useful~\cite{aktas2019persistence} filtration is a Vietoris-Rips filtration. On the metric space, we generate simplices over an adjustable threshold $t \in \mathbb{R}$. See Figure \ref{fig: VR-barcodes} for an illustration. A subset of points $\sigma \subseteq X$ forms a simplex if they are simultaneously close enough with respect to $\tau$.  Let
	\begin{equation}
	\textsf{Rips}_t(X)= \{\emptyset \neq \sigma \subseteq X \mid \textsf{diam}(\sigma)\leq t\},
	\end{equation}
	where $t \in \mathbb{R}$ and $\textsf{diam}(\sigma)$ is the maximum 
	distance 
	between pairs of points in $\sigma$ as determined by metric $d$. 
    \begin{definition}
	The \textbf{Vietoris-Rips filtration}, or Rips filtration for short, for some $R\in \mathbb{R}$ is defined as the sequence: $(\textsf{Rips}_t(X))_{t\leq R}$, indexed by growing $0\leq t \leq R$  where $\textsf{Rips}_t(X)$ strictly increases in cardinality for growing $\tau$.  

    A \textbf{full-Rips filtration} is a Vietoris-Rips filtration where $R=\infty$.
    \end{definition}
    \begin{definition}
    The aspect ratio is defined on a Vietoris Rips filtration as follows:
    \begin{equation}
        a(\textsf{Rips}_t(X))_{d}\triangleq \frac{R}{\min_{\sigma \subseteq X}\textsf{diam}(\sigma)}
    \end{equation}
    \end{definition}
    \begin{definition}
    A \textbf{$k$-skeleton} $K^{ \leq k}$ of a simplicial complex $K$ is defined as the abstract subcomplex of $K$ where every $p$-dimensional simplex in $K^{\leq k}$ has $p\leq k$.
    \end{definition}
    \begin{example}
    The $1$-skeleton of a Rips complex $\textsf{Rips}_t(X)$ at radius $t$ in Euclidean space is a $t$-radius graph. This is the graph of all the points with edges between points that are at most distance $\tau$ apart.
    \end{example}
	\begin{figure}[h]
    \centering
    \includegraphics[width=0.9\columnwidth]{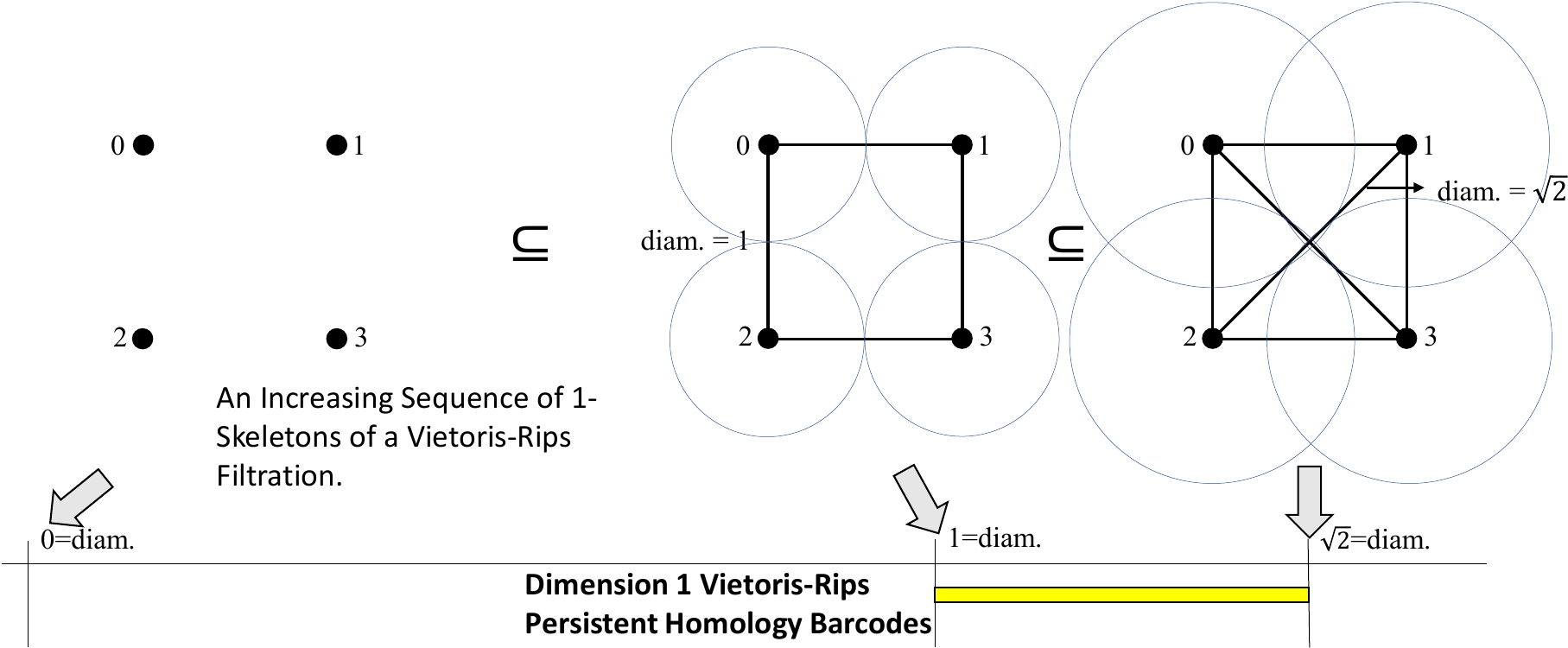}
    \caption{\small{A filtration on an example finite metric space of four points of a square in the plane. The $1$-skeleton, or simplicial complex of only points and unordered pairs of points, at each diameter value where ``creation" or ``destruction" occurs is shown. The 1 dimensional Vietoris-Rips barcode is below it: a $1$-cycle is ``created" at diameter 1 and ``destroyed" at diameter $\sqrt{2}$.}}
    \label{fig: VR-barcodes}
    \centering
    \end{figure}
    
\subsection{The Boundary Operator}
For an (abstract) simplicial complex $K$, as defined in Definition \ref{def: simplicialcomplex}, for each dimension $p, 0\leq p\leq \text{dim}(K)$ the set of all $p$-dimensional simplices freely generate a $p$-chain vector space over some field.

\begin{definition}
    Given a simplicial complex $K$, 
    a $p$-chain $\tau= \sum_{\sigma \in K: \lvert\sigma\rvert=p+1} m_{\sigma}\cdot \sigma$ with coefficients $m_{\sigma}$ is a formal sum of $p$-dimensional simplices from $K$.

    A $p$-chain vector space $(C_p(K), +)$ consists of the set of $p$-chains of $K$ with the formal sum operation $+$ over a coefficient field.
\end{definition}
The formal sum between two $p$-chains geometrically represents a union of two $p$-chains where at their intersection the coefficients add in the coefficient field.

Over the reals, if there are $n$ simplices, then a simplex $\sigma_i$ can be represented as a standard basis vector $e_i \in \mathbb{R}^n$ and a $p$-chain $\sigma = \sum_{i \in [n]} m_i \cdot \sigma_i$ can be represented by $\sum_{i \in [n]} m_i \cdot e_i \in \mathbb{R}^n$. This is since the $p$-chains form a vector space. Thus any $p$-chain can be viewed as a column vector. We denote this column vector by $[\sigma] \in \mathbb{R}^n$. When a simplex $\sigma_i$ is a summand of $p$-chain $\sigma$ then we denote this by $\sigma_i \in \sigma$.

Connectivity amongst these simplices can be defined algebraically through a boundary map. We define this formally here:
\begin{definition}\label{eq: boundary}
A \textbf{boundary map} $\partial_{p+1}^K: C_{p+1}(K) \rightarrow C_{p}(K)$ is defined by 
\begin{equation}
    \partial_{p+1}^K(\tau)\triangleq \sum_{i=1}^{p+1} (-1)^i\cdot [{\tau}_{-i}], \forall \tau=\sum_{\sigma \in K : \sigma=\{v_0,...,v_{p+1}\}}m_{\sigma}\cdot \sigma  \in C_{p+1}(K)
\end{equation}
where ${\tau}_{-i}= \sum_{\sigma \in K : \sigma=\{v_0,...,v_{p+1}\}}m_{\sigma}\cdot (\sigma \setminus \{v_i\})$
\end{definition}
The formal sum from Equation \ref{eq: boundary} is called the $p$-dimensional \textbf{boundary} of the $p+1$-dimensional boundary. 

When it is clear what the simplicial complex $K$ is for a boundary map $\partial_{p+1}^K$, we often omit the $K$ when denoting the boundary map: $\partial_{p+1}$. We can also denote the $p$-dimensional boundaries with the simplicial complex $K$ as follows: 
\begin{equation}
    \partial_{p+1}(\sigma), \forall \sigma \in K, \lvert \sigma \rvert =p+2
\end{equation}

This boundary map has a matrix representation, which is the concatenation of all the column representations of $\partial_{p+1}(\sigma), \sigma \in K, \lvert \sigma\rvert =p+2$. These column representations are denoted $[\partial_{p+1}(\sigma)]$, and their column-wise concatenation is called the $p+1$-boundary matrix. Their concatenation into a matrix is denoted $[\partial_{p+1}]$. If we concatenate all the $p$-boundary matrices $[\partial_{p}]$ for $0\leq p\leq \text{dim}(K)$, we form the boundary matrix 
\begin{equation}
    [\partial]\triangleq [\partial_{0}\text{ }  \| \cdots \| \text{ }  \partial_{\text{dim}(K)} ]
\end{equation}
\subsection{Homology}
Given an abstract simplicial complex $K$, at each dimension $p: 0\leq p\leq \text{dim}(K)$, we can define a $p$-dimensional cycle space, or space of $p$-dimensional disconnections as:
    $\textsf{ker}(\partial_p)$. The $p$-dimensional boundary space, or space of $p$ dimensional connectivity: $\textsf{im}(\partial_p)$.
    
Homology can be defined by the algebraic measurement of the  disconnections in $p$ dimensions up to connectivity from $p+1$ dimensions. This is formally defined by the following quotient vector space over a coefficient field at dimension $p$:
\begin{equation}
    H_{p}(K)\triangleq \frac{\textsf{ker}(\partial_p)}{\textsf{im}(\partial_{p+1})}
\end{equation}

	\subsection{The Simplex-wise Refinement of the Vietoris-Rips Filtration}
	\label{sec: preliminaries-simplexwise}
	For computation (see Section \ref{sec: preliminaries-computation}) of Vietoris-Rips persistence barcodes, which is a matching of simplices, it is necessary to construct a simplex-wise refinement $S$ 
	of a given filtration $F$. This means that we form a total ordering of the simplicial complexes so that the filtration can be indexed by a single scalar from $\mathbb{R}$. The simplices within each simplicial complex, however, may not have an ordering themselves. Thus, a filtration $F$ is only equivalent to a \textbf{partial order} on the simplices of $K$, where $K$ is the largest simplicial complex of $F$. 
    We would suspect that it is possible to  construct $S$ since the simplex is the smallest unit of connectivity in the Vietoris-Rips complex. 
    
    To construct $S$, we assign a total order on the simplices $\{s_i\}_{i=1,...,\lvert K\rvert}$ of $K$, extending the partial order induced by $F$ so that the increasing sequence of subcomplexes $S=(K_j \triangleq \bigcup_{i \leq j}{\{s_i\}})_{j=1,...,\lvert K \rvert}$ ordered by inclusion grows subcomplexes by one simplex at a time. In category-theoretic terms, this finite construction exists  because the subcategory of simplicial complexes with objects consisting of the subcomplexes $K_i$ from $S$ and the morphisms consisting of inclusion maps from $K_i$ to $K_j$ for $i<j$ form a \textbf{simple acyclic} category~\cite{zhang2025computing}.  
    
    This means that the subcategory exhibits: 
    \begin{enumerate}
        \item (Acylicity~\cite{kozlov2008combinatorial}) The objects can be topologically sorted into a total order so that:
        \begin{enumerate}
            \item The only endomorphism on an object is the identity map and
            \item Only the identity map has an inverse.
        \end{enumerate} 
        \item (Simplicity) Every nontrivial inclusion is a composition of the \emph{unit inclusion maps} which unions a single simplex to a subcomplex.
    \end{enumerate}
    For a Vietoris-Rips complex, there are many ways to order a simplex-wise refinement $S$ of $F$
	, see e.g. ~\cite{luetgehetmann2019computing}; in the case of Ripser and Ripser++, we use the following \textbf{simplex-wise filtration} ordering criterion on simplices:
	
	\label{sec: filtration-order}
	\begin{enumerate}
	\item by increasing diameter: denoted by $\textsf{diam}(\sigma)$,
	\item by increasing dimension: denoted by $\textsf{dim}(\sigma)$, and
	\item by decreasing combinatorial index: denoted by $\textsf{cidx}(\sigma)$ (equivalently, by decreasing lexicographical order on the decreasing sequence of vertex indices)~\cite{siddique2016proof,knuth1997art,pascal1887sopra}.
	\end{enumerate}
	Every simplex in the simplex-wise refinement will correspond to a ``column" in a uniquely associated (co-)boundary matrix for persistence computation. Thus we will use the terms ``column" and ``simplex" interchangeably to explain our algorithms. 

    \subsection{Persistence Pairs}
	Define homological \emph{persistence pairs} over a Vietoris-Rips filtration as a pair of ``creation" and ``destruction" simplices for the creation and destruction of homological rank, or Betti number, of each $\textsf{Rips}_t(X)$ from a Vietoris Rips filtration
    \begin{equation}
         \emptyset\subseteq \cdots \subseteq \textsf{Rips}_t(X)\subseteq \cdots \subseteq \text{Rips}_R(X)
    \end{equation} as $t$ varies, see \cite{edelsbrunner2000topological}. These are the unique pairs of simplices $(\sigma_s,\tau_t), s<t$ occurring during the creation and destruction times for a simplex-wise refinement of the Vietoris-Rips filtration. They are the solution to the (primal) matrix reduction problem, an online linear programming problem, as formulated in the thesis \cite{zhang2025computing}.
	\begin{remark}
	We will show that the combinatorial index maximizes the number of ``apparent pairs" when breaking ties under the conditions of Theorem \ref{theorem: apparent bounds}.
	\end{remark} 
	\subsection{The Combinatorial Number System}
	\label{sec: combinatorial_number_system}
	We use the combinatorial number system to encode simplices. The combinatorial number system is simply a bijection between ordered fixed-length $\mathbb{N}$-tuples and $\mathbb{N}$. It provides a minimal representation of simplices and an easy extraction of simplex facets (see Algorithm \ref{alg:facets}), cofacets (see Algorithm \ref{alg:cofacets_sparse}), and vertices. When not mentioned, we assume that all simplices are encoded by their combinatorial index. 
	The bijection is stated as follows:
	\begin{equation}
	\mathbb{N}^{p+1}\ni(v_p,...,v_0) \iff {v_p \choose p+1} + ... + {v_0 \choose 1} \in \mathbb{N}, v_p>v_{p-1}>...>v_0 \geq 0.
	\end{equation}
	For a proof of this bijection see~\cite{siddique2016proof,knuth1997art,pascal1887sopra}.
	\begin{remark}
	It should be noted that, without mentioning, we will use the notation $(v_p,...,v_0)$ with $v_p>v_{p-1}>...>v_0 \geq 0$ for simplices and a lexicographic ordering on the simplices by \emph{decreasing} sorted vertex indices. One may equivalently view the lexicographic ordering on simplices $(v_0,...,v_p)$ with $0 \leq v_0<v_1<...<v_p$ as being in colexicographic order on the given increasingly ordered vertices. 
	\end{remark}

	\section{Computation}
	\label{sec: preliminaries-computation}
	The general computation of persistent barcodes involves two inter-relatable stages. One stage is to construct a simplex-wise refinement \cite{bauer2017phat} of the given filtration. The other stage is to ``reduce" the corresponding boundary matrix by a ``standard algorithm"~\cite{edelsbrunner2010computational}. 
	In Algorithm \ref{alg:standard-algorithm}, let $\textsf{low}_R(j)$ be the maximum nonzero row of column $j$, $-1$ if column $j$ is zero for a given matrix $R$. The pairs $(\textsf{low}_R(j),j)$ over all $j$ correspond bijectively to \emph{persistence pairs}.
	
	\begin{algorithm}[H]
    \SetKwComment{Comment}{//}{}
    
    \caption{Standard Persistent Homology Computation}
    \label{alg:standard-algorithm}
    
    \KwIn{filtered simplicial complex $\pmb{K}$}
    \KwOut{$\pmb{P}$ persistence barcodes}

    $\pmb{F} \gets \pmb{F}_{\pmb{K}}$
    \Comment{Let $\pmb{F}$ be the filtration of $\pmb{K}$}
    $\pmb{S} \gets \text{simplex-wise-refinement}(\pmb{F})$ 
    \Comment{$\pmb{F} = \pmb{S} \circ r$ where $r$ is injective}
    $R \gets \partial^{\pmb{S}}$ \;
    
    \ForEach{column $j$ in $R$}{
        \While{$\exists k < j$ \text{such that} $\textsf{low}_{R}(j) = \textsf{low}_{R}(k)$}{
            column $j \gets$ column $k$ + column $j$ \\
        \If{$\textsf{low}_{R}(j) \neq -1$}{
            $\pmb{P} \gets \pmb{P} \cup r^{-1}([\textsf{low}_{R}(j), j))$ 
            \Comment{We call the pair $(\textsf{low}_{R}(j), j)$ a pivot in the matrix $R$.} 
        }
    }
    }
    \end{algorithm}

	The construction stage can be optimized \cite{bauer2019ripser, hylton2017performance, maria2014gudhi, gudhi:urm, zomorodian2010fast}. Furthermore, all existing persistent homology software efforts are based on the standard algorithm\cite{adams2011javaplex, bauer2019ripser, bauer2014distributed, bauer2017phat, henselman2016eirene, maria2014gudhi, Morozov, zhang2019hypha}.
	
	\subsection{The Coboundary Matrix}
	\label{sec: coboundary_matrix}
    A coboundary matrix consists of coboundaries (each column is made up of the cofacets of the corresponding simplex) where the columns/simplices are ordered in reverse to the order given in Section \ref{sec: filtration-order} (see ~\cite{de2011dualities}). 
    
    In  Proposition 4.3.5 of \cite{zhang2025computing}, we show that it is possible to compute the standard matrix reduction of Algorithm \ref{alg:standard-algorithm} on the anti-transposed boundary matrix, or coboundary matrix, $[\partial]^{T}_{anti}$. We call this ``computing cohomology" since for every $p\geq 0$ the standard algorithm is iterating through $p$-cochains from the dual of the $p$-chain vector space $C_p(\textsf{Rips}(X))$.
    
	We compute cohomology~\cite{de2011persistent, de2011dualities, dey2014computing} in Ripser++, like in Ripser, for performance reasons specific to Rips filtrations mentioned in~\cite{bauer2019ripser} which will be reviewed in Section \ref{sec: ripser-original}. The memory allocation needed to represent a sparse column-store matrix is lowered when using a coboundary matrix instead of a boundary matrix. This is because there are at most ${n_0 \choose p+2}$ number of $(p+1)$-dimensional simplices (sparsely represented rows) and only ${n_0 \choose p+1}$ $p$-dimensional simplices (columns). 
	Excessive memory allocation can furthermore become a bottleneck to total execution time.
    
    If certain columns can be zeroed/cleared \cite{chen2011persistent} in the coboundary matrix, we will still denote the cleared matrix as a coboundary matrix since the matrix reduction does nothing on zero columns (see Algorithm \ref{alg:standard-algorithm}).
	
	\begin{figure}[!h]
    \includegraphics[width=1.0\columnwidth]{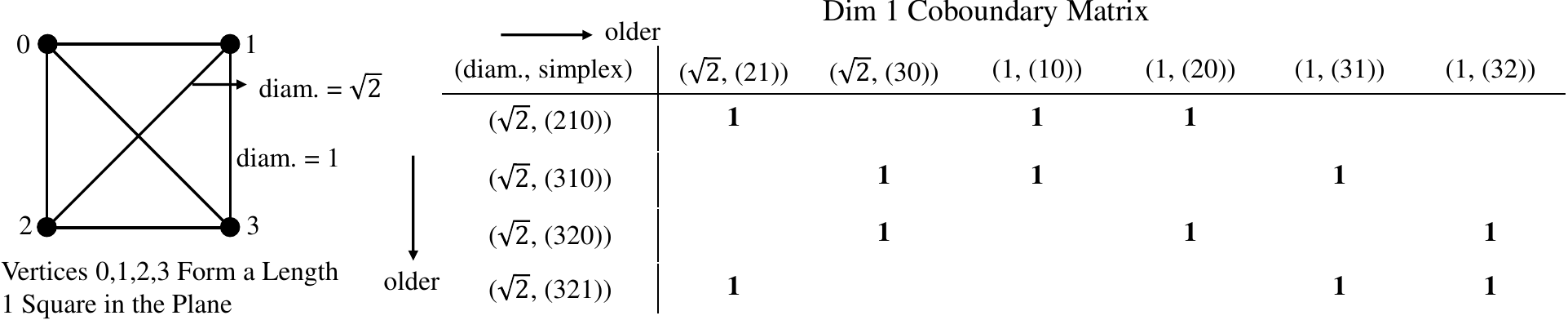}
    \caption{\small{The full 1-skeleton for the point cloud of Figure \ref{fig: VR-barcodes}. Its 1-dimensional coboundary matrix is shown on the right. Let $(e,(a_p,...,a_0))$ be a $p$-dimensional simplex with vertices $a_p$,...,$a_0$ such that $v_p>v_{p-1}>...>v_0 \geq 0$ and diameter $e \in \mathbb{R}^+$. For example, simplex $(1,(10))$ has vertices 1 and 0 with diameter 1. 
    The order of the columns/simplices is the reverse of the simplex-wise refinement of the Vietoris-Rips filtration.}}
    \label{fig: coboundarymatrix}
    \centering
    \end{figure}

 \subsection{Computational Structure in Ripser, a Review}
	\label{sec: ripser-original}
	The sequential computation in Ripser follows the two stages given in Algorithm \ref{alg:standard-algorithm}, with the two stages computed for each dimension's persistence in increasing order. This computation is further optimized with four key ingredients \cite{bauer2019ripser}. We use and build on top of all of these four optimizations for high performance, which are: 
	
	\begin{itemize}
	\item 1. The clearing lemma \cite{chen2011persistent} \cite{bauer2014clear} \cite{zhang2019hypha}, 
	\item 2. Computing cohomology \cite{de2011dualities} \cite{dey2014computing} \cite{zhang2025computing},  
	\item 3. Implicit matrix reduction \cite{bauer2019ripser}, and
	\item 4. The emergent pairs lemma \cite{bauer2019ripser}.
	\end{itemize}

	In this section we review these four optimizations as well as the enclosing radius condition. Our contributions follow in the next section, Section \ref{sec: contributions}. The reader may skip this section if they are very familiar with the concepts from \cite{bauer2019ripser}.
	\subsection{Clearing Lemma}
	\label{sec: clearing-lemma}
	As shown in \cite{carlsson2019persistent}, there is a partition of all simplices into ``creation" and ``destruction" simplices. The introduction of a ``creation" simplex in a simplex-wise refinement of the Rips filtration creates a homology class. On the other hand, ``destruction" simplices zero a homology class or merge two homology classes upon introduction into the simplex-wise filtration. Destruction simplices are of exactly one higher dimension than their corresponding creation simplex. 
	
	Paired creation and destruction simplices are represented by a pair of columns in the boundary matrix. For a boundary matrix, the clearing lemma states that paired creation columns must be zero after reduction~\cite{chen2011persistent, bauer2014clear}. Furthermore, destruction columns are nonzero when fully reduced and their lowest nonzero entry after reduction by the standard algorithm in Algorithm \ref{alg:standard-algorithm}, determines its corresponding paired creation column.
	
	This lemma optimizes the matrix reduction stage of computation and is most effective when it is used before any column additions. For example, when ``computing cohomology" with the clearing lemma, a significant number of paired creator columns can be eliminated due to the application of clearing to the top $p$-dimensional simplices, which dominate the total number of simplices. 
    
    The clearing lemma is widely used in all persistent homology computation software packages to lower the computation time of matrix reduction. As shown in \cite{zhang2019hypha}, the smallest set of columns taking up 50\% of all column additions in Algorithm \ref{alg:standard-algorithm}, also known as ``tail columns" are comprised in a large percentage by columns that can be zeroed by the clearing lemma.  

	\subsection{Zero-Dimensional Persistence Computation: Kruskal's Algorithm}
	\label{sec: 0-persistence}
        The $1$-skeleton of the Rips complex at dimension $1$ is an undirected graph. Viewing the simplex-wise filtration of $1$-skeleton Rips subcomplexes, the edges are introduced in order of distance. The edges that form cycles are viewed as $1$-dimensional creation events and are called complementary edges. The remaining edges destroy zero-dimensional creation events (the points). This allows us to compute zero-dimensional persistent homology. It also simulates Kruskal's algorithm in exactly the same way. Thus, the edges which introduce a destruction of a zero-dimensional creation event form a geometric minimum spanning tree (GMST), and the complementary edges form a $1$-skeleton. 
        
    Zero-dimensional persistence can be computed by a union-find algorithm in Ripser. This algorithm has complexity of  $O(\alpha(n_0) \cdot n_0^2)$, where $n_0$ is the number of points and $\alpha$ is the inverse of the Ackermann's function (essentially a constant). There is no known algorithm that can achieve this low complexity for computing persistence in higher dimensions. This is another reason for computing cohomology. In computing cohomology, clearing is applied from lower dimension to higher dimension (clearing out columns in the higher dimension), the zeroth dimension has no cleared simplices and there are very few $p$-dimensional simplices compared to $(p+1)$-dimension simplicies. Thus the zeroth dimension should be computed as efficiently as possible without clearing. 
    Since it is more efficient to keep the union-find algorithm on CPU, we focus only on dimension $\geq 1$ persistence computation in this paper and in Ripser++. GPU can offer speedup, especially in the top dimension, in this case. 
	\subsection{Implicit Matrix Reduction}
	In Ripser, the coboundary matrix is not fully represented in memory. Instead, the columns or simplices are represented by natural numbers via the combinatorial number system~\cite{knuth1997art, bauer2019ripser, pascal1887sopra} and cofacets are generated as needed and represented by combinatorial indices. This saves on memory allocation along the row-dimension of the coboundary matrix, which scales by a multiplicative factor of $O(n_0)$ more in cardinality than the column-dimension. Furthermore, the generation of cofacets allows us to trade computation for memory. Memory address accesses are replaced by arithmetic and repeated accesses of the much smaller distance matrix and a small binomial coefficient table. Implicit matrix reduction intertwines coboundary matrix construction with matrix reduction.
	
	\subsection{Reduction Matrix vs. Oblivious Matrix Reduction }
	There are two matrix reduction techniques in Ripser, see Algorithm \ref{alg:standard-algorithm}, that can work on top of the implicit matrix reduction. These techniques are applied on a much smaller submatrix of the original matrix in Ripser++ significantly improving performance over full matrix reduction, see Section \ref{sec: performance-optimization}. 
	
	The first is called the reduction matrix technique. This involves storing the column operations on a particular column in a $V$ reduction matrix (see a variant in \cite{boissonnat2013compressed}) by executing the same column operations on $R$ as on the initially identity matrix $V$; $R=\partial \cdot V$ where $\partial$ is the (implicitly represented) boundary operator and where $R$ is the matrix reduction of $\partial$. To obtain a fully reduced column $R_j$ as in Algorithm \ref{alg:standard-algorithm}, the nonzeros $V_{i,j}$ of a column of $V_j$ identify the sum of boundaries $\partial_i$ needed to obtain column $R_j= \Sigma_i \partial_i \cdot V_{i,j}$.
	
	The second is called the oblivious matrix reduction technique. This involves not caching any previous computation with the $R$ or $V$ matrix, see Algorithm \ref{alg:oblivious} for the reduction of a single column $j$. Instead, only the boundary matrix pivot indices are stored. A \emph{pivot} is a row column pair $(r,c)$, being the lowest nonzero entry of a fully reduced nonzero column $c$ and representing a persistence pair. 
	
	In the following, we use the notation $R_j$ to denote a column reduced by the standard algorithm (Algorithm \ref{alg:standard-algorithm}) and $R[j]$ to denote a partially obliviously reduced column. $D_j$ denotes the $j$th column of the boundary matrix.
	
    \begin{algorithm}[H]
    \SetKwComment{Comment}{//}{}
    \caption{Oblivious Column Reduction}
    \label{alg:oblivious}
    
    \KwIn{$j$: column to reduce index, $D=\partial$: boundary matrix, $\textsf{lookup}[$rows $0..j-1]$: lookup table with $\textsf{lookup}[row]= col$ if $(row,col)$ is a pivot, -1 otherwise; $\textsf{low}(j)$: the maximum row index of any nonzero entry in column $j$, -1 if the column $j$ is $\textbf{0}$.}
    \KwOut{$R[j]$ is fully reduced by oblivious reduction and is equivalent to a fully reduced $R_j$ as in Algorithm \ref{alg:standard-algorithm}.}

    \Comment{Assume columns of index $0$ to $j-1$ have all been reduced by the oblivious column reduction algorithm}
    $R[j] \gets D_j$ \;
    \While{$\textsf{lookup}[\textsf{low}(R[j])] \neq -1$}{
        $R[j] \gets R[j] + D_{\textsf{lookup}[\textsf{low}(R[j])]}$ \;
    \If{$R[j] \neq \textbf{0}$}{
        $\textsf{lookup}[\textsf{low}(R[j])] \gets j$ \;
    }
    }
\end{algorithm}

    The reduction matrix technique is correct by the fact that it (re)computes $R_j$ = $\Sigma_i D_i \cdot V_{i,j}$ as needed before adding it with $R_k$ for $k>j$. Thus it involves the same
    sequence  $(R_{j})_{j}$ of columns to do column additions with $R_k$ as in the standard algorithm in Algorithm \ref{alg:standard-algorithm}.
    
    \begin{lemma}
    \label{lemma: oblivious}
    Algorithm 2 (oblivious column reduction from Ripser) is equivalent to a reduction of column $j$ as in Algorithm \ref{alg:standard-algorithm}, namely $R[j] \gets R[j]+R_i$ where $i= \textsf{lookup}[\textsf{low}(R[j])]$.
    \end{lemma}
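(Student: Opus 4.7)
The plan is to prove the lemma in three steps: a coset invariant, a termination argument by a potential function, and a uniqueness argument for the low. The main obstacle will be termination, since, unlike in the standard algorithm, the low of $R[j]$ is not monotonic under an oblivious update.

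First, I would record the invariant that every iteration of Algorithm 2 keeps $R[j]$ in the affine subspace $D_j + \mathrm{span}\{D_l : l < j\}$, because each update adds some $D_i$ with $i < j$. Inductively assuming columns $0, \dots, j-1$ have been correctly reduced, the sets $\{D_l : l \le j\}$ and $\{R_l : l \le j\}$ span the same subspace, so $R[j]$ admits a decomposition $R[j] = R_j + \sum_{l < j,\, R_l \neq 0} \beta_l R_l$ in the reduced basis, with the coefficient of $R_j$ always equal to $1$ (no added $D_i$ with $i < j$ contributes to the $R_j$ component).

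For termination I would introduce the potential $\phi(R[j]) := \sum_{l < j,\, R_l \neq 0,\, \beta_l = 1} 2^l$. The crucial observation is that, whenever the loop body executes, $i = lookup[low(R[j])]$ equals the unique active index $l^*$ maximizing $low(R_{l^*})$, because reduced columns have pairwise distinct lows and hence exactly one active $R_l$ contributes to the top row of $R[j]$. Writing $D_{l^*} = R_{l^*} + \sum_{m \in M} R_m$ with $M \subseteq \{m < l^* : R_m \neq 0\}$ (the trace of the standard reduction of $D_{l^*}$), the oblivious update flips $\beta_{l^*}$ from $1$ to $0$ and toggles $\beta_m$ for $m \in M$; since every toggled index satisfies $m < l^*$, the change in $\phi$ is bounded above by $-2^{l^*} + \sum_{m < l^*} 2^m = -1 < 0$. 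Hence $\phi$ strictly decreases at each iteration, giving termination in finitely many steps.

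Upon termination, $lookup[low(R[j])] = -1$, i.e., $low(R[j])$ is not a pivot row of any reduced earlier column. Since both $R[j]$ and the standard $R_j$ lie in the coset $D_j + \mathrm{span}\{R_l : l < j\}$ and both have low avoiding the set $\{low(R_l) : l < j,\, R_l \neq 0\}$ of earlier pivot rows, their difference lies in $\mathrm{span}\{R_l : l < j,\, R_l \neq 0\}$, whose nonzero vectors have low in that very pivot-row set; comparing entries at the maximum contributing row therefore forces $low(R[j]) = low(R_j)$. Hence Algorithm 2 produces the same persistence pair $(low(R[j]), j)$ as the standard step $R[j] \gets R[j] + R_i$ from Algorithm \ref{alg:standard-algorithm}, proving the claimed equivalence.
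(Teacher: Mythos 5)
Your proof is correct, but it takes a genuinely different route from the paper's. The paper argues by induction on columns with a local ``block recovery'' (replay) argument: when the working column $R[k]$ has $low(R[k])=low(R_j)$ and $D_j$ is added, all entries strictly below row $low(R_j)$ become exactly the corresponding bottom segment of $D_j$, so the oblivious loop then replays, addition by addition, the very sequence of boundary columns used to reduce column $j$ (pivot rows are never reassigned, so the lookups resolve identically); hence each maximal run of oblivious additions nets exactly $+R_j$, which yields the phase-by-phase equivalence stated in the lemma, forces the terminal oblivious column to equal the standard $R_j$ literally, and inherits termination from the standard algorithm. You instead work globally over $\mathbb{F}_2$: the coset invariant with the unique decomposition $R[j]=R_j+\sum_l \beta_l R_l$ over the reduced basis (distinct lows give linear independence of the nonzero $R_l$), the identification of $lookup[low(R[j])]$ with the unique active index of maximal low, the binary-counter potential $\sum_{\beta_l=1}2^l$ for termination, and a distinct-lows comparison at termination. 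What your route buys is an explicit termination proof --- the paper never addresses termination, and your observation that the low of $R[j]$ is not monotone under a single oblivious update (since $low(D_{l^*})$ can exceed $low(R_{l^*})$) is exactly why termination deserves an argument --- together with a clean, replay-free derivation of the pivot equality. What it gives up: you conclude only $low(R[j])=low(R_j)$, leaving open that the terminal column may differ from $R_j$ by reduced columns with strictly smaller lows, whereas the paper's replay shows the columns agree exactly; but since the oblivious algorithm consults only the pivot table and the original columns $D_i$, pivot agreement is precisely what is needed to close your induction and to certify that the persistence pairs coincide, so the lemma as it is used is fully established. Two small points worth making explicit in your write-up: when the loop body executes, the maximal contributing low cannot be $low(R_j)$ because that row is never a stored pivot (this is what licenses your identification of $i$ with $l^*$), and the degenerate case $R_j=0$, where your terminal uniqueness argument does force $R[j]=0$ under the convention $low(0)=-1$.
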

    
    Our proof of Lemma \ref{lemma: oblivious} is in the Appendix Section \ref{sec: appendix-oblivious-proof}.
    
	
	The reduction matrix technique can lower the column additions (addition of $D_i$ to $R[j]$) needed to reduce any particular column $j$ since after many column additions, many of the nonzeros of $V_j$ will cancel out. 
    This is in contrast to oblivious matrix reduction where there cannot be any cancellation of column additions. Experiments show that datasets with large number of column additions are executed more efficiently with the reduction matrix technique rather than the oblivious matrix reduction technique. In fact, certain difficult datasets will crash on CPU due to too much memory allocation for columns from the oblivious matrix reduction technique but execute correctly in hours by the reduction matrix technique.
	
	\subsection{The Emergent Pairs Lemma}
	As we generate column cofacets during matrix reduction, we may ``skip over" their construction if we can determine that they are ``0-addition columns" or have an ``emergent pair" \cite{zhang2019hypha, bauer2019ripser}. These columns have no column to their left that can add with them. The lemma is stated in \cite{bauer2019ripser} and involves a sufficient condition to find a ``lowest nonzero" or maximal indexed nonzero in a column followed by a check for any columns to its left that can add with it. These nonzero entries correspond to ``shortcut pairs" that form a subset of all persistence pairs. We may pair implicit matrix reduction with the emergent pairs lemma to achieve speedup over explicit matrix reduction techniques~\cite{bauer2017phat, zhang2019hypha}. However, apparent pairs (see Figure \ref{fig:apparentpairs}), when processed massively in parallel by GPU are even more effective for computation than processing the sequentially dependent shortcut pairs (see Figure \ref{fig: apparentvsshortcut}). 
	
	\subsection{Filtering Out Simplices by Diameter Equal to the ``Enclosing Radius"}
	\label{sec: appendix-enclosing-radius}
	We define the enclosing radius $R$ as $\min_{x \in X} \max_{y \in X} d(x,y)$, where $p$ is the underlying metric of our finite metric space $X$. If we compute Vietoris-Rips barcodes up to diameter $\leq$ $R$, then the nonzero persistence pairs will not change after the threshold condition is applied \cite{henselman2016matroid}. Notice that applying the threshold condition is equivalent to truncating the coboundary matrix to a lower right block submatrix, potentially significantly lowering the size of the coboundary matrix to consider. We prove the following claim used in \cite{henselman2016eirene}.
	
	\begin{proposition}
	Computing persistence barcodes for full Rips filtrations with diameter threshold set to the enclosing radius $R$ does not change the nonzero persistence pairs.   
	\end{proposition}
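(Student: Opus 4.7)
The plan is to reduce the claim to the contractibility of $Rips_t(X)$ for every $t \geq R$, and then to invoke the interval-decomposition structure of persistence modules. First I would fix a point $x^\star \in X$ attaining the outer minimum in $R = \min_{x \in X}\max_{y \in X} d(x,y)$, so that $d(x^\star, y) \leq R$ for every $y \in X$. Then, for any simplex $\sigma \in Rips_t(X)$ with $t \geq R$, I would show that $\sigma \cup \{x^\star\} \in Rips_t(X)$ as well, via
\[
diam(\sigma \cup \{x^\star\}) = \max\bigl(diam(\sigma),\ \max_{v \in \sigma} d(x^\star, v)\bigr) \leq \max(t, R) = t.
\]
This exhibits $Rips_t(X)$ as a simplicial cone with apex $x^\star$, hence contractible for all $t \geq R$.

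With contractibility in hand, I would appeal to the interval-decomposition theorem for the persistence module $t \mapsto H_*(Rips_t(X))$: every interval summand whose support contains $R$ must contribute a nonzero element to $H_*(Rips_R(X))$. Since $H_i(Rips_R(X)) = 0$ for $i \geq 1$, no bar in positive degree can contain $R$ in its interior, so any finite bar with birth value $\leq R$ automatically has death value $\leq R$. In degree zero, the only bar surviving past $R$ is the essential class corresponding to the unique connected component of $Rips_R(X)$, which is not a nonzero (finite) persistence pair.

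Finally, I would observe that thresholding at diameter $R$ simply discards simplices of diameter strictly greater than $R$, so that the truncated filtration coincides with $(Rips_t(X))_{t \leq R}$. By the preceding paragraph, every nonzero persistence pair of the untruncated Rips filtration lies entirely in $[0, R] \times [0, R]$, so each such pair is reproduced verbatim and no finite pair is spuriously converted into an essential one. The main conceptual step is the passage from cone-contractibility of $Rips_R(X)$ to the vanishing of bars crossing $R$; once the cone structure is identified, the remainder is a direct application of standard persistence theory, and the cofacet-diameter inequality used to build the cone is elementary.
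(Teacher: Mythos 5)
Your proposal is correct and follows essentially the same route as the paper's proof: the paper likewise observes that at scale $R$ every point has an edge to an apex point $p$, making $Rips_t(X)$ a cone (so every positive-dimensional cycle is a boundary), and deduces that no nonzero bar can cross or be born after $R$. Your write-up is simply a more rigorous rendering of that argument, making explicit the cofacet-diameter inequality, the contractibility for all $t \geq R$, the interval-decomposition step, and the degree-zero essential class, all of which the paper leaves implicit.
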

	
	\begin{proof}
	Notice that when we reach the ``enclosing radius" $R$ length in the filtration, every point has an edge to one apex point $p \in  X$. This means that any $p$-dimensional cycle must actually be a boundary. Thus the following two statements are true. 1. Any persistence interval $[creation,destruction)$ with $creation \leq R$ must have $destruction \leq R$. 2. Since no cycles that are not boundaries can form after $R$, there will be no nonzero persistence barcodes $[creation,destruction)$ with $creation>R$.
	
	By statements 1 and 2, we can restrict all simplices to have diameter $\leq$ $R$ and this does not change any of the nonzero persistence intervals of the full Rips filtration.
	\end{proof}

\section{Mathematical and Algorithmic Foundations for GPU Acceleration}
\label{sec: contributions}
	\subsection{Overview of GPU-Accelerated Computation}
	Figure \ref{fig:framework-ripserpp}(a) shows a high-level structure of Ripser, 
	which processes simplices dimension by dimension. In each dimension starting at dimension 1, the filtration is constructed and the clearing lemma is applied followed by a sort operation. 
	The simplices to reduce are further processed in the matrix reduction stage, where the cofacets of each simplex are enumerated to form coboundaries 
	and the column addition is applied iteratively. 
	The matrix reduction is highly dependent among columns. 
	
	
	\begin{figure}[h]
		\includegraphics[width=1.0\textwidth]{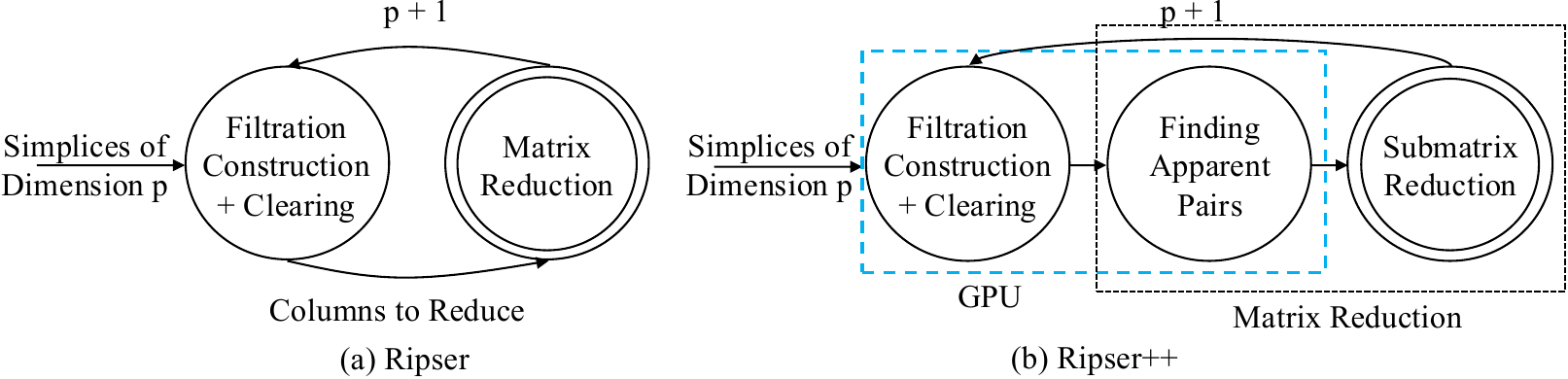}
		\caption{\small{A High-level computation framework comparison of Ripser and Ripser++ starting at dimension $p\geq 1$ (see Section \ref{sec: 0-persistence} for $p=0$). Ripser follows the two stage standard persistence computation of sequential Algorithm \ref{alg:standard-algorithm} with optimizations. In contrast, Ripser++ finds the hidden parallelism in the computation of Vietoris-Rips persistence barcodes, extracts ``Finding Apparent Pairs'' out from Matrix Reduction, and parallelizes ``Filtration Construction with Clearing'' on GPU. These two steps are designed and implemented with new parallel algorithms on GPU, as shown in the Figure \ref{fig:framework-ripserpp}(b) with the dashed rectangle.}}
		\label{fig:framework-ripserpp}
		\centering
		
	\end{figure}
	
    Running Ripser intensively on many datasets, we have observed its inefficiency on CPU. There are two major performance issues. First, in each dimension, the matrix reduction of Ripser uses an enumeration-and-column-addition style to process each simplex. Although the computation is highly dependent among columns, a large percentage of columns (see Table \ref{tab:all-pairs} in Section \ref{sec: experiments}) do NOT need the column addition. Only the cofacet enumeration and a possible persistence pair insertion (into the hashmap of Ripser) are needed on these columns. 
	In Ripser, a subset of these columns are identified by the ``emergent pair" lemma \cite{bauer2019ripser} as columns containing ``shortcut pairs." Ripser follows the sequential framework of Figure \ref{fig:framework-ripserpp}(a) to process these columns one by one, where rich parallelisms, stemming from a large percentage of ``apparent pairs", are hidden. 
	Second, in the filtration construction with clearing stage, applying the clearing lemma and a predefined threshold is independent among simplices. 
	Furthermore, one of the most time consuming part of filtration construction with clearing is sorting. A highly optimized sorting implementation on CPU could be one order of magnitude faster than sorting from the C++ standard library \cite{hou2018framework}. On GPU, the performance of sorting algorithms can be further improved due to the massive parallelism and the high memory bandwidth of GPU \cite{Satish:2009:DES:1586640.1587667, Sintorn:2008:FPG:1412749.1412828}.
	
	In our hardware-aware algorithm design and software implementation, we aim to turn these hidden parallelisms and data localities into reality for accelerated computation by GPU for high performance. Utilizing SIMT (single instruction, multiple threads) parallelism and achieving coalesced device memory accesses are our major intentions because they are unique advantages of GPU architecture. Our efforts are based on mathematical foundation, algorithms development, and effective implementations interacting with GPU hardware, which we will explain in this and the following section. Figure \ref{fig:framework-ripserpp}(b) gives a high-level structure of Ripser++, showing the components of Vietoris-Rips barcode computation offloaded to GPU. This parallelization technique is form of kernalization~\cite{bodlaender2009problems}. We will elaborate on our mathematical and algorithmic contributions in this section.
	
	
\subsection{Matrix Reduction}
\label{sec:matrix_reduction}

    Matrix reduction (see Algorithm \ref{alg:standard-algorithm}) is a fundamental component of computing Rips barcodes. Its computation can be highly skewed \cite{zhang2019hypha}, particularly involving very few columns for column additions. Part of the reason for the skewness is the existence of a large number of ``apparent pairs." We prove and present the Apparent Pairs Lemma and a GPU algorithm to find apparent pairs in an implicitly represented coboundary matrix. We then design and implement a 2-layer data structure that optimizes the performance of the hashmap storing persistence pairs for subsequent matrix reduction on the non-apparent columns, which we term ``submatrix reduction.'' The design of the 2-layer data structure is centered around the existence of the large number of apparent pairs and their usage patterns during submatrix reduction.
\subsection{The Apparent Pairs Lemma}
\begin{figure}[h]
\centering
		\includegraphics[width=0.9\columnwidth]{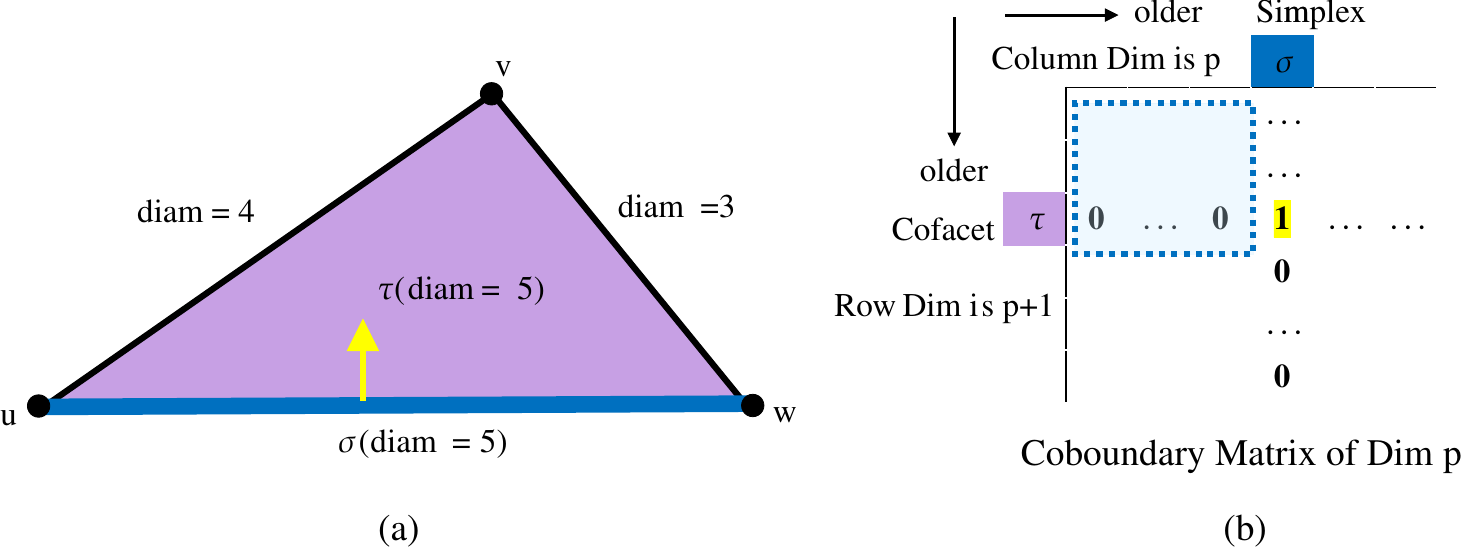}
		\caption{ \small{(a) A dimension $1$ 0-persistence apparent pair $(\sigma,\tau)$ on a single 2-dimensional simplex. $\sigma$ is an edge of diameter $5$ and $\tau$ is a cofacet of $\sigma$ with diameter $5$. The light arrow denotes the pairing between $\sigma$ and $\tau$. (b) The highlighted submatrix of the dimension $p$ coboundary matrix  denotes the dependency of column $[\sigma]$ from Algorithm \ref{alg:standard-algorithm}. In the dimension $p$ coboundary matrix, $(\sigma,\tau)$ is an apparent pair iff entry $(\tau,\sigma)$ has all zeros to its left and below. We color columns/simplices $\sigma$ with blue and their oldest cofacet $\tau$ with purple in (a) and (b). }}
		\label{fig:apparentpairs}
		\centering
	\end{figure}

	Apparent pairs of the form $(\sigma,\tau)$ are particular kinds of persistence pairs, or pairs of ``creation" and ``destruction" simplices as mentioned in Sections \ref{sec: clearing-lemma} and \ref{sec: preliminaries-simplexwise} (see also \cite{edelsbrunner2000topological}). In particular they are determined only by the (co-)boundary relations between $\sigma$ and $\tau$ and the simplex-wise filtration ordering of all simplices.
	
	Apparent pairs \cite{bauer2019ripser} show up in the literature by other names such as close pairs \cite{delgado2014skeletonization} or obvious pairs \cite{henselman2016matroid} but all are equivalent. Furthermore, apparent pairs are known to form a discrete gradient of a discrete Morse function \cite{bauer2019ripser} and have many other mathematical properties. 

	\begin{definition}
\label{def:apparent_pair}

    A pair of simplices $(\sigma,\tau)$ is an apparent pair iff:
    
    1. $\sigma$ is the youngest facet of $\tau$ and 
    
    2. $\tau$ is the oldest cofacet of $\sigma$.
\end{definition}

    We will use the simplex-wise order of Section \ref{sec: filtration-order} for Definition \ref{def:apparent_pair}. In a (co-)boundary matrix, a nonzero entry having all zeros to its left and below is equivalent to an apparent pair. Thus apparent pairs do not require the column reduction of Algorithm \ref{alg:standard-algorithm} to be determined. We call a column containing such a nonzero entry as an apparent column. An example of an apparent pair geometrically and computationally is shown in Figure \ref{fig:apparentpairs}. Furthermore, apparent pairs have zero persistence in Rips filtrations by property 1 of Definition \ref{def:apparent_pair} and that the diameter of a simplex is determined by its maximum length edge.
	
	
	
	In the explicit matrix reduction, where every column of $R$'s nonzeros are stored in memory (see Algorithm \ref{alg:standard-algorithm}), it is easy to determine apparent pairs by checking the positions of $\sigma$ and $\tau$ in the (co-)boundary matrix. However, in the implicit matrix reduction used in Ripser and Ripser++, we need to enumerate cofacets $\tau$ from $\sigma$ and facets $\sigma'$ from $\tau$ at runtime. It is necessary to enumerate both, because even if $\tau$ is found as the oldest cofacet of $\sigma$, $\tau$ may have facets younger than $\sigma$. 
	
	We first notice a property of the facets of a cofacet $\tau$ of simplex $\sigma$ where $\textsf{diam}(\sigma)=\textsf{diam}(\tau)$. Equivalently, we find a property of the nonzeros along particular rows of the coboundary matrix, allowing for a simple criterion for the order of the nonzeros in a row based on simply computing the diameter and combinatorial index of related columns:
	
	\begin{proposition}
		\label{prop: apparent-support}
	Let $\tau$ be the cofacet of simplex $\sigma$ with $\textsf{diam}(\sigma)=\textsf{diam}(\tau)$. 
	
	$\sigma'$ is a strictly younger facet of $\tau$ than $\sigma$ iff
	
	1. $\textsf{diam}(\sigma')=\textsf{diam}(\sigma)=\textsf{diam}(\tau)$ and
	
	2. $\textsf{cidx}(\sigma')<\textsf{cidx}(\sigma)$. ($\sigma'$ is strictly lexicographically smaller than $\sigma$)
	\end{proposition}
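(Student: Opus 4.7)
The plan is to prove both directions of the biconditional by unpacking the simplex-wise filtration ordering from Section \ref{sec: filtration-order} and exploiting the simple fact that the diameter of a simplex is the maximum pairwise distance among its vertices.

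First I would establish the background inequality: since $s'$ is a facet of $t$, we have $s' \subsetneq t$, so $diam(s')\leq diam(t)$ because removing vertices cannot increase the maximum pairwise distance. Combined with the hypothesis $diam(s)=diam(t)$, this gives $diam(s') \leq diam(s)$ in all cases. I would also note that $s'$ and $s$, both being facets of the same simplex $t$, must have the same dimension, namely $dim(t)-1$, so the dimension criterion in the ordering never breaks ties between them.

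For the $(\Leftarrow)$ direction, assume $diam(s') = diam(s) = diam(t)$ and $cidx(s') < cidx(s)$. Going through the three criteria of Section \ref{sec: filtration-order} in order: the diameters are equal, the dimensions are equal (as $s', s$ are facets of $t$), and the combinatorial index of $s'$ is strictly smaller than that of $s$. Since the third criterion orders simplices by \emph{decreasing} combinatorial index, a smaller $cidx$ places $s'$ later in the total order, i.e.\ $s'$ is strictly younger than $s$.

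For the $(\Rightarrow)$ direction, suppose $s'$ is a strictly younger facet of $t$ than $s$. By the background inequality $diam(s') \leq diam(s)$, so the first criterion cannot place $s'$ strictly after $s$; hence $diam(s')=diam(s)=diam(t)$. Equal dimension kills the second criterion, so the youngness must be decided by the third: $cidx(s') < cidx(s)$. This yields both conditions and completes the proof. The main subtlety, and the only place one has to be careful, is the sign convention on the combinatorial-index tiebreaker: one must remember that the ordering is by \emph{decreasing} $cidx$, so ``younger'' corresponds to \emph{smaller} $cidx$; everything else is bookkeeping against the definitions already set up in Sections \ref{sec: filtration-order} and \ref{sec: combinatorial_number_system}.
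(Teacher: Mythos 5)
Your proof is correct and follows essentially the same route as the paper's: both directions rest on the facet inequality $diam(s')\leq diam(t)=diam(s)$ combined with unpacking the simplex-wise ordering of Section \ref{sec: filtration-order}, with the decreasing-$cidx$ tiebreaker giving ``younger $\iff$ smaller $cidx$'' once diameters are forced equal. Your explicit remarks that $s'$ and $s$ share dimension $dim(t)-1$ and on the sign convention of the tiebreaker are details the paper leaves implicit, but the argument is the same.
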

	
	\begin{proof}
	  
	($\Longrightarrow$)
	$\sigma'$ as a facet of $\tau$ implies that $\textsf{diam}(\sigma')\leq \textsf{diam}(\tau) = \textsf{diam}(\sigma)$. If $\sigma'$ is strictly younger than $\sigma$, then $\textsf{diam}(\sigma') \geq \textsf{diam}(\sigma)$. Thus 1. $\textsf{diam}(\sigma')=\textsf{diam}(\sigma)=\textsf{diam}(\tau)$.
    
    Furthermore, if $\sigma'$ is strictly younger than $\sigma$ and $\textsf{diam}(\sigma')=\textsf{diam}(\sigma)$, then the only way for $\sigma'$ to be younger than $\sigma$ is if 2. $\textsf{cidx}(\sigma')<\textsf{cidx}(\sigma)$.
	
	($\Longleftarrow$)
	If $\textsf{diam}(\sigma')=\textsf{diam}(\sigma)=\textsf{diam}(\tau)$ and $\textsf{cidx}(\sigma')<\textsf{cidx}(\sigma)$ then certainly $\sigma'$ is a strictly younger facet of $\tau$ than $\sigma$ is as a facet of $\tau$.
	\end{proof}

	We propose the following lemma to find apparent pairs:
	
	\begin{lemma}[The Apparent Pairs Lemma]
		\label{lemma: apparent pairs lemma}
		
	Given simplex $\sigma$ and its cofacet $\tau$,
		
	1. $\tau$ is the lexicographically greatest cofacet of $\sigma$ with $\textsf{diam}(\sigma)=\textsf{diam}(\tau)$, and
		
	2. no facet $\sigma'$ of $\tau$ is strictly lexicographically smaller than $\sigma$ with $\textsf{diam}(\sigma')=\textsf{diam}(\sigma)$,
		
	iff $(\sigma,\tau)$ is an apparent pair.
	\end{lemma}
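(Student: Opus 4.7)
My plan is to prove the biconditional by carefully unpacking Definition~\ref{def:apparent_pair} against the filtration order of Section~\ref{sec: filtration-order}, with Proposition~\ref{prop: apparent-support} serving as the bridge between the ``youngest facet'' half of Definition~\ref{def:apparent_pair} and the lexicographic criterion of clause (2). Since all cofacets of a fixed $s$ have the same dimension, and similarly for all facets of a fixed $t$, the dimension tier of the ordering collapses in both cases: ``oldest cofacet'' reduces to ``smallest diameter, ties broken by largest $cidx$'' and ``youngest facet'' reduces to ``largest diameter, ties broken by smallest $cidx$''. Both clauses (1) and (2) of the lemma already presuppose $diam(s)=diam(t)$, so the only substantive extra work in the forward direction will be to establish this equality for an arbitrary apparent pair.

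For the ($\Longleftarrow$) direction I would argue as follows. Every cofacet $t'$ of $s$ satisfies $diam(t')\geq diam(s)$, so clause (1) says that $t$ achieves the minimum cofacet diameter and is the tie-breaking largest-$cidx$ choice among cofacets of this minimum diameter; hence $t$ is the oldest cofacet of $s$. Clause (2), combined with the ($\Longleftarrow$) direction of Proposition~\ref{prop: apparent-support}, rules out any facet of $t$ strictly younger than $s$, so $s$ is the youngest facet of $t$. Together these are exactly the two clauses of Definition~\ref{def:apparent_pair}.

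For the ($\Longrightarrow$) direction the key preliminary step is to show $diam(s)=diam(t)$. Since Ripser++ handles dimension $0$ separately by union-find (Section~\ref{sec: 0-persistence}), I may assume $dim(s)\geq 1$, so $t$ has at least three vertices. Picking a longest edge $\{a,b\}$ of $t$ and deleting any vertex $c\in t\setminus\{a,b\}$ produces a facet whose diameter equals $diam(t)$, so the youngest facet $s$ must also have diameter at least $diam(t)$; the containment $s\subset t$ then forces equality. With $diam(s)=diam(t)$ in hand, the oldest-cofacet property of $t$ immediately yields clause (1), while the youngest-facet property of $s$, together with the ($\Longrightarrow$) direction of Proposition~\ref{prop: apparent-support}, yields clause (2).

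The step I expect to be the main obstacle is the diameter-equality reduction $diam(s)=diam(t)$ in the reverse direction. It is the only place where Rips-specific geometry enters the proof, and the longest-edge-inside-a-facet trick only works when $t$ has at least three vertices, which is both why the lemma is naturally stated for Rips filtrations and why the $dim(s)=0$ case needs the separate union-find treatment in Ripser++.
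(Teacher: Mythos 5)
Your proof is correct and follows essentially the same route as the paper's: Proposition~\ref{prop: apparent-support} serves as the bridge between clause (2) and the youngest-facet half of Definition~\ref{def:apparent_pair}, and the diameter equality $diam(s)=diam(t)$ is the crux of the ``apparent implies conditions'' direction, which the paper argues contrapositively (``if $diam(t)>diam(s)$ then there exists a younger facet $s'$'') while you argue directly via the longest-edge facet --- the same underlying geometric fact. Your explicit longest-edge construction and the accompanying $dim(s)\geq 1$ caveat make precise a step the paper leaves implicit, and the caveat is genuinely needed: for $dim(s)=0$ a vertex paired with a positive-length edge can be an apparent pair yet violates clause (1), which is consistent with the paper only ever invoking the lemma in dimensions $\geq 1$ (Section~\ref{sec: 0-persistence}).
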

	\begin{proof}
	($\Longrightarrow$) Since $\textsf{diam}(\tau) \geq \textsf{diam}(\sigma)$ for all cofacets $\tau$, Condition 1 is equivalent to having chosen the cofacet $\tau$ of $\sigma$ of minimal diameter at the largest combinatorial index, by the filtration ordering we have defined in Section \ref{sec: filtration-order}; this implies $\tau$ is the oldest cofacet of $\sigma$.  
	
	Assuming Condition 1, by the negation of the iff in Proposition \ref{prop: apparent-support}, 
	there are no simplices $\sigma'$ with $\textsf{diam}(\sigma')=\textsf{diam}(\sigma)=\textsf{diam}(\tau)$ and $\textsf{cidx}(\sigma')<\textsf{cidx}(\sigma)$ iff $\sigma$ is the youngest facet of $\tau$.
	
	($\Longleftarrow$) If $\textsf{diam}(\tau)>\textsf{diam}(\sigma)$ then there exists a younger $\sigma'$ with same cofacet $\tau$ and thus $\sigma$ is not the youngest facet of $\tau$. Thus $(\sigma,\tau)$ being an apparent pair implies Condition 1. Furthermore, $(\sigma,\tau)$ being apparent ($\sigma$ the youngest facet of $\tau$) with Condition 1 ($\textsf{diam}(\sigma)=\textsf{diam}(\tau)$) implies Condition 2 by negating the iff in Proposition \ref{prop: apparent-support}.
	
	Thus (Conditions 1 and 2) is equivalent to Definition \ref{def:apparent_pair}. 
	\end{proof}
    The Apparent Pairs Lemma can be simplified when $p=1$ and the metric has no equidistant pairs of points. 
    In the following lemma we show that it suffices to check condition (1) of the apparent pair condition for the case of $p=1$ which involves matching edges with triangles.
        \begin{lemma}[Sufficient Condition for an Apparent Pair]\label{lemma: apparentpair-diam-peq1}

    Let $R \in \mathbb{R}\cup\{\infty\}$ and let $d$ be the metric on a set  $X$ with $n_0$ points \textbf{no pair of which are equidistant} and assume $p=1:  p\leq d_{amb}$. 
    
    
    For any $\sigma= \{x_{k_1},...,x_{k_{p+1}}\} \in z_p$, associated with  some $p$-dimensional creation event for  $z_p \in \textsf{ker}(\partial_p)$, 
    
    There exists a $\tau \in C_{p+1}(K_n): \partial_{p+1}(\tau) \in \textsf{im}(\partial_{p+1})$ with 
    $(\sigma,\tau)$ an apparent pair if any of the following equivalent conditions hold true:
    \begin{enumerate}
    \item $ \tau \supseteq \sigma, \text{ s.t. } \textsf{diam}(\tau) = \textsf{diam}(\sigma)$
        \item 
    $ \exists x_{k_{p+2}} \in X:  \text{ s.t. }\max_{i=1}^{p+1}(d(x_{k_i},x_{k_{p+2}})) \leq \textsf{diam}(\sigma) \leq R$
    \end{enumerate}

    \end{lemma}
    \begin{proof}
    
    We show that the first condition is sufficient and then that the second condition of the lemma is equivalent to the first condition.
    
    \textbf{The first condition is sufficient: }

    According to Definition \ref{def:apparent_pair}, an apparent pair 
    \begin{equation}
    (\sigma,\tau) \in (\textsf{ker}(\partial_p)\cap {X \choose p+1}) \times (\{\tau: \partial_{p+1}(\tau) \in \textsf{im}(\partial_{p+1})\}\cap {X \choose p+2}))
    \end{equation}
    has the property that:
    \begin{enumerate}
        \item $\tau$ is the oldest cofacet of $\sigma$ and
        \item there is no other $\sigma'$ younger than $\sigma$ which has the same oldest cofacet.
    \end{enumerate} 

    \textbf{Satisfying Property 1 of an Apparent Pair: }
    
    We show that $\textsf{diam}(\tau) \leq \textsf{diam}(\sigma)$ for some $\tau: \tau\supseteq \sigma$ implies the first condition for $(\sigma,\tau)$ to be an apparent pair:
    
    Let $\tau^*$ be some cofacet of $\sigma$ amongst all $\tau$ with $\textsf{diam}(\tau)\geq \textsf{diam}(\sigma)$ that results in $\textsf{diam}(\tau^*)= \textsf{diam}(\sigma)$. This is the first property of an apparent pair.

    \textbf{Satisfying Property 2 of an Apparent Pair: }
    
    In order to satisfy the second property of an apparent pair, the $\sigma$ cannot have any other $\sigma'$ younger than $\sigma$ sharing $\tau^*$ as its oldest cofacet. For $p=1$ there cannot be a $\sigma'$ when 
    \begin{equation}
        \textsf{diam}(\tau^*)=\textsf{diam}(\sigma)
    \end{equation}
    since $\sigma'$ would have to have 
    \begin{equation}
        \textsf{diam}(\sigma')<\textsf{diam}(\sigma)=\textsf{diam}(\tau^*)
    \end{equation}
    However, this would mean that $(\sigma',\tau)$ cannot form an apparent pair and thus there are no potential conflicts.
    
    If we can find a tie-breaking order that can replace (3) from Section \ref{sec: filtration-order}, the latest amongst $\tau^*$ satisfying $\textsf{diam}(\tau^*)=\textsf{diam}(\sigma)$ becomes an oldest cofacet of $\sigma$. This would make $(\sigma,\tau)$ an apparent pair.
    
    Since both conditions are satisfied, the pair $(\sigma,\tau^*)$, is an apparent pair according to Definition \ref{def:apparent_pair}.
    
    \textbf{The second  condition is equivalent to the first: }
    
    Expanding the diameter function on $\tau$ and $\sigma$ as:
    \begin{equation}
        \textsf{diam}(\tau)= \max_{i,j=1}^{p+2}(d(x_{k_i},x_{k_{j}}))
    \end{equation} and 
    \begin{equation}
        \textsf{diam}(\sigma)= \max_{i,j=1}^{p+1}(d(x_{k_i},x_{k_{j}})),
    \end{equation}
    the equation $\textsf{diam}(\tau)\leq \textsf{diam}(\sigma)$ is equivalent to:
    \begin{subequations}
    \begin{equation}
        \max_{i,j=1}^{p+2}(d(x_{k_i},x_{k_{j}})) \leq \max_{i,j=1}^{p+1}(d(x_{k_i},x_{k_{j}}))\text{ iff }
    \end{equation}
    \begin{equation}
    \max_{i,j=1}^{p+2}(d(x_{k_i},x_{k_{j}}))=\max(\max_{i=1}^{p+1}(d(x_{k_i},x_{k_{p+2}})), \max_{i,j=1}^{p+1}(d(x_{k_i},x_{k_{j}}))) \leq  \max_{i,j=1}^{p+1}(d(x_{k_i},x_{k_{j}}))\text{ iff }
    \end{equation}
    \begin{equation}
        \max_{i=1}^{p+1}(d(x_{k_i},x_{k_{p+2}})) \leq \max_{i,j=1}^{p+1}(d(x_{k_i},x_{k_{j}}))
    \end{equation}
    \end{subequations}
    For any $\sigma \in \textsf{ker}(\partial_p)$, if any point $x \in X$ with $\max_{i=1}^{p+1}d(x,x_{k_i}) \leq \textsf{diam}(\sigma) \leq R$ then amongst all such $x \in X$, we can form a cofacet $\tau: \tau\supseteq \sigma$ with points $\tau=\{x_{k_1},...,x_{k_{p+1}},x^* \}$ that maximizes $\textsf{diam}(\tau)$.
    \end{proof}
    
    In Lemma \ref{lemma: apparentpair-diam-peq1} the proof required that for $p=1$ there would be a unique face of $\tau^*$ forming an apparent pair.
\begin{remark}
     Of course, when $p\geq 2$, this uniqueness of the $p$-dimensional face may not hold anymore. The second condition of the Apparent Pairs Lemma breaks this tie. Thus, the first condition is enough to show the \textbf{existence} of an apparent pair. 
\end{remark}
An elegant algorithmic connection between apparent pairs and the GPU is exhibited by the following Corollary.
	\begin{corollary}
	\label{corollary: massiveparallel-apparent}
	The Apparent Pairs Lemma can be applied for massively parallel operations on every column $[\sigma]$ of the coboundary matrix. 	
	\end{corollary}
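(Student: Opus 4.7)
The plan is to observe that Lemma \ref{lemma: apparent pairs lemma} reformulates the defining conditions of an apparent pair into purely local tests, and then argue that this locality is precisely what the SIMT execution model of the GPU needs.

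First, I would make explicit that the two conditions of Lemma \ref{lemma: apparent pairs lemma} reference, for a fixed column $s$, only: (i) the cofacets of $s$ together with their diameters and combinatorial indices, and (ii) the facets of a single candidate cofacet $t$, together with their diameters and combinatorial indices. Both families are generated on the fly from $s$ (respectively from $t$) via Algorithms \ref{alg:cofacets_sparse} and \ref{alg:facets} using only read-only resources: the distance matrix $D$ and the binomial coefficient table underlying the combinatorial number system. Crucially, no reference is made to the reduction state of any other column $s'' \neq s$, nor to any global data structure that evolves during the computation.

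Next, I would contrast this with the matrix-reduction loop of Algorithm \ref{alg:standard-algorithm}, in which the processing of column $j$ strictly depends on the fully reduced columns $k<j$ through the $low_R$-based pivot lookup. That loop-carried dependence is what prevents trivial parallelization of matrix reduction. For the apparent-pair test, by contrast, Lemma \ref{lemma: apparent pairs lemma} lets us decide the status of $(s,t)$ without any knowledge of other columns; hence we may schedule one GPU thread per column $s$ of the dimension-$d$ coboundary matrix, all threads executing the same kernel (enumerate cofacets, select the lexicographically largest one of equal diameter, enumerate its facets, verify Condition 2), reading from the shared immutable $D$ and binomial table, and writing their verdict into disjoint output slots. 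Since there are no write-conflicts and no inter-thread synchronization barriers within the kernel, the operation is massively parallel in the SIMT sense.

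The main obstacle I anticipate is not correctness but execution-time irregularity: the number of cofacets of $s$ grows linearly in $n-d-1$, and early-exit behavior in Conditions 1 and 2 will vary from column to column, producing warp divergence and uneven work across threads in a warp. In the proof I would note that this does not affect the validity of the corollary, which asserts only that the Apparent Pairs Lemma \emph{admits} a massively parallel application; the practical mitigations (warp-level cooperation on cofacet enumeration, or grouping columns of similar size, as will be described in the subsequent algorithm section) are performance optimizations layered on top of the locality guarantee established here.
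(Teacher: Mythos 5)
Your proposal is correct and rests on exactly the same idea as the paper's own (one-line) proof: the cofacet and facet enumerations required by Lemma \ref{lemma: apparent pairs lemma} for a column $s$ can be generated independently of every other column $s'' \neq s$, so one thread per column suffices. Your elaborations --- the contrast with the loop-carried dependence in Algorithm \ref{alg:standard-algorithm}, the read-only nature of the distance matrix and binomial table, and the remark that warp divergence affects performance rather than validity --- are sound but simply flesh out the same locality argument the paper states tersely.
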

    \begin{proof}
	We may generate the cofacets of simplex $\sigma$ and facets of cofacet $\tau$ of $\sigma$ independently with other simplices $\sigma' \neq \sigma$.
	\end{proof}
The effectiveness of the Apparent Pairs Lemma hinges on the following three phenomenon:
    \begin{enumerate}
        \item Theoretically there are favorable bounds on the number of apparent pairs for a full Rips complex induced by $n_0$ points. See Section \ref{sec: apparentbounds}.
        \item An important empirical fact and common dataset property: there are a lot of apparent pairs \cite{zhang2019hypha, bauer2019ripser}. In fact, by Table \ref{tab:all-pairs} in Section \ref{sec: experiments}, in many real world datasets up to 99\% of persistence pairs over all dimensions are apparent pairs. 
        \item According to our synthetic experiments on random distance matrices with unique entries in Section \ref{sec: experiments-apparent} and our  model of approximation to the expected number of apparent pairs. These results further confirm that there are a large number of apparent pairs in a (co-)boundary matrix induced by a Rips filtration.
    \end{enumerate}
    \subsubsection{Nearest Neighbor of a Simplex}
    We would like to rewrite the second sufficient condition of Lemma \ref{lemma: apparentpair-diam-peq1} in more geometric terms. 
    
    In metric geometry the concept of a ``nearest neighbor" of a point $x$ is defined as the point that is closest to the point $x$. In terms of a GMST, we know that the edge $e$ formed by every unordered pair of points $\{p,q\}: p<q$, where $p$ is the  \textbf{nearest neighbor} of  $q$, must belong to the GMST according to Kruskal's algorithm~\cite{kruskal1956shortest}. These are the ``apparent pairs" in the sense of persistent homology. The pairs $(q,e)$ are the first point-edge pairs that must form when computing $H_0(\partial_{1})$. 

    In relation to the second sufficient condition of Lemma \ref{lemma: apparentpair-diam-peq1}, we can generalize this definition of a nearest neighbor to its higher order generalization as follows:
    \begin{definition}
        In a metric space $\gX$ with metric $d$, for a set of points $S \subseteq \gX$ of size $k\geq 1$ and for an integer $j: j\leq k$, a $j$th-\textbf{order nearest neighbor} of $S$ is defined as:
        \begin{equation}
            \text{NN}_j(S)\triangleq\arg\min_{x \in \gX}(d(x,x_{(j)}))
        \end{equation}
        where $x_{(j)}$ is the j-th order point in $S$, meaning:
        \begin{equation}
            x_{(1)}<\cdots <x_{(j)}<\cdots <x_{(n)}: x_{(i)} \in S, i=1,...,k
        \end{equation}
    \end{definition}
    Thus, a $k$th-order nearest neighbor of $S$ is the closest point to \emph{all} the points in $S$.
    
    This gives us the following corollary to Lemma \ref{lemma: apparentpair-diam-peq1}.
    \begin{corollary}
    Assuming the conditions of Lemma \ref{lemma: apparentpair-diam-peq1}, namely $p=1$: 

     $(\sigma,\tau)$ is an apparent pair if:
        \begin{equation}
            d(\text{NN}_k(\sigma), x_i)  \leq \textsf{diam}(\sigma), \forall x_i \in \sigma
        \end{equation}
    \end{corollary}
    \begin{proof}
        This directly follows by the second condition of Lemma \ref{lemma: apparentpair-diam-peq1}.
    \end{proof}
    Thus, apparent pairs $(\sigma,\tau)$ for $\sigma$ a $1$-dimensional simplex can be interpreted as a \emph{``conditional 2nd-order nearest neighbor criterion"} for $\sigma$. 

    \section{Theoretical Bounds on the Number of Apparent Pairs}
	\label{sec: apparentbounds}
    We show a general upper and lower bound on the number of apparent pairs for any dimension $p\geq 1$. These bounds are tight. We then show a probabilistic lower bound for the case of Uniform Sampling. 
    \subsection{The General Upper and Lower Bound}
	Besides being an empirical fact for the existence of a large number of apparent pairs (see Section \ref{sec: experiments}), we show theoretically that there are tight upper and lower bounds to the number of apparent pairs for a dimension $p+1$ full Rips filtration on a  fully connected graph of $n_0$ points $X$ when the simplices containing the maximally indexed point  $n_0-1$ occur oldest in the filtration. We assume the simplex-wise filtration order of Section \ref{sec: filtration-order} throughout this paper. First we prove a useful property of apparent pairs on the filtered $\textsf{Rips}(X)$ depending on the largest indexed vertex. 
	
	\begin{lemma}[Cofacet Lexicographic Property]
	\label{lemma: cofacet-property}
	For any pair of $p$-dimensional simplex and its oldest cofacet: $(\sigma,\tau)$ of a dimension $p+1$ full Rips filtration  generated by a fully connected graph on $n_0$ points $V=\{0,...,n_0-1\}$ where $\sigma$ does not contain the maximum indexed vertex $n_0-1$ and all $p$-dimensional simplices $\sigma'$ containing maximum indexed vertex $n_0-1$ have $\textsf{diam}(\sigma')\leq \textsf{diam}(\sigma)$, $\tau=(v_{p+1},...,v_0)$ must have $v_{p+1}=n_0-1$, where 
	$v_{p+1}>v_{p}>...>v_0$.
	\end{lemma}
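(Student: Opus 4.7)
\smallskip
\noindent
\emph{Proof plan for the Cofacet Lexicographic Property.}
My approach is to construct an explicit candidate cofacet $t^{*} = s \cup \{n-1\}$ of $s$ and show that it must coincide with the oldest cofacet $t$. Thus the plan has two halves: first, verify that $t^{*}$ already attains the minimum possible diameter among cofacets of $s$; second, verify that among all cofacets of $s$, $t^{*}$ is the unique one maximizing the combinatorial index. Together with the simplex-wise filtration order of Section \ref{sec: filtration-order} (increasing diameter, then increasing dimension, then decreasing combinatorial index, where within a fixed dimension ``oldest'' means smallest diameter with ties broken by largest $\mathrm{cidx}$), this forces $t = t^{*}$ and hence $v_{d+1} = n-1$.

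For the diameter half, I would use the hypothesis on $d$-simplices containing $n-1$ to bound every edge $\{u,n-1\}$ with $u \in s$. Assuming $d \geq 1$ so that $s$ has at least two vertices, for any fixed $u \in s$ I would pick another vertex $w \in s$ with $w \neq u$ and form the $d$-simplex $s' = (s \setminus \{w\}) \cup \{n-1\}$, which contains both $u$ and $n-1$ and has $\mathrm{diam}(s') \leq \mathrm{diam}(s)$ by hypothesis; this gives $d(u, n-1) \leq \mathrm{diam}(s)$. Since the diameter of the cofacet $t^{*}$ is $\max(\mathrm{diam}(s), \max_{u \in s} d(u, n-1))$, it equals $\mathrm{diam}(s)$. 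Because every cofacet of $s$ has diameter at least $\mathrm{diam}(s)$, this is the smallest diameter achievable.

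For the combinatorial-index half, I would invoke the standard monotonicity property of the combinatorial number system: for simplices written in strictly decreasing vertex order $(w_{d+1},\ldots,w_0)$, the map to $\mathrm{cidx}$ is lexicographically increasing in the vertex tuple. Concretely, $\binom{w_{d+1}}{d+2} \leq \mathrm{cidx} < \binom{w_{d+1}+1}{d+2}$ (this uses the hockey-stick identity $\sum_{k=0}^{w_{d+1}-1}\binom{k}{d+1} = \binom{w_{d+1}}{d+2}$), so strictly larger leading vertex implies strictly larger combinatorial index. Any cofacet of $s$ different from $t^{*}$ has the form $s \cup \{v\}$ with $v \leq n-2$ and, since $s \not\ni n-1$, its top vertex is at most $n-2$; whereas the top vertex of $t^{*}$ is $n-1$. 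Hence $\mathrm{cidx}(t^{*})$ strictly exceeds $\mathrm{cidx}$ of every other cofacet.

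Combining the two halves, $t^{*}$ attains the minimum diameter and strictly maximizes the combinatorial index over all cofacets of $s$, so it is the unique oldest cofacet; therefore $t = t^{*}$ and the largest vertex of $t$ is $v_{d+1} = n-1$. The main obstacle, such as it is, lies in the combinatorial-index comparison: the argument is short but relies on carefully invoking the combinatorial number system bounds rather than a naive term-by-term comparison, since the remaining ``low'' vertices of the two cofacets can differ in multiple positions simultaneously. A minor bookkeeping issue is the $d=0$ edge case, where the hypothesis only constrains the singleton $\{n-1\}$ and not the edge $\{u, n-1\}$; this case should either be excluded or handled directly by noting that the statement is about the $(d+1)$-dimensional cofacet $t$, and the argument on combinatorial indices still pins down $v_1 = n-1$ whenever the edge $\{u, n-1\}$ has minimum diameter among cofacets, which can be stated as an additional hypothesis if needed.
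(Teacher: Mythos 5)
Your proposal is correct and takes essentially the same approach as the paper: both identify $t^{*}=(n-1,s)$ as the candidate, use the hypothesis on $d$-simplices containing $n-1$ to force $diam(t^{*})=diam(s)$ (you bound each edge $d(u,n-1)$ directly via the simplex $(s\setminus\{w\})\cup\{n-1\}$, while the paper derives a contradiction from a facet of $t^{*}$ that would be strictly younger than $s$ --- the same underlying use of the hypothesis), and then conclude via the leading-vertex dominance of the combinatorial index. Your explicit hockey-stick bound for the $cidx$ comparison and your flagging of the $d=0$ edge case merely fill in details the paper asserts without proof (and the paper's facet argument likewise implicitly needs $d\geq 1$).
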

	\begin{proof}
	
	Let $\sigma=(w_d,...,w_0)$, with $w_d>w_{d-1}>...>w_0$ and $w_d \neq n_0-1$; The maximum indexed vertex $n_0-1$ forms a cofacet $t=(n_0-1,\sigma)$ of $\sigma$. If $\textsf{diam}(\tau)>\textsf{diam}(\sigma)$, then some facet $\sigma'$ of $\tau$ must be strictly younger (having larger diameter) than $\sigma$ and must contain the vertex $n_0-1$. $s'=(n_0-1,s'')$, with $s''$ a facet of $\sigma$. However, $\sigma'$ is actually strictly older than $\sigma$ by the fact that $\textsf{diam}(\sigma')\leq \textsf{diam}(\sigma)$ by assumption and $\textsf{cidx}(\sigma')>\textsf{cidx}(\sigma)$ since $n_0-1$ is the largest vertex index. 
	This is a contradiction. Thus $\textsf{diam}(\tau)=\textsf{diam}(\sigma)$ (since also $\textsf{diam}(\sigma) \leq \textsf{diam}(\tau)$) and $\textsf{cidx}(\tau)$ is largest amongst all cofacets of $\sigma$ by the existence of point $n_0-1$ in $\tau$. Thus the oldest cofacet of $\sigma$ is in fact $\tau$.
	\end{proof}
	
	\begin{theorem}\label{theorem: apparent bounds}(Equal Diameter Bound on the Number of Apparent Pairs)
		
    
    Let $X$ be a point set of size $n_0$. For a full Rips-filtration $\textsf{Rips}_{\infty}(X)=(\textsf{Rips}_t(X))_{t < \infty}$,
    
    If $p$-dimensional simplices $\sigma' \in \textsf{Rips}_{\infty}(X)$ containing some vertex $x \in X$ have $\textsf{diam}(\sigma')\leq \textsf{diam}(\sigma)$ for all $p$-dimensional simplices $\sigma \in \textsf{Rips}_{\infty}(X)$ not containing vertex $x \in X$, 
    
    Then the ratio of the number of $(p,p+1)$-dimensional apparent pairs to the total number of $p$-dimensional simplices satisfies the following upper and lower bounds: 
    \begin{itemize}
    \item upper bound: $\frac{(n_0-p-1)}{n_0}$; (tight for all $n_0\geq  p+1$ and $p \geq 1$).
    \item lower bound: $\frac{1}{(p+2)}$; (tight for $p \geq 1$).
    \end{itemize}
	\end{theorem}
	\begin{proof}
	
	\textbf{Upper Bound}:
	
	If $p$-dimensional simplex $\sigma$ has vertex $x$, so do all its cofacets $\tau$. Thus by Lemma \ref{lemma: cofacet-property}, the set of $(p,p+1)$-dimensional apparent pairs $(\sigma,\tau)$ must have $\tau=(v_{p+1},...,v_0)$ with $v_{p+1}=x$. There are at most ${n_0-1 \choose p+1}$ such $p$-dimensional simplices $\sigma$ by counting all possible tuples $\sigma=(v_p,...,v_j,...,v_0)$, $v_j \in \{0,...,n_0-1\} \setminus \{x\}$, $\sigma$ a facet of $\tau$ not including point $x$. Since there are a total of ${n_0 \choose p+1}$ $p$-dimensional simplices, we divide the two factors and obtain $\frac{{n_0-1 \choose p+1}}{{n_0 \choose p+1}}=\frac{(n_0-p-1)}{n_0}$ percentage of $p$-dimensional simplices, each of which are paired up with a $(p+1)$-dimensional simplex as an apparent pair. This percentage forms an upper bound on the number of apparent pairs.
	
	\textbf{Tightness of the Upper Bound}:
	
	We show that the upper bound is achievable in the special case where all diameters are equal. The condition of the theorem is certainly still satisfied. In this case we break ties for the simplex-wise refinement of the Rips filtration by considering the lexicographic order of simplices on their decreasing sorted list of vertex indices (See Section \ref{sec: combinatorial_number_system}). 
	
	We exhibit the upper bounding case by forming the corresponding coboundary matrix. The vertex $x$ is set to permuted to $n_0-1$. This does not affect the algorithm since it is permutation invariant on the indices of the vertices.  
	
	By the lexicographic ordering on simplices, in the coboundary matrix there exists a staircase (moving down and to the right by one row and one column) of apparent pair entries starting from the pair $(s_1,t_1)=((p,...,1,0),(n_0-1,p,...,1,0))=(s_1,(n_0-1,s_1))$ and ending on the pair $(s_{{n_0-1 \choose p+1}},t_{{n_0-1 \choose p+1}})=((n_0-2,n-3,...,n_0-p-2),(n_0-1,n_0-2,...,n_0-p-2))=(s_{{n_0-1 \choose p+1}},(n_0-1,s_{{n_0-1 \choose p+1}}))$. See Figure \ref{fig: max-equaldiam}, for the case of $n_0=5$ and $p=1$, where the first ${n_0-1 \choose p+1}$ columns are all apparent. 
	
	The staircase certainly is made up of apparent pairs since each entry has all zeros below and to its left, being the lowest nonzero entries of the leftmost columns of the coboundary matrix. Furthermore, the staircase spans all possible apparent pairs since all ${n_0-1 \choose p+1 }$ rows (the upper bound on number of apparent pairs) of the form $(n_0-1,\sigma')$ are apparent, $\sigma'$ an arbitrary simplex on the points $\{0,...,n_0-2\}$.
	
	\begin{figure}[h]
	\includegraphics[width=1.0\columnwidth]{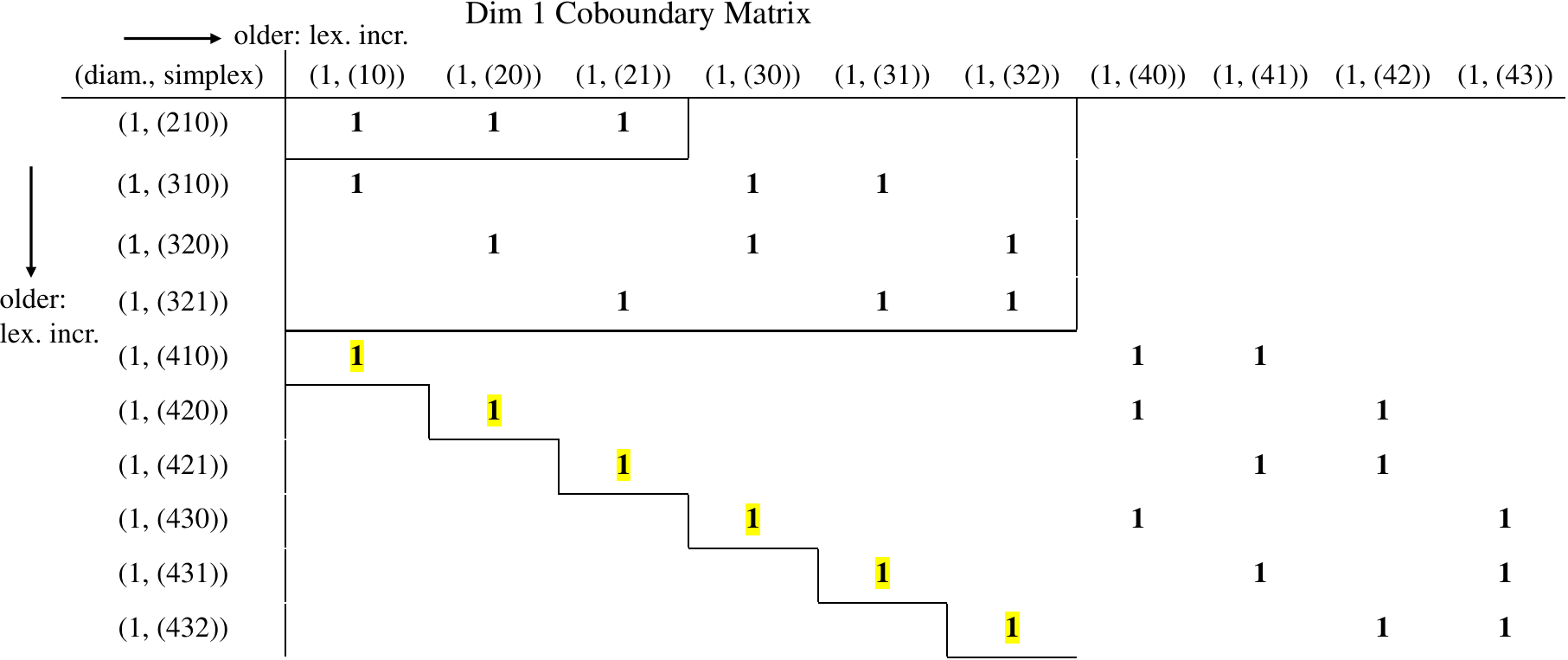}
	\caption{\small{A dimension 1 coboundary matrix of the full Rips filtration of the 2-skeleton on 5 points with all simplices of diameter 1. The yellow highlighted entries above the staircase correspond to apparent pairs.}}
	\label{fig: max-equaldiam}
	\end{figure}

    $\textbf{Lower bound}$:

	The largest number of cofacets of a given $p$-dimensional simplex must be $n_0-p-1$. Thus we will obtain a lower bound if each set of cofacets of the same diameter can be forced to be disjoint from one another. Thus we seek to find a minimal $k$ s.t. ${n_0 \choose p+2} \leq k \cdot (n_0-p-1)$. Upon solving for $k$, divide $k$ by ${n_0 \choose p+1}$, the number of $p$-dimensional simplices, and we get a lower bounding ratio of $\frac{1}{(p+2)}$ $p$-dimensional simplices being paired with $(p+1)$-dimensional simplices as apparent pairs.
 \hfill
 \break
	\textbf{Tightness of the Lower Bound}:
	
	For every dimension $p$, a $(p+1)$-dimension simplex has $p+2$ $p$-dimensional facets. There must be exactly one apparent pair of dimension $(p,p+1)$ in such a $(p+1)$-dimension simplex. For example, if $p=1$, the 2-simplex $X$ has one 1-simplex paired with it out of all $(3=p+2)$ 1-simplices in $X$ (see Figure \ref{fig:apparentpairs} and Figure \ref{fig: geometric-theoretical-upperbound}(a)). 
	\end{proof}
	
	\subsection{The Upper Bound Under Differing Diameters}\label{sec: upperbnd-differingdiam}
    
    The theoretical upper bound of Theorem \ref{theorem: apparent bounds} can still be achieved in dimension 1 with the following diameter assignments (distance matrix); let $d_1>d_2 > ... >d_{n_0\cdot(n_0-1)/2}>0$ be a sequence of diameters. We assign them in increasing lexicographic order on the 1-simplices ordered on the vertices sorted in decreasing order. For example, 1-simplex (10) with vertices 1 and 0 gets assigned $d_1$, 1-simplex (20) with vertices 2 and 0 gets assigned $d_2$. See the following distance matrix, Figure \ref{fig: max-noequaldiam}, for how the diameters are assigned.
    \begin{figure}[h]
	\includegraphics[width=1.0\columnwidth]{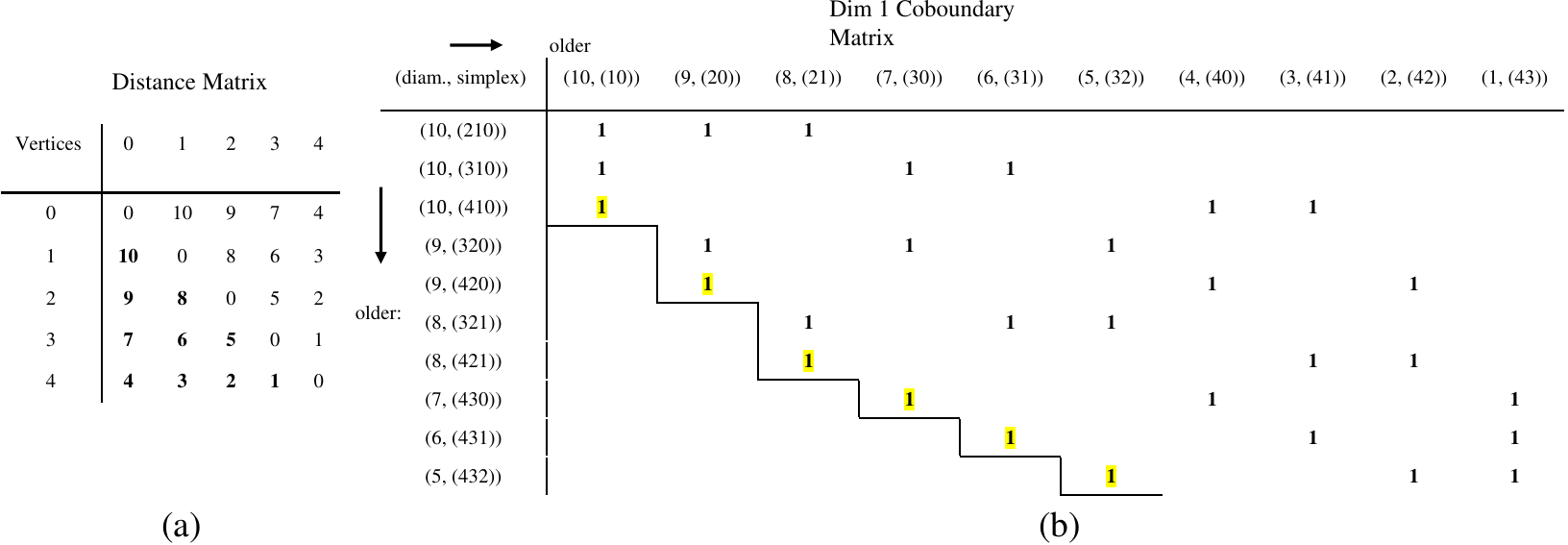}
	\caption{\small{(a) is an example distance matrix with differing edge diameters assigned in decreasing order for increasing lexicographic order on simplices. The barcodes are equivalent up to scaling so long as the distance matrix entries are in the same order (see Observation \ref{obs: reassignment} in Section \ref{sec: experiments-apparent}); thus we set the distance matrix entries to 1,...,10. (b) is a dimension 1 coboundary matrix of the full Rips filtration of the 2-skeleton on 5 points with simplices having diameters given in (a). The yellow highlighted entries above the staircase correspond to apparent pairs. Notice the coboundary matrix is a row permutation transformation from Figure \ref{fig: max-equaldiam}.}}
	\label{fig: max-noequaldiam}
	\end{figure}
 
	Since the increasing lexicographic order from the smallest lexicographically ordered simplex $(p,...,1,0)$ is still used on the columns $\sigma$, we still have the same leftmost ${n_0-1 \choose p+1}$ simplices/columns (but with a different diameter) as in the diameters all the same case (call this the original case). Furthermore, the oldest cofacet $\tau$ of the simplex/column $\sigma$ must still be the same $(p+1)$-dimensional cofacet $\tau$ of $\sigma$ originally. This follows by Lemma \ref{lemma: cofacet-property}, since we assume the lexicographic order on columns is preserved by the diameter assignment and thus that the simplices with vertex $n_0-1$ are oldest in the filtration (the diameter condition in Lemma \ref{lemma: cofacet-property} will be satisfied). These oldest cofacets are all different and are the same simplices as in the original case; Since the columns under consideration are leftmost, each cofacet $\tau$ has as youngest facet the $\sigma$. Thus all $(\sigma,\tau)$ are apparent pairs. Thus we preserve the same apparent pairs as in the original case.
    
    \textbf{The Relationship between the Optimal Upper Bound and the Combinatorial Index: }
    
        In Section \ref{sec: upperbnd-differingdiam}, there is a proof of the upper bound on the number of apparent pairs under differing diameters. In that section, we showed that for any $p\geq 1$ we have that the point of maximum index $n_0-1$ over $n_0$ points acts as a $(p+1)$-th order nearest neighbor for every $p$-dimensional $\sigma \subseteq {X \choose p+1}$.
        
        The distances between pairs of vertices are chosen in a way so that the lexicographic order and these distances are in opposite order. Thus, the largest indexed vertex acts as the tip of a ``fat cone" as in Figure \ref{fig: geometric-theoretical-upperbound}. Since this \textbf{cone} is determined by some \textbf{highest order nearest neighbor} to the previous points, it destroys all existing cycles with the faces of the cone itself.
   \subsection{Geometric Interpretation of the Upper Bound}
	Geometrically the theoretical upper bound in dimension 1 of Theorem \ref{theorem: apparent bounds} is illustrated in Figure \ref{fig: geometric-theoretical-upperbound}. The construction involves adding in a point at a time in order of its vertex index. 
	By the construction, we can alternatively count the number of apparent pairs on $n_0$ points, $T(n_0)$, by the following recurrence relation:
	\begin{equation}
	    T(n_0)= T(n_0-1)+n_0-2,
	\end{equation}
	where $T(3)=1$ and the $n_0-2$ term comes from the number of edges incident to the vertex with maximum index $n_0-2$ in the $(n_0-1)$-point subcomplex that become apparent when adding the new maximum point of index $n_0-1$ to the subcomplex. Solving for $T(n_0)$, we get $T(n_0)= {n_0-1 \choose 2}$ as in Theorem \ref{theorem: apparent bounds}. Thus, assuming the conditions of Theorem \ref{theorem: apparent bounds}, $n_0-2$ is maximal in the recurrence.
	\begin{figure}[h]
	    \includegraphics[width=0.9\columnwidth]{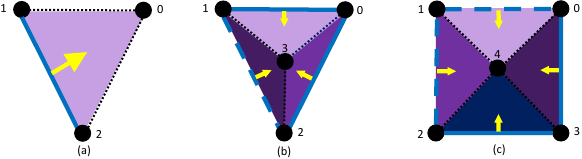}
		\caption{\small{Geometric interpretation of the theoretical upper bound in Theorem \ref{theorem: apparent bounds}. Edge distances are not to scale. (a),(b),(c) (constructed in this order) show the apparent pairs for $p=1$ on the planar cone graph centered around the newest apex point: $n_0-1$ for $n_0=3,4,5$ points. The yellow arrows denote the apparent pairs: blue edges paired with purple or navy triangles. The dashed (not dotted) blue edges denote apparent edges from the previous $n_0-1$ point subcomplex.}}
		\label{fig: geometric-theoretical-upperbound}
		\centering
	\end{figure}

	\subsection{Apparent Pairs are Heavy-Hitters for Large Point Samples}
    We show that the apparent pairs of the Vietoris-Rips barcode have a high probability of occurring for points sampled from a $d_{amb}$-dimensional hypercube with the uniform probability distribution.  Let us first understand how an apparent pair forms:

    \textbf{In terms of the Geometric Minimum Spanning Tree: }

    According to Section \ref{sec: 0-persistence}, for the computation of zero-dimensional persistent homology, we knew that $\{\tau: \partial_1(\tau) \in \textsf{im}(\partial_1(K_n))\}$ is the formal sum of edges from the geometric minimum spanning tree on $X$. The kernel: $\textsf{ker}(\partial_1(K_n))$ is represented by the formal sum of complementary edges of the spanning tree of $X$.

    For Vietoris-Rips complexes, only pairs of points and their distances determine any simplex. We can show that apparent pairs maintain complementary edges on the GMST of a set of $n_0$ points $X$.
    \begin{lemma}
        Let $X \subseteq \mathbb{R}^{ d_{amb}}$ be a set of $n_0$ points and $GMST(X)$ be the geometric minimum spanning tree on $X$.
        
        For any $p\geq 1$, the pair $(\sigma,\tau) \in (z_p,b_{p+1})$, with 
        \begin{equation}
            (z_p,b_{p+1})\in (\textsf{ker}(\partial_p) \cap {X \choose p+1}) \times (\{\tau: \partial_{p+1}(\tau) \in \textsf{im}(\partial_{p+1})\}\cap {X \choose p+2})
        \end{equation} is apparent, then 
        \begin{equation}
            u_{\sigma},v_{\sigma}= \arg\max_{u,v \in \sigma} d(u,v)
        \end{equation} 
        is a complementary edge.
    \end{lemma}
    \begin{proof}
    For each $p\geq 1: $
    
    If $(\sigma,\tau)$ is apparent, then according to Lemma \ref{lemma: apparentpair-diam-peq1}, we simply need $\textsf{diam}(\tau)\leq \textsf{diam}(\sigma)$ for some $p+1$-dimensional cofacet $\tau: \tau \supseteq \sigma$.

    This means that $\tau$ shares the same maximum distance pair of points $\{u_{\sigma},v_{\sigma}\} \subseteq \tau$ as $\sigma \subseteq \tau$. If this pair were not a complementary edge, then it would belong to the GMST. However, by definition of minimum spanning tree, all edges on the GMST are shorter than complementary edges.
    \end{proof}
    \textbf{The event of forming an Apparent Pair:}
    
    \begin{figure}[!ht]
    \centering
\begin{subfigure}{0.495\textwidth}
\centering
\begin{tikzpicture}

\coordinate (A) at (0, 0);      
\coordinate (B) at (4, 0);      
\coordinate (C) at (3, 2);      
\coordinate (D) at (1, 2);      

\draw[ultra thick] (B) -- node {$-$} (C);
\draw[ultra thick] (C) -- node {$\mid$} (D);
\draw[ultra thick] (D) -- node {$-$} (A);
\draw[thick, draw=red] (A) -- node {$\mid\mid$} (B);
\draw[thick, draw=red] (A) -- node {$\mid\mid$}  (C);
\draw[thick, draw=red] (B) -- node {$\mid\mid$}  (D);

\node[below] at (A) {$p_1$};
\node[below] at (B) {$p_2$};
\node[above] at (C) {$p_3$};
\node[above] at (D) {$p_4$};

\end{tikzpicture}
\caption{\small{A random trapezoid from $4$ i.i.d. sampled points}}
\end{subfigure}
\begin{subfigure}{0.495\textwidth}
\centering
\begin{tikzpicture}

\coordinate (A) at (0, 0);      
\coordinate (B) at (2, 0);      
\coordinate (C) at (2, 2);      
\coordinate (D) at (0, 2);      

\draw[thick] (A) -- (B) -- (C) -- (D) -- cycle;
\draw[ultra thick] (A) -- node{$\mid$} (B); 
\draw[ultra thick] (B) -- node{$-$} (C); 
\draw[ultra thick] (C) -- node{$\mid$} (D);
\draw[thick] (D) -- node{$-$} (A);
\draw[thick, draw=red] (A) -- node[pos=0.2]{$\mid\mid$}(C);        
\draw[thick, draw=red] (B) -- node[pos=0.2]{$\mid\mid$}(D);        

\node at (A) [below left] {$p_1$};
\node at (B) [below right] {$p_2$};
\node at (C) [above right] {$p_3$};
\node at (D) [above left] {$p_4$};

\end{tikzpicture}
\caption{\small{A random square from $4$ i.i.d. sampled points}}
    \end{subfigure}
\caption{\small{A set of $4$ i.i.d. points  $\{p_1,p_2,p_3,p_4\}$ sampled from some common distribution whose pair of points form edges. All the edges with the same number of ticks have the same length $p$. The edges with more than one tick 
have length $L'>L$. 
The left figure has the sample of points that form a trapezoid and the right figure has the sample of points form a square.
}}
\end{figure}
    Consider a random sample of $4$ points from the uniform distribution in $d_{amb}$ dimensions, a Rips-complex can form by either:
    \begin{enumerate}
        \item Adding diagonals before forming a full-Rips complex on $4$ points or
        \item Creating a $4$-cycle before closing with diagonals to form a full-Rips complex
    \end{enumerate}
    In the later, case (2), where the cycle cannot close up with its own apparent edge, the event would be considered spurious. For a large number of samples there should be no distinguishable spacing. In particular, with just one more point in the interior of the square, atleast one of the sides of the square becomes apparent. It would be suspected that with a large number of points these spurious cycles would not appear. 

    \subsubsection{Lower Bounds under the Uniform Sampling Condition}\label{sec: lowerbound-uniformsampling}
    
    Based on our understanding of apparent pairs in terms of the point samples, in the following we specify a generic sampling distribution in Euclidean space. 
    \begin{condition}[Hypercube Sampling Condition]\label{cond: uniform-independent-sampling}
    
    \begin{equation}
        P(x)\triangleq P(x \sim U([0,1]^{d_{amb}})) 
    \end{equation}
    $\bullet$ Assume the Euclidean metric over $[0,1]^{d_{amb}}$.
    \end{condition}
    We show that there is a lower bound on the probability of achieving an apparent pair for a given $p$-dimensional simplex $\sigma$. This lower bound holds in the case of uniform independent sampling with growing ambient dimension from Condition \ref{cond: uniform-independent-sampling}. 
    
    \begin{lemma}\label{lemma: apparent-probability}
    Let $n_0 \in \mathbb{N}$ and 
    letting $X$ be a point sample of size $n_0$ from the uniform sampling distribution of Condition \ref{cond: uniform-independent-sampling}. 
    
    For any $p$-simplex $\sigma \in \textsf{ker}(\partial_p) \cap {X \choose (p+1)}$,  
    
    We then have:
    \begin{equation}\label{eq: apparent-lowerbound}
           P(\exists \tau: (\sigma,\tau) \text{ is apparent} \mid \textsf{diam}(\sigma)=t)\geq 1-(1-\frac{t^{d_{amb}}}{(d_{amb}+1)!2^{p+1}})^{n_0}
        \end{equation}
    \end{lemma}
    \begin{proof}
        Using Lemma \ref{lemma: apparentpair-diam-peq1}, we know that given $p$-dimensional simplex $\sigma \in K_n$ with $\sigma=\{x_1,...,x_{p+1}\}$ if all $x \in X$ have that $\max_{i=1}^{p+1}d(x,x_i)\leq t$. Thus, the point $x \sim P(x)$ has the property of forming a $p+1$-dimensional simplex $\tau$ that pairs with $\sigma$ as an apparent pair if and only if each $x$ belongs to the intersection of the $p+1$ $t$-radius balls centered about $x_i, i=1,...,p+1$.

        Since $P(x)$ is a uniform distribution on the hypercube $([0,1])^{d_{amb}}$, we can take the complementary Euclidean volume of the intersection of $p+1$ balls of radius $t$ in $d_{amb}$ dimensions to get the probability of not having an apparent pairing $(\sigma,\tau)$ for some $\tau$. 
        This gives the complementary probability of a point belonging to this intersection: $1-P(\bigcap_{i=1}^{p+1} B(x_i,t))$. 
        
        If none of the $n_0-(p+1)$ points not in $\sigma$ belong to $\bigcap_{i=1}^{p+1} B(x_i,t)$, then there is no cofacet $\tau$ that makes an apparent pair $(\sigma, \tau)$. Thus, the complementary probability is a lower bound to  $P(\exists \tau: (\sigma,\tau) \text{ is apparent}\mid  \textsf{diam}(\sigma)=t)$.

        We now compute this complementary probability:

        We know that 
        \begin{equation}
            \textsf{BB}(\textsf{Cyl}(\sigma,2t)) \supseteq \textsf{Cyl}(\sigma,2t) \supseteq \bigcap_{i=1}^{p+1} B(x_i,t) \supseteq 
        \textsf{inscpoly}(\{x_i\}_{i=1}^{p+1}))
        \end{equation}
        where $\textsf{Cyl}(\sigma,2t) \triangleq \sigma \times [-t,t]$, $\textsf{BB}(\sigma)$ is the $d_{amb}$-dimensional bounding box of a set of points $S$, and $\textsf{inscpoly}(\{x_i\}_{i=1}^{p+1}))$ is a polytope that is contained in $\bigcap_{i=1}^{p+1} B(x_i,t)$. This polytope exists by Proposition \ref{prop: ball-intersection-polytope}.
        
        We have that:
        \begin{subequations}
           \begin{equation}
            \textsf{BB}(\textsf{Cyl}(\sigma,2t)) \leq \frac{(2t)^{d_{amb}}}{2^p}
        \end{equation}
        \begin{equation}
            \textsf{vol}(\textsf{Cyl}(\sigma,2t))= 2t \textsf{vol}(\sigma)
        \end{equation}
        \begin{gather}
        \begin{split}
            \textsf{vol}(\textsf{inscpoly}(
         = \frac{1}{(d_{amb}+1)!}(\frac{t}{\sqrt{2}})^{d_{amb}+1}\frac{2^{d_{amb}}}{2^p}\frac{1}{t}
        \end{split}
        \end{gather}
        \end{subequations}
        where in the inequalities for the inscribed polytope volume, the first lower bound is the volume of a $t$-equilateral $d_{amb}$-dimensional simplex with fixed $p$-dimensional face that is reflected across ${d_{amb}+1 \choose d_{amb}}-(p+1)$ faces. Such a union of simplices is inscribed in $\bigcap_{i=1}^{p+1} B(x_i,t)$
        where the multiplicative factor $2^{(d_{amb}-p)}$ in the lower bound on $\textsf{vol}(\textsf{inscpoly}(
        \{x_i\}_{i=1}^{p+1}))$ comes from the ${d_{amb}+1 \choose d_{amb}}-(p+1)=d_{amb}-p$ reflections on the faces of the $t$-equilateral $d_{amb}$-dimensional simplex. See Proposition \ref{prop: ball-intersection-polytope}.
        
        This results in the following probabilistic inequalities:
        \begin{equation}
            \frac{(2t)^{d_{amb}}}{2^p} \geq \textsf{vol}(\sigma) (2t) \geq \textsf{vol}(\bigcap_{i=1}^{p+1} B(x_i,t))\geq \frac{1}{(d_{amb}+1)!}(\frac{t}{\sqrt{2}})^{d_{amb}+1}\frac{2^{d_{amb}}}{2^p}\frac{1}{t}
        \end{equation}
        Thus 
        \begin{subequations}
        \begin{equation}
            \frac{(2t)^{d_{amb}}}{2^p}\geq P(\exists \tau: (\sigma,\tau) \text{ is apparent}\mid \textsf{diam}(\sigma)=t)= 1-(1-P(\bigcap_{i=1}^{p+1} B(x_i,t)))^{n_0-(p+1)} 
        \end{equation}
        \begin{equation}
            \geq 1-(1-\frac{1}{(d_{amb}+1)!}(\frac{t}{\sqrt{2}})^{d_{amb}+1}\frac{2^{d_{amb}}}{2^p}\frac{1}{t})^{n_0-(p+1)}
        \end{equation}
        \begin{equation}
            \geq 1-(1-\frac{t^{d_{amb}}}{(d_{amb}+1)!2^{p+1}})^{n_0}
        \end{equation}
        \end{subequations}
    \end{proof}

According to Lemma \ref{lemma: apparent-probability}, we thus have the following implication:
\begin{equation}\label{eq: sufficient-apparent}
    1-(1-\frac{t^{d_{amb}}}{(d_{amb}+1)!2^{p+1}})^{n_0}\geq q \Rightarrow P(\exists \tau: (\sigma,\tau) \text{ is apparent}\mid \textsf{diam}(\sigma)=t)\geq q
\end{equation}
Solving for $t$, this gives us the lower bound on all the $t$ so that the implication of Equation \ref{eq: sufficient-apparent} can hold:
\begin{equation}
    t_q^*\triangleq [((d_{amb}+1)!2^{p+1})(1-(1-q)^{\frac{1}{n_0}})]^{\frac{1}{d_{amb}}}
\end{equation}
Taking $n_0 \rightarrow \infty$, we see that $t_q^* \rightarrow 0$, so that the expression in Equation \ref{eq: apparent-lowerbound} gives a lower bound of $0$ probability. Intuitively this is because when there are unlimited number of points in a compact domain with no lower bound on the closest pair of points, then an arbitrarily large number of ``thin", non-apparent $p+1$-dimensional simplices $\sigma$ could occur. 

This is why we must condition on having a lower bound on the closest pair of points in any sample $X \sim P(x)^{n_0}$. Since $P(x)$ has compact support, this is equivalent to upper bounding the aspect ratio. Let $\rho= a(\textsf{Rips}_t(X))$, this gives the lower bound on $t$ as $\frac{1}{\rho}$.

We show in the following theorem a sufficient condition to obtain atleast probability of $q$ chance for any $\sigma$ of a $p$-dimensional cycle creation event of being part of an apparent pair.

We group together the terms in Equation \ref{eq: sufficient-apparent} involving $\rho$ into a single variable. This defined below:
    \begin{equation}\label{eq: Delta}
        \Delta\triangleq (d_{amb}+1)!2^{p+1}\rho^{(d_{amb})}
    \end{equation}
    
    \begin{theorem}
    \label{thm: apparent-probability-numsample}
    Under the same conditions as Lemma \ref{lemma: apparentpair-diam-peq1} where $X$ is a point sample of size $n_0$ from the uniform sampling of Condition \ref{cond: uniform-independent-sampling} and $\rho= a(\textsf{Rips}_t(X))$. 
    
    For any $q \in [0,1]$ 
    and for any $p$-simplex $\sigma  \in \textsf{ker}(\partial_p) \cap {X \choose (p+1)}$,
    
    Let:
    \begin{equation}
        n^*(q) \triangleq \frac{\log(1-q)}{\log(1-\frac{1}{\Delta})} 
    \end{equation}
We must have that 
\begin{equation}\label{eq: samplecomplexity-apparent}
    n_0\geq n^*(q) \Rightarrow P(\exists \tau: (\sigma,\tau) \text{ is apparent}) \geq q 
\end{equation}

Furthermore, if we enforce aspect ratio $\rho=1$ as in Theorem \ref{theorem: apparent bounds}, we obtain the optimal lower bound on $n^*(q)$.
    \end{theorem}
    \begin{proof}
        Using that $\Delta=(d_{amb}+1)!2^{p+1}\rho^{d_{amb}}$,     we have:
    \begin{equation}
         (1-(1-\frac{1}{\Delta})^{n_0}) \geq q \Rightarrow P(\exists \tau: (\sigma,\tau) \text{ is apparent}) \geq q 
    \end{equation}
    Bringing $n_0$ to one side, we obtain the desired implication from Equation \ref{eq: samplecomplexity-apparent}.
    \end{proof}
    \clearpage
    \section{Finding Apparent Pairs in Parallel on GPU}
	\begin{algorithm}[!h]
    \SetKwComment{Comment}{//}{}
    \caption{Finding Apparent Pairs on GPU}
    \label{alg:apparentpairsalgorithm}
    \KwIn{$\pmb{C}$: the simplices to reduce; $\textsf{vertices}(\cdot)$: the vertices of a simplex; $\textsf{diam}(\cdot)$: the diameter of a simplex; $\textsf{cidx}(\cdot)$: the combinatorial index of a simplex; $\textsf{dist}(\cdot)$: the distance between two vertices; $\textsf{enumerate}$-$\textsf{facets}(\cdot)$: enumerates facets of a simplex.}
    \KwOut{$\pmb{A}$: the apparent pair set from the coboundary matrix of dimension $\textsf{dim}$. \Comment*[r]{global to all threads}}
    
    $\text{tid}$: the thread id. \Comment{local to each thread}
    
    $\sigma \gets \pmb{C}[\text{tid}]$ 
    \Comment{Each thread fetches a distinct simplex from the set of simplices}
    $\pmb{V} \gets \textsf{vertices}(\sigma)$ 
    \Comment{This only depends on the combinatorial index of $\sigma$}
    
    \ForEach{{cofacet $\tau$ of $\sigma$ in lexicographically decreasing order}}{
        \ForEach{$v'$ in $\pmb{V}$}{
            $\textsf{diam}(\tau) \gets \max(\textsf{dist}(v', v),\, \textsf{diam}(\sigma))$ 
            \Comment{Calculate the diameter of $\tau$}
        \If{$\textsf{diam}(\tau) = \textsf{diam}(\sigma)$}{
            $\pmb{S} \gets \emptyset$ \;
            $\textsf{enumerate-facets}(\tau,\, \pmb{S})$ 
            \Comment{$\pmb{S}$ are facets of $\tau$ in lexicographical increasing order}
            
            \ForEach{$\sigma'$ in $\pmb{S}$}{
                \If{$\textsf{diam}(\sigma') = \textsf{diam}(\sigma)$}{
                    \If{$\textsf{cidx}(\sigma') = \textsf{cidx}(\sigma)$}{
                        \Comment{$\sigma$ is the youngest facet of $\tau$}
                        $\pmb{A} \gets \pmb{A} \cup \{(\sigma, \tau)\}$ \;
                        \textbf{return}
                        \Comment*[r]{Exit if $(\sigma,\tau)$ is apparent or if $\sigma'$ is strictly younger than $\sigma$}
                    }
                }
            }
        }
    }
    }
\end{algorithm}
    Based on Lemma \ref{lemma: apparent pairs lemma}, finding apparent pairs from a cleared coboundary matrix without explicit coboundaries becomes feasible. There is no dependency for identifying an apparent pair as Corollary \ref{corollary: massiveparallel-apparent} states, giving us a unique opportunity to develop an efficient GPU algorithm by exploiting the massive parallelism.
 
	Algorithm \ref{alg:apparentpairsalgorithm} shows how a GPU kernel finds all apparent pairs in a massively parallel manner. A GPU thread fetches a distinct simplex $\sigma$ from an ordered array of simplices, represented as a (diameter, cidx) struct in GPU device memory
	, and uses the Apparent Pairs Lemma to find the oldest cofacet $\tau$ of $\sigma$ and ensure that $\tau$ has no younger facet $\sigma'$. If the two conditions of the Apparent Pairs Lemma hold, $(\sigma,\tau)$ can form an apparent pair. Lastly, the kernel inserts into a data structure containing all apparent pairs in the
	GPU device memory. 
	
	The complexity of one GPU thread is $O(\log(n_0)(d+1) + (n_0-p-1)(d+1)$), in which $n_0$ is the number of points and $p$ is the dimension of the simplex $\sigma$. The first term represents a binary search for $p+1$ simplex vertices from a combinatorial index, and the second term says the algorithm checks at most $p+1$ facets of all $n_0-p-1$ cofacets of the simplex $\sigma$. Notice that this complexity is linear in $n_0$, the number of points, with dimension $p$ small and constant. 
	\subsection{Review of Enumerating Cofacets in Ripser}
	\label{sec: enumerating-cofacets}
	Enumerating cofacets/facets in a lexicographically decreasing/increasing order is substantial to our algorithm. The cofacet enumeration algorithm differs for full Rips computation and for sparse Rips computation. Cofacet enumeration is already implemented in Ripser. 
	Details of the cofacet enumeration from Ripser are presented in Algorithm \ref{alg:cofacets_vr} of Section \ref{sec: enumerating-cofacets}. For enumerating cofacets of a simplex in the sparse case, we utilize the sparsity of edge relations as in Ripser. 
	Algorithm \ref{alg:cofacets_sparse} shows such an enumeration. 

	In Algorithm \ref{alg:cofacets_vr}, we enumerate cofacets of a simplex $\sigma$ by iterating through all vertex indices $v \in X: X=\{0,...,n_0-1\}$. We keep track of an integer $k$. If $v$ matches a vertex of $\sigma$, then we decrement $k$ and $v$ until we can add ${v \choose k}$ legally as a binomial coefficient of the combinatorial index of a cofacet of $\sigma$. This algorithm assumes a dense distance matrix, meaning all its entries represent a finite distance between any two points in $X$.
 
		\begin{algorithm}[!h]
        \SetKwComment{Comment}{//}{}
    \caption{Enumerating Cofacets of a Simplex}
    \label{alg:cofacets_vr}
    
    \KwIn{$\pmb{X}=\{0,...,n_0-1\}$: the set of indices of a finite metric space; $\sigma$: a simplex with vertices in $\pmb{X}$; $\textsf{vertices}(\cdot)$: the vertices of a simplex; $\textsf{cidx}(\cdot)$: the combinatorial index of a simplex.}
    \KwOut{$\pmb{S}$: the facets of $\sigma$ in lexicographically decreasing order.}
    
    $\pmb{V} \gets \textsf{vertices}(\sigma)$ \;
    $\textsf{cidx}(\sigma'_{high}) \gets 0$ \;
    $\textsf{cidx}(\sigma'_{low}) \gets \textsf{cidx}(\sigma)$ \;
    
    \While{$v \in \pmb{X} = \{0,...,n_0-1\}$}{
        \uIf{$v \notin \pmb{V}$}{
            $\textsf{cidx}(\sigma') \gets \textsf{cidx}(\sigma'_{high}) + {v \choose k} + \textsf{cidx}(\sigma'_{low})$ \;
            $v \gets v - 1$ \;
        }
        \Else {
            \While{$v \in \pmb{V}$}{
                $\textsf{cidx}(\sigma'_{high}) \gets \textsf{cidx}(\sigma'_{high}) + {v \choose k+1}$ 
                \Comment{$\textsf{vertices}(\sigma'_{high}) \gets \textsf{vertices}(\sigma'_{high}) \cup \{v\}$}
                $\textsf{cidx}(\sigma'_{low}) \gets \textsf{cidx}(\sigma'_{low}) - {v \choose k}$ \Comment{$\textsf{vertices}(\sigma'_{low}) \gets \textsf{vertices}(\sigma'_{low}) \setminus \{v\}$}
                $v \gets v - 1$ \;
                $k \gets k - 1$ \;
            }
        }
        $\text{append}(\pmb{S}, \sigma')$ \;
    }
\end{algorithm}

	\subsection{Computation Induced by Sparse 1-Skeletons}
	Due to the exponential growth in the number of simplices by dimension during computation of Vietoris-Rips filtrations, which demands a high memory capacity and causes long execution time, we also consider enumerating cofacets induced by sparse distance matrices. A sparse distance matrix $D$ simply means that we consider certain distances between points to be ``infinite." This prevents certain edges from contributing to forming a simplex since a simplex's diameter must be finite. Computationally when considering cofacets of a simplex, we only consider finite distance neighbors. This results in a reduction in the number of simplices to consider when performing cofacet enumeration for matrix reduction and filtration construction, potentially saving both execution time and memory space if the distance matrix is ``sparse" enough. 
	 
	There are two uses of sparse distance matrices. One usage is to compute Vietoris-Rips barcodes up to a particular diameter threshold, using a sparsified distance matrix. This results in the same barcodes as in the dense case on a truncated filtration due to the diameter threshold. The second usage is to approximate a finite metric space at the cost of a multiplicative approximation factor on barcode lengths ~\cite{dey2019simba, sheehy13linear, cavanna2015geometric}. The paper \cite{cavanna2015geometric} has an algorithm to approximate dense distance matrices with sparse matrices for such barcode approximation and can be used with Ripser++; this approach is a particular necessity for large dense distance matrices where the persistence computation has dimension $p$ such that the resulting filtration size becomes too large.
	
	\subsection{Enumerating Cofacets of Simplices Induced by a Sparse 1-Skeleton in Ripser}
	
	The enumeration of cofacets for sparse edge graphs (sparse distance matrices involving few neighboring relations between all vertices) must be changed from Algorithm \ref{alg:cofacets_vr} for performance reasons. The sparsity of neighboring relations can significantly reduce the number of cofacets that need to be searched for. 
	Similar to the inductive algorithm described in \cite{zomorodian2010fast}, cofacet enumeration is in Algorithm \ref{alg:cofacets_sparse}.
    
	\begin{algorithm}[!h]
    \SetKwComment{Comment}{//}{}
    \SetKw{Continue}{\text{continue}}
    \caption{Enumerating Cofacets of Simplex for Sparse 1-Skeletons}
    \label{alg:cofacets_sparse}
    \KwIn{$\pmb{X}= \{0,...,n_0-1\}$: a finite metric space; $\sigma$: a simplex with vertices in $\pmb{X}$; $\textsf{vertices}(\cdot)$: the vertices of a simplex; $\textsf{cidx}(\cdot)$: the combinatorial index of a simplex; $\textsf{cidx}_{vert}(\cdot)$: calculate the combinatorial index from the vertices.}
    \KwOut{$\pmb{S}$: the facets of $\sigma$ in lexicographically decreasing order.}
    
    $\pmb{V} \gets \textsf{vertices}(\sigma)$ \;
    fix some $v_0 \in \pmb{V} \subseteq \pmb{X}$ \;
    
    \ForEach{\text{neighbor $v' \in \pmb{X} - \pmb{V}$ of $v_0$ where  $v' \neq v_0$ in decreasing order}}{
        \ForEach{\text{$w \in \pmb{V}$, $w \neq v_0$ and $w \neq v'$}}{
            \If{\text{$w$ is a neighbor of $v'$}}{
                \Continue
                \Comment*[r]{Jump to the next iteration of the inner loop}
            }
            \Else {
                \If{\text{all neighboring vertices to $w$ are greater than $v'$}}{
                    \Return 
                    \Comment*[r]{There are no more cofacets that can be enumerated}
                }
                \Else {
                    \textbf{goto} try\_next\_vertex 
                    \Comment*[r]{Jump to line 13}
                }
            }
        }
        $\sigma' \gets \textsf{cidx}_{vert}(V \cup \{v'\})$ \;
        \text{append}($\pmb{S}, \sigma'$) \;
        try\_next\_vertex: ;
    }
\end{algorithm}
\clearpage
	\subsection{Enumerating Facets in Ripser++}
	Algorithm \ref{alg:facets} shows how to enumerate facets of a simplex as needed in Algorithm \ref{alg:apparentpairsalgorithm}. A facet of a simplex is enumerated by removing one of its vertices. Due to properties of the combinatorial number system, if we remove vertex indices in a decreasing order, the combinatorial indices of the generated facets will increase (the simplices will be generated in lexicographically increasing order). Algorithm \ref{alg:facets} is used in Algorithm \ref{alg:apparentpairsalgorithm} for GPU in Ripser++ and does not depend on sparsity of the distance matrix since a facet does not introduce any new distance information. In fact, there are only $p+1$ facets of a simplex to enumerate. On GPU we can use shared memory per thread block to cache the few $p+1$ vertex indices. 
\begin{algorithm}[!h]
\SetKwComment{Comment}{//}{}
    \caption{Enumerating Facets of a Simplex}
    \label{alg:facets}
    
    \KwIn{$\pmb{X} = \{0,...,n_0-1\}$: $n_0$ points of a finite metric space; $\sigma$: a simplex with vertices in $\pmb{X}$; $\textsf{vertices}(\cdot)$: the vertices of a simplex; $\textsf{cidx}(\cdot)$: the combinatorial index of a simplex; $\textsf{last}(\cdot)$: the last simplex of a sequence.}
    \KwOut{$\pmb{S}$: the facets of $\sigma$ in lexicographically increasing order.}
    \SetKwProg{Fn}{Function}{:}{}
    \SetKwFunction{enumfacets}{enumerate-facets}
    \Fn{\enumfacets{$\sigma$, $\pmb{S}$}}{
        $\pmb{V} \gets \textsf{vertices}(\sigma)$ \;
        $\text{prev} \gets \emptyset$ \;
        $k \gets \lvert \pmb{V}\rvert $ \;
        
        \For{$v \in \pmb{V} \subseteq \pmb{X}$ in decreasing order}{
            \If{$\text{prev} \neq \emptyset$}{
                $\textsf{cidx}(\sigma') \gets \textsf{cidx}(\textsf{last}(\pmb{S})) - \binom{v}{k} + \binom{[ \text{prev} ] }{k}$ 
                \Comment*[r]{$[\text{prev}] $ is the only element of the singleton $\text{prev}$}
            }
            \Else {
                $\textsf{cidx}(\sigma') \gets \textsf{cidx}(\textsf{last}(\pmb{S})) - \binom{v}{k}$ \;
            }
            }
            \text{append}($\pmb{S}, \sigma'$) 
            \Comment*[r]{Append $\sigma'$ to the end of $\pmb{S}$}
            $\text{prev} \gets \{v\}$ \;
            $k \gets k - 1$ \;
        }
\end{algorithm}
\section{The Expected Complexity of Ripser++}
    We prove the expected complexity of Ripser++ under the Uniform Sampling Condition as well as the $k$-Cavity Uniform Sampling Condition. 
\subsection{Under Uniform Sampling Conditions}
    \begin{theorem}(Complexity Under Uniform Sampling)\label{thm: expectedcomplexity-ripserpp}
    
    Assuming the Uniform Sampling Condition,  when  $n_0$ satisfies 
        \begin{equation}
    {n_0}\geq \Omega(\max(C^{\frac{\log(1+\frac{1}{\Delta})\log(\Delta)}{(\log(1+\frac{1}{\Delta})\log(\Delta) -1) }}, (\textsf{dim}(\textsf{Rips}(X)))^2\log^2(\Delta))
    \end{equation}
    for some constant $C>0$, then the expected Betti numbers of the Rips complex satisfy $0\leq \mathbb{E}[\beta_p]\leq 1, \forall p\geq 1$
    
    This makes the expected work complexity of Ripser++: \begin{equation}
        O(n_{p}\log(n_0)+n_{p+1})
    \end{equation}
    with expected depth complexity of \begin{equation}
        O(n_{p}\log(n_0)+n_0 (p+1))
    \end{equation}
    which is the complexity of computing $\textsf{ker}([\partial_p])$.

    To compute the Betti number only, the expected complexity of Ripser++ is \begin{equation}
        O(n_{p+1})
    \end{equation}
    and the expected depth complexity is \begin{equation}
        O(n_0(p+1))
    \end{equation}
    \end{theorem}
    \begin{proof}
    Let:
        \begin{equation}
            q_{n_0,p}\triangleq 1-(\frac{C}{n_0})^{(p+1)\log(1+\frac{1}{\Delta})\log(\Delta)}
        \end{equation} 
         where $q_{n_0,p}$ is the constant lower bound parameterizing Equation \ref{eq: apparent-lowerbound}. 
        
      \textbf{Claim 1. } Using $q_{n_0,p}$, there is an asymptotically constant Expected Number of Nonapparent Simplices $\sigma$ for column $[\sigma] \in \textsf{ker}([\partial_p])$: 

        If $n_0 \geq C^{\frac{\log(1+\frac{1}{\Delta})\log(\Delta)}{\log(1+\frac{1}{\Delta})\log(\Delta) -1 }}$, then we obtain from Theorem \ref{thm: apparent-probability-numsample} and the definition of $1-q_{n_0,p}$ that:
        \begin{equation}
            (1-q_{n_0,p})n_0^{p+1}= n_0^{p+1} (\frac{C}{n_0})^{(p+1)\log(1+\frac{1}{\Delta})\log(\Delta)} = (\frac{C^{(p+1)\log(1+\frac{1}{\Delta})\log(\Delta)}}{n_0^{(p+1)(\log(1+\frac{1}{\Delta})\log(\Delta)-1)}}) 
        \end{equation} 
        \begin{equation}
            \leq (\frac{C^{(p+1)\log(1+\frac{1}{\Delta})\log(\Delta)}}{C^{(p+1)(\log(1+\frac{1}{\Delta})\log(\Delta))}})=1
        \end{equation}
        where the last inequality follows by $n_0 \geq C^{\frac{\log(1+\frac{1}{\Delta})\log(\Delta)}{\log(1+\frac{1}{\Delta})\log(\Delta) -1 }}$.
        
        Since $n_0^{p+1}$ is an upper bound on the number of simplices $\sigma \in \textsf{ker}(\partial_p)\cap {X \choose (p+1)}$ that can be apparent, by union bound, we have the following bound on the expected number of simplices $\sigma$ that cannot be apparent:
        \begin{equation}
            \mathbb{E}[\mathbf{1}[{\sigma \in \textsf{ker}(\partial_p)\cap {X \choose (p+1)}: \not \exists \tau, (\sigma,\tau) \text{ is apparent}}]] \leq (1-q_{n_0,p})n_0^{p+1} \leq 1
        \end{equation}
       \textbf{2. }A Sufficient Condition to achieving $P(\exists \tau: (\sigma,\tau) \text{ is apparent}) \geq q_{n_0,p}$: 
       
        Let $n^*(q_{n_0,p})= \frac{\log((1-q_{n_0,p})^{-1})}{\log((1-\frac{1}{\Delta})^{-1})}$ from Theorem \ref{thm: apparent-probability-numsample}. We know that:
        \begin{subequations}
         \begin{equation}
            n^*(q_{n_0,p}) = \frac{\log((\frac{n_0}{C})^{(p+1)\log(1+\frac{1}{\Delta})\log(\Delta)})}{\log((1-\frac{1}{\Delta})^{-1})} \leq \frac{{(p+1)\log(1+\frac{1}{\Delta})\log(\Delta)}\log(\frac{n_0}{C})}{\log(1+\frac{1}{\Delta})} 
        \end{equation}
        \begin{equation}
            \leq (p+1)\log(\Delta)\log(\frac{n_0}{C})\leq (p+1)\log(\Delta)\sqrt{n_0}
        \end{equation}
        \end{subequations}
        This allows us to design a sufficient condition for Theorem \ref{thm: apparent-probability-numsample}, namely that:
        \begin{subequations}
        \begin{equation}
            {n_0} \geq (p+1)\log(\Delta)\sqrt{n_0} \Rightarrow 
        \end{equation}
        \begin{equation}
            n_0 \geq n^*(q_{n_0,p})= \frac{\log((1-q_{n_0,p})^{-1})}{\log((1-\frac{1}{\Delta})^{-1})} \Rightarrow P(\exists \tau: (\sigma,\tau) \text{ is apparent}) \geq q_{n_0,p}
        \end{equation}
        \end{subequations}
       Thus, taking both sufficient conditions: 
       
       ${n_0} \geq (p+1)\log(\Delta)\sqrt{n_0}$ and $n_0 \geq C^{\frac{\log(1+\frac{1}{\Delta})\log(\Delta)}{\log(1+\frac{1}{\Delta})\log(\Delta) -1 }}$, we have that:
       \begin{subequations}
       \begin{equation}
           {n_0}\geq \Omega(\max(C^{\frac{\log(1+\frac{1}{\Delta})\log(\Delta)}{\log(1+\frac{1}{\Delta})\log(\Delta) -1 }}, (p+1)^2\log^2(\Delta))) 
       \end{equation}
       \begin{equation}
           \Rightarrow \mathbb{E}[\mathbf{1}[{\sigma: [\sigma] \in \textsf{ker}([\partial_1]), \not\exists \tau, (\sigma,\tau) \text{ is apparent}}]] \leq 1
       \end{equation}
       \end{subequations}
       Since $\mathbb{E}[\mathbf{1}[{\sigma: [\sigma] \in \textsf{ker}([\partial_p]), \not\exists \tau, (\sigma,\tau) \text{ is apparent}}]]\geq \mathbb{E}[\beta_p]$, we must have $0\leq \mathbb{E}[\beta_p]\leq 1$.
       
        In \cite{zhang2025computing} a complexity bound is shown for computing the dual matrix reduction problem for persistent homology. Since Ripser++ is also based on dual matrix reduction, we obtain the work complexity bound for Ripser++ is:
        \begin{equation}
           O( \beta_{p}n_{p}\log(n_0)+n_{p+1}) 
        \end{equation}
        where the $O(n_{p+1})$ is the total amount of simplices from the coboundary that must be enumerated. 

        Since Ripser++ enumerates these coboundaries in parallel, this part of the work complexity becomes $O(n_0 (p+1))$, which is the work it takes to enumerate a single coboundary.

        This thus gives a depth complexity of $O( \beta_{p}n_{p}\log(n_0)+n_0 (p+1))$
    \end{proof}
    \subsection{Under Non-trivial Constant Number of ``Topological Punctures"}
    Consider a finitely ``topologically punctured" uniform sampling distribution over the hypercube. We define this realistic setting as a  sampling condition.
    
 e.g.    A probability integral transform of 3D points sensed by lidar~\cite{wandinger2005introduction} from a room where the separated furniture cannot be penetrated.
    \begin{condition}[$k$-Cavity Sampling Condition]\label{cond: uniform-k-cavity}
        \begin{equation}
          P_{k\text{-cavity}}(x)\triangleq P(x \sim U([0,1]^{d_{amb}}\setminus \bigcup_{C \in \mathcal{C}_k}C)) 
    \end{equation}
    where $\mathcal{C}_k$ is a collection of $k$ disjoint cavities of $[0,1]^{d_{amb}}$ with
    \begin{equation}
        \sum_{C \in C_k}\textsf{vol}(C)=V\text{ where } V<1
    \end{equation}
    $\bullet$ Assume the Euclidean distance.
    \end{condition}
    We show that with 
    the new sampling condition, Condition \ref{cond: uniform-k-cavity}, our algorithm can recover the number of cavities and the expected complexity is the same as Theorem \ref{thm: expectedcomplexity-ripserpp} up to a multiple of the parameter $k$.
    \begin{corollary}(Complexity Under $k$-Cavity Sampling Condition)
    \label{cor: k-cavity-corollary}
    
        Assuming the $k$-Cavity Uniform Sampling Condition (Condition \ref{cond: uniform-k-cavity}), the distance $R$ from Lemma \ref{lemma: appendix-k-cavities-betti}, and the sampling lower bound conditions of Theorem \ref{thm: expectedcomplexity-ripserpp}.
        Let: \begin{equation}
            v_{cell}:= \frac{1}{1-V}(\frac{R}{2\sqrt{d_{amb}}})^{d_{amb}}
        \end{equation}
        be the probability of a random sample from $P_{k\text{-cavity}}$ hit a cell of length $\frac{R}{2\sqrt{d_{amb}}}$.
        
        Assume that for any pair of points with: 
        \begin{equation}
           d(p,q)\leq R\Rightarrow  \mathbf{n}(p)\cdot \mathbf{n}(q)>0, \forall p,q \in \partial(C)
        \end{equation}
        For any $\delta: 0<\delta<1$ and 
        \begin{equation}
            {n_0}\geq \Omega(\max(C^{\frac{\log(1+\frac{1}{\Delta})\log(\Delta)}{(\log(1+\frac{1}{\Delta})\log(\Delta) -1) }}, (\textsf{dim}(\textsf{Rips}(X)))^2\log^2(\Delta), \frac{\log(\frac{\delta}{v_{cell}})}{\log(1-v_{cell})}))
        \end{equation}
        
        Then the expected Betti numbers of the Rips complex satisfy:
        \begin{equation}
           k\leq  \mathbb{E}[\beta_{p}]\leq k+1, \forall p: 1\leq p\leq d_{amb}-1   
        \end{equation}
        with probability atleast $1-\delta$.

        This makes the expected work complexity of Ripser++: \begin{equation}
        O( kn_{p}\log(n_0)+n_{p+1})
    \end{equation}
    with expected depth complexity of \begin{equation}
        O( kn_{p}\log(n_0)+n_0 (p+1))
    \end{equation}
    which is the complexity of computing $\textsf{ker}([\partial_p])$.

    To compute the Betti number only, the expected complexity of Ripser++ is \begin{equation}
        O(n_{p+1})
    \end{equation}
    and the expected depth complexity is \begin{equation}
        O(n_0(p+1))
    \end{equation}
    These happen with probability atleast $1-\delta$.
    \end{corollary}
    \begin{proof}
        Since the $k$-Cavity uniform distribution is still uniform. It is just uniform over a restricted support. We must have:
        \begin{equation}
        P_{k\text{-cavity}}(E)=\frac{U_{[0,1]^{d_{amb}}}(E)}{1-V}, \forall E \in \sigma(P_{k\text{-cavity}})
        \end{equation}
        where $\sigma(P_{k\text{-cavity}})$ is the sigma-algebra of $P_{k\text{-cavity}}$ and $U_{[0,1]^{d_{amb}}}$ is the uniform distribution over the hypercube $[0,1]^{d_{amb}}$. 

        In Proposition \ref{prop: ball-intersection-polytope} the inscribed polytope from the intersection of $p+1$ balls only needs to span $p+1$ points. This is because we cannot guarantee a reflection on the polytope. The lower bound from Lemma \ref{lemma: apparent-probability} remains.

        Since the probability of not sampling from a $d_{amb}$-dimensional cell of length $\frac{R}{2\sqrt{d_{amb}}}$ from $[0,1]^{d_{amb}}$ in $n_0$ trials is :
        \begin{equation}
            1-v_{cell}
        \end{equation}
        and since there are (counting fractional cells)
        \begin{equation}
            m_{ncells}:=\frac{1}{v_{cell}}
        \end{equation}
        many $d_{amb}$-dimensional cells over the volume $1-V$ in $[0,1]^{d_{amb}}$.
        With a probability of:
        \begin{equation}
              P_{n_0}=1-m_{ncells}(1-v_{cell})^{n_0}
        \end{equation}
        we can fill each $d_{amb}$-dimensional cell of length $\frac{R}{2\sqrt{d_{amb}}}$ to obtain the sampling conditions of Lemma \ref{lemma: appendix-k-cavities-betti} of the Appendix.

        Thus, letting $1-\delta\leq  P_{n_0}$, we need:
        \begin{equation}
        n_0\geq \frac{\log(\frac{\delta}{m_{ncell}})}{\log(1-v_{cell})}
\end{equation}
to obtain the sufficient condition with probability atleast $1-\delta$ for Lemma \ref{lemma: appendix-k-cavities-betti}.


        By Lemma \ref{lemma: appendix-k-cavities-betti}, we then have: 
        \begin{equation}
            k\leq \mathbb{E}[\beta_p] , \forall p=1,...,d_{amb}-1
        \end{equation}
       The condition:
       \begin{equation}
            \mathbf{n}(p)\cdot \mathbf{n}(q)>0
        \end{equation}
       means that the angle between the normals of any pair of points on $\partial(C)$ must be atmost $90^{\circ}$. This means their tangents are atleast $(180-90)^{\circ}$ apart. This is larger than $60^{\circ}$. 
       
       For any set of $p$ points $\sigma$ outside $C$, any apex point that completes $\sigma$ to a $p$-dimensional simplex $\tau$ must form angles more than $60^{\circ}$ with any pair from $\sigma$. Thus, $\tau$ must form a strict constant fraction of any inscribed polytope of $p+1$ radius-$R$ balls from Proposition \ref{prop: ball-intersection-polytope}. This means that in Lemma \ref{lemma: apparent-probability} and Theorem \ref{thm: apparent-probability-numsample}, we can lower bound $\Delta$ from Equation \ref{eq: Delta} by the number:
       \begin{equation}
           C'(p)\frac{\Delta}{2^{p+1}}: C'(p)<1
       \end{equation}
       where $C'(p)$ is the ratio between a $p+1$-dimensional equilateral simplex and a simplex attaching an apex to the largest volume $p$-dimensional simplex $\sigma_{vol}\in \textsf{Rips}(X)$ at $90^{\circ}$ to all pairs of points in $\sigma_{vol}$. 

    We have that all pairs of points outside $\mathcal{C}_k$ satisfy Lemma \ref{lemma: apparent-probability}. Thus, we satisfy the conditions of Theorem \ref{thm: expectedcomplexity-ripserpp} on the uniform distribution $P_{k\text{-cavity}}$, we thus have that:
        \begin{equation}
        \mathbb{E}[\beta_p-k] \leq 1, \forall p=1,...,d_{amb}-1
        \end{equation}
        with probability $P_{n_0}$.
        Thus:
        \begin{equation}
            k\leq \mathbb{E}[\beta_p] \leq k+1, \forall p=1,...,d_{amb}-1
        \end{equation}
        The expected complexity becomes as in Theorem \ref{thm: expectedcomplexity-ripserpp}, but with a nontrivial expected Betti number which is $O(k)$.
    \end{proof}
    \subsection{From Geometry to Combinatorics through Locality}
    A space $X$ where every sample $x \in X$ can intervene with an independent \textbf{local metric}, meaning that for every $x \in X$ there is a function $d_x: X \rightarrow \mathbb{R}^+\cup \{\infty\}$, is equivalent to a weighted directed graph. Certainly each $x\in X$ has a neighborhood defined by $\{y \in X: d_x(y)\in \mathbb{R}^+\}$, thus we can form the graph \begin{equation}
        G_{X,d_{\bullet}}\triangleq(X,\{(x,y): d_x(y)<\infty\})
    \end{equation}
    
    If there is symmetry across local metrics on pairs of samples, meaning: $d_x(y)=d_y(x), \forall x,y \in X$ then $(X,d_x)$ forms a weighted undirected complete graph also defined similarly as \begin{equation}
        G_{X,d_{\bullet},sym}=(X,\{(x,y): d_x(y)<\infty \text{ and } d_x(y)=d_y(x) , \forall x,y \in X\})
    \end{equation}

    As in the metric case, we can define an \textbf{aspect ratio} on a weighted (un)directed graph $G_{X,d_{\bullet}}$, denoted $a(G_{X,d_{\bullet}})_{d_{\bullet}}$, by:
    \begin{equation}
        a(G_{X,d_{\bullet}})_{d_{\bullet}}\triangleq \frac{\max_{x,y \in X} d_x(y)}{\min_{x,y \in X} d_x(y)}
    \end{equation}

    In analogy to the case of the uniform distribution on the hypercube, for  $G_{X,d_{\bullet},sym}$ we can also derive similar probabilistic bounds. 
    \begin{lemma}\label{lemma: combinatorical-apparent-probability}
    For a $p$-dimensional simplex $\sigma \in \textsf{ker}(\partial_p)\cap {X \choose (p+1)}$ with $\textsf{diam}(\sigma)=t$, for $t \in [0,1]$.
    Let $G_{X,d_{\bullet}}$ denote a complete weighted undirected graph. 
    
        If the local metrics $d_x$ are chosen for each $x \in X$ to  satisfy $d_x(y) \in U([0,1]), \forall y \in X$ with the only dependency of $d_x(y)=d_y(x)$, then 
        \begin{equation}
           P(\exists \tau: (\sigma,\tau) \text{ is apparent} \mid \textsf{diam}(\sigma)=t)\geq  {t}^{p+1}
        \end{equation}
    \end{lemma}
    \begin{proof}
        According to Lemma \ref{lemma: apparentpair-diam-peq1}, $\sigma \in z_p$, for some $ z_p \in  \textsf{ker}(\partial_p)$ has some $\tau: \partial_{p+1}(\tau) \in \textsf{im}(\partial_{p+1})$ with $(\sigma,\tau)$ apparent if there is a $x \in X$ with 
        \begin{equation}\label{eq: locally-apparent}
            d_x(x_i)\leq t, \forall x_i \in \sigma
        \end{equation} 
        Such a $x \in X$ can be found if the local metric $d_x$ satisfies the distance bounds of Equation \ref{eq: locally-apparent}. By assumption of the chosen local metrics $d_x$, this will occur with probability $t^{p+1}$. This gives the lower bound.
    \end{proof}
    \begin{theorem}\label{thm: localmetric-expectedcomplexity}
    
        Assuming the conditions of Lemma \ref{lemma: combinatorical-apparent-probability}, let $a(\textsf{Rips}(X))_{d_{\bullet}}=\rho$:
        \begin{equation}
            (1-q_{n_0,p}) \triangleq (\frac{C}{n_0})^{p}
        \end{equation}
        
        If $\rho$ satisfies: 
        \begin{equation}\label{eq: localmetric-lowerbound-q}
            \frac{\rho^{-(p+1)}\rho^{-({p+1 \choose 2} +1)}}{p+1+ {p+1 \choose 2}}\geq q_{n_0,p}, 
        \end{equation}
    Then the expected Betti numbers of the Rips complex satisfy $\mathbb{E}[\beta_p]=O(1), \forall p\geq 1$. 
    
    As in Theorem \ref{thm: apparent-probability-numsample}, we get similar expected complexity bounds. 
    
    The expected work complexity of Ripser++ is \begin{equation}
        O(n_{p}\log(n_0)+n_{p+1})
    \end{equation} with expected depth complexity of 
    \begin{equation}
        O(n_{p}\log(n_0)+n_0 (p+1))
    \end{equation}
    which is the complexity of computing $\textsf{ker}([\partial_{p}])$.
    \end{theorem}
    \begin{proof}
        For each $\sigma \in \textsf{ker}(\partial_p) \cap {X \choose (p+1)}$, we can compute that
        \begin{subequations}
        \begin{equation}
            P(\exists \tau: (\sigma,\tau) \text{ is apparent}) 
        \end{equation}
        \begin{equation}
            \geq \int_{t \in [0,1]} P(\exists \tau: (\sigma,\tau) \text{ is apparent} \mid  \textsf{diam}(\sigma)=t)P(\textsf{diam}(\sigma)=t) dt
        \end{equation}
        \begin{equation}
            \geq \int_{t \in [0,1]}t^{p+1}t^{p+1 \choose 2} dt
        \end{equation}
        \begin{equation}
            \geq \frac{t^{p+1}t^{{p+1 \choose 2}+1}}{p+1+ {p+1 \choose 2}}
        \end{equation}
        \begin{equation}
            \geq \frac{\rho^{-(p+1)}\rho^{-({p+1 \choose 2}+1)}}{p+1+ {p+1 \choose 2}}
        \end{equation}
        \begin{equation}
            \geq q_{n_0,p}
        \end{equation}
        \end{subequations}
        Taking the complementary event of $E_{\sigma}\triangleq (\exists \tau: (\sigma,\tau) \text{ is apparent})$, and using the definition of $1-q_{n_0,p}$, we have by union bound that:
        \begin{subequations}
        \begin{equation}
            \mathbb{E}[\beta_1]\leq \mathbb{E}[\mathbf{1}[\sigma \in \textsf{ker}(\partial_p): \not \exists \tau: (\sigma, \tau) \text{ is apparent}]]\leq (1-q_{n_0,p})n_0^p 
        \end{equation}
        \begin{equation}
            =(1-\frac{\rho^{-(p+1)}\rho^{-({p+1 \choose 2}+1)}}{p+1+ {p+1 \choose 2}})n_0^p\leq \frac{C n_0^p}{n_0^p}=
            O(1)
        \end{equation}
        \end{subequations}

        The remainder of the proof follows verbatim as in Theorem \ref{thm: apparent-probability-numsample}.
    \end{proof}
    In Theorem \ref{thm: localmetric-expectedcomplexity}, the bound on the aspect ratio of Equation \ref{eq: localmetric-lowerbound-q}, requires a bound on $\rho$. In fact, $\rho$ must be bounded by a constant. 
    \begin{corollary}
        In order for Theorem \ref{thm: localmetric-expectedcomplexity} to hold, the aspect ratio must satisfy:
        \begin{equation}
            \rho=O(1)
        \end{equation}
    \end{corollary}
    \begin{proof}
    Writing out the equality for $1-q_{n_0,p}$. 
    \begin{equation}
        1-q_{n_0,p}=1-\frac{\rho^{-(p+1)}\rho^{-({p+1 \choose 2}+1)}}{p+1+ {p+1 \choose 2}}\leq \frac{C}{n_0^p}
        \end{equation}
        Taking a lower bound on $1-q_{n_0,p}$:
        \begin{equation}
            \exists C'>0: 1-\frac{1}{C'\rho^{p^2} p^2}\leq 1-q_{n_0,p}\leq \frac{C}{n_0^p}  
        \end{equation}
        Since $p$ is a constant, we must have that for $n_0\rightarrow \infty$ that $\rho \not\rightarrow \infty$. If $\rho\rightarrow \infty$, then we would have $1\leq \frac{C}{n_0^p}$ for $n_0 \rightarrow \infty$, Contradiction.
        
        Thus, $\rho$ cannot have any dependency with $n_0$. It is thus bounded by a constant.
    \end{proof}
    \begin{remark}
    In Theorem \ref{thm: expectedcomplexity-ripserpp}, the aspect ratio can depend on $n_0$ and still achieve the expected complexities. This is because in Lemma \ref{lemma: apparent-probability}, we have the following lower bound:
    \begin{equation}
         P(\exists \tau: (\sigma,\tau) \text{ is apparent} \mid \textsf{diam}(\sigma)=t)\geq(1-(1-\frac{1}{(d_{amb}+1)!2^{p+1}\rho^{d_{amb}}})^{n_0})
    \end{equation}
    For large $n_0$ this goes to $1$ even when $\rho$ has a dependency on $n_0$. 

    However, in Lemma \ref{lemma: combinatorical-apparent-probability}, when we assume that each edge obtains a distance i.i.d. uniform from $[0,1]$, we have that:
    \begin{equation}
           P(\exists \tau: (\sigma,\tau) \text{ is apparent} \mid \textsf{diam}(\sigma)=t)\geq  {t}^{p+1}
    \end{equation}
    This does not go to one with $n_0$, which is the reason for the stronger requirement of $\rho=O(1)$. Without the compactness from the uniform distribution on the points, increasing the number of points does not bring the points closer to each other. This prevents an asymptotic convergence with respect to $n_0$ for the probability of an apparent pair. This puts the constant bound on $\rho$.
    \end{remark}
	\section{GPU and System Kernel Development for Ripser++}
	We have laid the mathematical and algorithmic foundation for Ripser++ in Section \ref{sec: contributions}. GPU has two distinguished features to deliver high performance compared with CPU. First, GPU has very high memory bandwidth for massively parallel data accesses between computing cores and the on-chip memory (or device memory). Second, Warp is a basic scheduling unit consisting of multiple threads. GPU is able to hide memory access latency by warp switching, benefiting from zero-overhead scheduling by hardware. 
	
	To turn the elegant mathematics proofs and parallel algorithms into GPU-accelerated computation for high performance, we must address several technical challenges. First, the capacity of GPU device memory is much smaller than the main memory in CPU. Therefore, GPU memory performance is critical for the success of Ripser++. Second, only a portion of the computation is suitable for GPU acceleration. Ripser++ must be a hybrid system, effectively switching between GPU and CPU, which increases system development complexity. Finally, we aim to provide high performance computing service in the TDA community without a requirement for users to do any GPU programming. Thus, the interface of Riper++ is GPU independent and easy to use. In this section, we will explain how we address these issues for Ripser++.
    \begin{figure}[!h]\includegraphics[width=1.0\columnwidth]{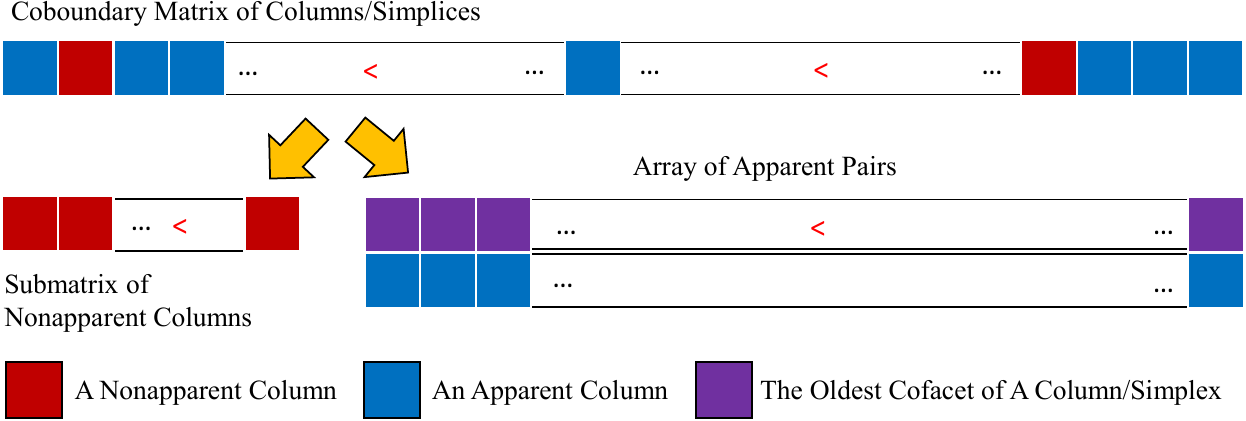}
		\caption{\small{After finding apparent pairs, we partition the coboundary matrix columns into apparent and nonapparent columns. The apparent columns are sorted by the coboundary matrix row (the oldest cofacet of an apparent column) and stored in an array of pairs; while the nonapparent columns are collected and sorted by coboundary matrix order in another array for submatrix reduction.}}
		\label{fig: post-apparent}
		\centering
	\end{figure}
    
    \subsection{Core System Optimizations}
    \label{sec: performance-optimization}
	The expected performance gain of finding apparent pairs on GPU comes from not only the parallel computation on thousands of cores but also the concurrent memory accesses at a high bandwidth, where the apparent pairs can be efficiently aggregated. In a sequential context, an apparent pair (a row index and a column index) of the coboundary matrix may be kept in a hashmap as a key-value pair with the complexity of $O(1)$. However building a hashmap is not as fast as constructing a sorted continuous array \cite{Kim:2009:SVH:1687553.1687564} in parallel. So in our implementation, the apparent pairs are represented by a key-value pair $(\tau,\sigma)$ where $\tau$ is the oldest cofacet of simplex $\sigma$  
	and stored in an aligned continuous array of pairs. This slightly lowers the read performance because we need a binary search to locate a desired apparent pair. But this is a cost-effective implementation since the number of insertions of apparent pairs are actually three orders of magnitude higher than that of reads (See Table \ref{tab: hashtable} in Section \ref{sec: experiments}) after finding apparent pairs. Figure \ref{fig: post-apparent} presents how we collect apparent pairs on GPU, where each thread works on a column of coboundary matrix and writes to the output array in parallel.
	
	On top of the sorted array, we add a hashmap as one more layer to exclusively store persistence pairs discovered during the submatrix reduction. 
	Apparent pairs, and in fact persistence pairs in general can be stored as key-value pairs since no two persistence pairs $(\sigma,\tau)$ and $(s',t')$ have the possibility of $s=\sigma'$ or $t=t'$, as any equality would contradict Algorithm \ref{alg:standard-algorithm}.
	Figure \ref{fig: pairs-datastructure} explains our two layer design of a key-value storage data structure for persistence pairs in detail.

    \begin{figure}[!h]
		\includegraphics[width=1.0\columnwidth]{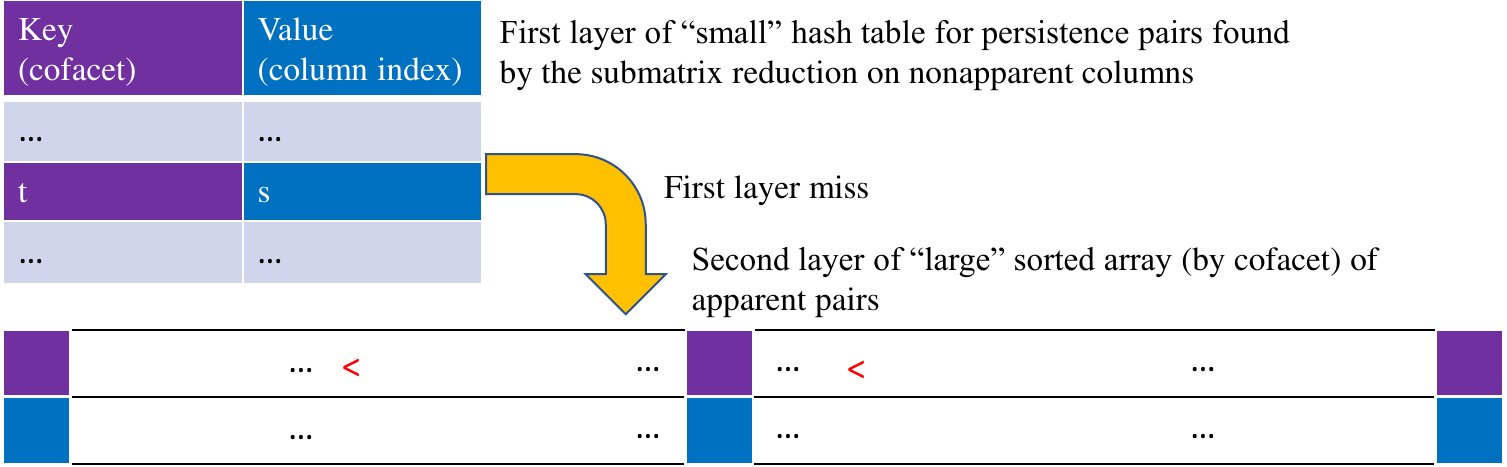}
		\caption{\small{Two-layer data structure for persistence pairs. Apparent pair insertion to the second layer of the data structure is illustrated in Figure \ref{fig: post-apparent}, followed by persistence pair insertion to a small hashmap during the submatrix reduction on CPU. A key-value read during submatrix reduction involves atmost two steps: first, check the hashmap; second, if the key is not found in the hashmap, use a binary search over the sorted array to locate the key-value pair (see the arrow in the figure).}}
		\label{fig: pairs-datastructure}
		\centering
	\end{figure} 
\subsection{Filtration Construction with Clearing}
	Before entering the matrix reduction phase, the input simplex-wise filtration must be constructed and simplified to form coboundary matrix columns. We call this Filtration Construction with Clearing. This requires two steps: filtering and sorting. Both of which can be done in parallel, in fact massively in parallel. Filtering removes simplices that we don't need to reduce as they are equivalent to zeroed columns. As presented in Algorithm \ref{alg:filtering}, these simplices are filtered out: the ones having higher diameters than the $\text{threshold}$ (see Section \ref{sec: appendix-enclosing-radius} for the enclosing radius condition that can be applied even when no $\text{threshold}$ is explicitly specified) and paired simplices (the clearing lemma \cite{chen2011persistent}).
	\begin{figure}[!h]
		\includegraphics[width=0.9\columnwidth]{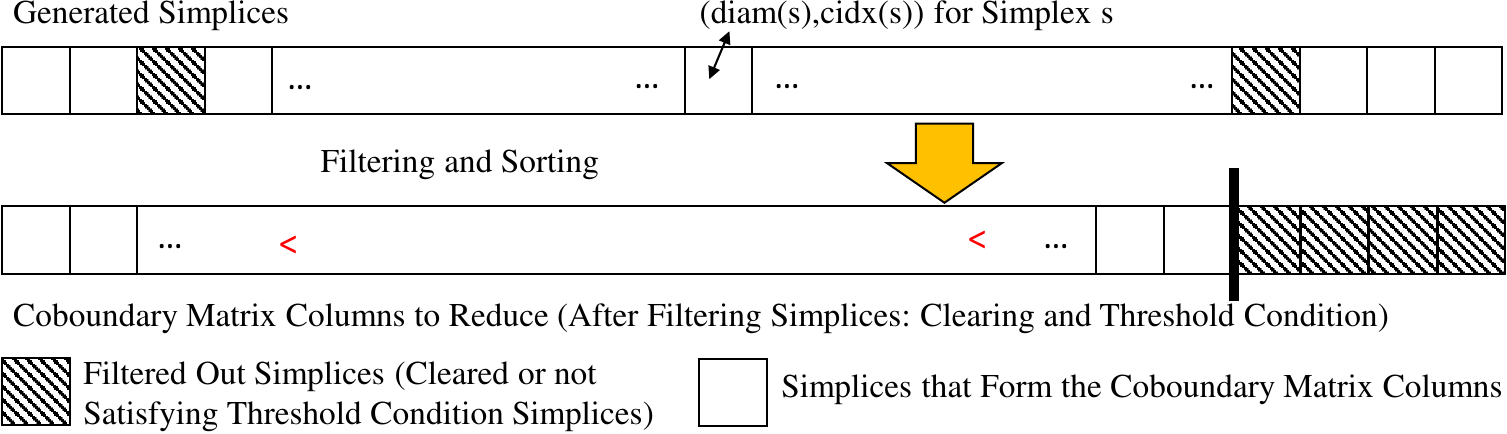}
		\caption{\small{The Filtration Construction with Clearing Algorithm for Full Rips Filtrations}}
		\label{fig:filtrationconstruction}
		\centering
	\end{figure}
    
	Sorting in the reverse of the order given in Section \ref{sec: filtration-order} is then conducted over the remaining simplices. This is the order for the columns of a coboundary matrix. 
	The resulting sequence of simplices is then the columns to reduce for the following matrix reduction phase. Algorithm \ref{alg:fullripsalgorithm-sorting} presents how we construct the full Rips filtration with clearing. Our GPU-based algorithms leverage the massive parallelism of GPU threads and high bandwidth data processing in GPU device memory. For a sparse Rips filtration, our construction process uses the cofacet enumeration of Ripser per thread, as in Algorithm \ref{alg:cofacets_sparse}, and is similar to \cite{zomorodian2010fast}.
	\begin{algorithm}[!h]
    \SetKwComment{Comment}{//}{}
    \caption{Filtering the Columns on GPU}
    \label{alg:filtering}
    \KwIn{$\pmb{P}$: the persistence pairs in the form (cofacet, simplex) discovered in the previous dimension; $\text{threshold}$: the max diameter allowed for a simplex; $\textsf{diam}(\cdot)$: the diameter of a simplex; $\textsf{cidx}(\cdot)$: the combinatorial index of a simplex.}
    \KwOut{$\pmb{C}$: an array of simplices, where each element includes a diameter paired with a combinatorial index; $\text{tid}$: an array of flags marking which columns are kept (filtered in).}
    \SetKwProg{proc}{Procedure}{:}{}
    \SetKwFunction{filtercols}{filter-columns-kernel}
    \proc{\filtercols{$\pmb{C}$, $\pmb{P}$, $\text{threshold}$, $\text{tid}$ }}{
        $\textsf{cidx}(\sigma) \gets \text{tid}$ \;
        \uIf{$\nexists t \text{ such that } (\tau, \sigma) \in \pmb{P} \text{ AND } \textsf{diam}(\sigma) \leq \text{threshold}$}{
            $\textsf{diam}(\pmb{C}[\text{tid}]) \gets \textsf{diam}(\sigma)$ \;
            $\textsf{cidx}(\pmb{C}[\text{tid}]) \gets \textsf{cidx}(\sigma)$ \;
            $\text{tid}[\text{tid}] \gets 1$ \;
        }
        \Else{
            $\textsf{diam}(\pmb{C}[\text{tid}]) \gets -\infty$ \;
            $\textsf{cidx}(\pmb{C}[\text{tid}]) \gets +\infty$ \;
            $\text{tid}[\text{tid}] \gets 0$ \;
        }
    }
    
\end{algorithm}
	\begin{algorithm}[h]
    \SetKwComment{Comment}{//}{}
    \caption{Use GPU for Full Rips Filtration Construction with Clearing}
    \label{alg:fullripsalgorithm-sorting}
    \KwIn{$\pmb{P}$, $\text{threshold}$, $\text{tid}$: as defined in Algorithm \ref{alg:filtering}; $n_0$: the number of points; $p$: the current dimension for simplices to construct.}
    \KwOut{$\pmb{C}$: as defined in Algorithm \ref{alg:filtering}.}
    
    $\pmb{C} \gets \emptyset$ \;
    $\text{tid} \gets \{0, \ldots, 0\}$ \;
    \texttt{filter-columns-kernel}($\pmb{C}$, $\pmb{P}$, $\text{threshold}$, $\text{tid}$) \Comment{${n_0 \choose p+1}$ threads launched}
    $\text{len} \gets \texttt{GPU-reduction(flagarray)}$ \;
    \texttt{GPU-sort}($\pmb{C}$)  \Comment{Sort entries of $\pmb{C}$ in coboundary filtration order: decreasing diameters, increasing combinatorial indices; restrict $\pmb{C}$ to indices 0 to $\text{len}-1$ afterwards.}
    
\end{algorithm}
	\subsection{Warp-based Filtering}
	\label{sec: warp-filtering}
	There is also a standard technique for filtering on GPU which is warp-based. A warp is a unit of 32 threads that work in SIMT (Single Instruction Multiple Threads) fashion, executing the same instruction on multiple data. This concept is very different from MIMD (Multiple Instruction Multiple Data) parallelism \cite{zhang2019hypha}. Warp-filtering does not change the complexity of filtering in $O(N)$, where $N$ is the number of elements to filter; however it can allow for insertion into an array using 32 threads (a unit of a warp) at a time in SIMT fashion. We use warp-based filtering for sparse computation and as an equivalent algorithm to Algorithm \ref{alg:fullripsalgorithm-sorting}. Warp-based filtering involves each warp atomically grabbing an offset to the output array and communicating within the warp to determine which thread will write what selected array element to the array beginning at the thread's offset within the warp.
	
	\subsection{Using Ripser++}
	The command line interface for Ripser++ is the same as Ripser to make the usage of Ripser++ as easy as possible to TDA specialists. However, Ripser++ has a {\tt -{}-sparse} option which manually turns on, unlike in Ripser, the sparse computation algorithm for Vietoris-Rips barcode computation involving a sparse number of neighboring relations between points. Python bindings for Ripser++ will be available to allow users to write their own preprocessing code on distance matrices in Python as well as to aid in automating the calling of Ripser++ by removing input file I/O.    
	

 \section{Experiments}
\label{sec: experiments}
    All experiments are performed on a powerful computing server. It consists of an NVIDIA Tesla V100 GPU that has 5120 FP32 cores and 2560 FP64 cores for single- and double-precision floating-point computation. The GPU device memory is 32 GB High Bandwidth Memory 2 (HBM2) that can provide up to 900 GB/s memory access bandwidth. The server also has two 14 core Intel XEON E5-2680 v4 CPUs (28 cores in total) running at 2.4 GHz with a total of 100 GB of DRAM. The datasets are taken from the original Ripser repository on Github \cite{ripser} and the repository of benchmark datasets from \cite{otter2017roadmap}.

\subsection{The Empirical Relationship amongst Apparent Pairs, Emergent Pairs, and Shortcut Pairs}
	There exists three kinds of persistence pairs of the Vietoris-Rips filtration, in fact for any filtration with a simplex-wise refinement. Using the terminology of \cite{bauer2019ripser}, these are apparent (Definition \ref{def:apparent_pair}) \cite{delgado2014skeletonization, henselman2016matroid, bauer2019ripser, mendoza2017parallel}, shortcut \cite{bauer2019ripser}, and emergent pairs \cite{bauer2019ripser, zhang2019hypha}. By definition, they are known to form a tower of sets ordered by inclusion (expressed by Equation (\ref{eq:pairs})). We will show a further empirical relationship amongst these pairs involving their cardinalities. 
	\begin{equation}
	\label{eq:pairs}
    \overbrace{
	\underbrace{\text{apparent pairs}}_\text{large cardinality} \subseteq \text{shortcut pairs} \subseteq \text{emergent pairs} \subseteq \text{persistence pairs}}^\text{the difference in cardinalities is ``small"}
    \end{equation}
	The first empirical property is that the cardinality difference amongst all of the sets of pairs is very small compared to the number of pairs
	, assuming Ripser's framework of computing cohomology and using the simplex-wise filtration ordering in Section \ref{sec: filtration-order}. Thus there exist a very large number of apparent pairs. The second is that the proportion of apparent pairs to columns in the cleared coboundary matrix increases with dimension (see Section \ref{sec: appendix-apparent-percentage} in Appendix), assuming no diameter threshold criterion as in the first property.
	
	Table \ref{tab:all-pairs} shows the percentage of apparent pairs up to dimension $p$ is extremely high, around 99\%. Since the number of columns of a cleared coboundary matrix equals to the number of persistence pairs, the number of nonapparent columns for submatrix reduction is a tiny fraction of the original number of columns in Ripser's matrix reduction phase.

\begin{table*}[h]
	\centering
	\caption{\small{Empirical Results on Apparent, Shortcut, Emergent Pairs}}
	\label{tab:all-pairs}
	\begin{tabular}{@{}lrrrrrrr@{}} \toprule
		
		&   &   & \scriptsize apparent         &\scriptsize  shortcut  &\scriptsize emergent &\scriptsize all & \scriptsize percentage of \\
		\scriptsize Datasets & \scriptsize $n$ &\scriptsize $p$ & {\scriptsize pairs} & \scriptsize pairs &\scriptsize pairs & \scriptsize pairs & \scriptsize apparent pairs\\
		
		\midrule
		\scriptsize {\it celegans}           & \scriptsize 297 & \scriptsize 3 & \scriptsize 317,664,839 & \scriptsize 317,723,916 & \scriptsize 317,723,974 & \scriptsize 317,735,650 & \scriptsize 99.9777139\% \\
		\scriptsize {\it dragon1000}         & \scriptsize 1000 & \scriptsize 2 & \scriptsize  166,132,946  & \scriptsize 166,160,587   & \scriptsize 166,160,665 & \scriptsize 166,167,000 & \scriptsize 99.9795062\%\\
		\scriptsize {\it HIV}                & \scriptsize  1088 & \scriptsize 2 & \scriptsize 214,000,996 & \scriptsize 214,030,431 & \scriptsize 214,040,521 & \scriptsize 214,060,736 & \scriptsize 99.9720920\%\\
		\scriptsize {\it o3} (sparse: $t=1.4$) & \scriptsize 4096 & \scriptsize 3 & \scriptsize 43,480,968 & \scriptsize 43,940,030 & \scriptsize 43,940,686 & \scriptsize 44,081,360 & \scriptsize 98.6379912\%\\
		\scriptsize {\it sphere\_3\_192}     & \scriptsize 192 & \scriptsize 3 & \scriptsize 54,779,316 & \scriptsize 54,871,199 & \scriptsize 54,871,214 & \scriptsize 54,888,625 & \scriptsize 99.8008531\%\\
		\scriptsize {\it Vicsek300\_of\_300} & \scriptsize 300 & \scriptsize 3 & \scriptsize  330,724,672 & \scriptsize 330,818,491 & \scriptsize 330,818,507 & \scriptsize 330,835,726 & \scriptsize 99.9664323\%\\
		\bottomrule
	\end{tabular}
\end{table*}

\subsection{Execution Time and Memory Usage}

    We perform extensive experiments that demonstrate the execution time and memory usage of Ripser++. We further look into the performance of both the apparent pairs search algorithm and the management of persistence pairs in the two layer data structure after finding apparent pairs. Variables $n$ and $p$ for each dataset are the same for all experiments.
	
	Table \ref{tab: performance} shows the comparisons of execution time and memory usage for computation up to dimension $p$ between Ripser++ and Ripser with six datasets, where R. stands for Ripser and R.++ stands for Ripser++. Memory usage on CPU and total execution time were measured with the {\tt /usr/time -v} command on Linux. GPU memory usage was counted by the total displacement of free memory over program execution.
	
\begin{table*}[!h]	\centering
	\caption{\small{Total Execution Time and CPU/GPU Memory Usage}}
	\label{tab: performance}
	\begin{tabular}{@{}lrrrrrrrr@{}} \toprule
		
		&&\scriptsize    & \scriptsize R.++  & \scriptsize R.   &\scriptsize R.++ GPU  &\scriptsize R.++ CPU &\scriptsize R. CPU &  \\
		\scriptsize Datasets &n& \scriptsize d &\scriptsize time & {\scriptsize time} & \scriptsize mem. &\scriptsize mem. & \scriptsize mem. & \scriptsize Speedup\\
		
		\midrule
		\scriptsize {\it celegans}           & \scriptsize 297 & \scriptsize 3 & \scriptsize 7.30 s & \scriptsize 228.56 s & \scriptsize 16.84 GB & \scriptsize 10.53 GB & \scriptsize 23.84 GB &\scriptsize 31.33x \\
		\scriptsize {\it dragon1000}         & \scriptsize 1000 & \scriptsize 2 & \scriptsize  5.79 s  & \scriptsize 48.98 s   & \scriptsize 8.81 GB & \scriptsize 3.75 GB & \scriptsize 5.79 GB& \scriptsize 8.46x\\
		\scriptsize {\it HIV}                & \scriptsize  1088 & \scriptsize 2 & \scriptsize 7.11 s & \scriptsize 147.18 s & \scriptsize 11.36 GB & \scriptsize 6.68 GB & \scriptsize 14.59 GB & \scriptsize 20.69x\\
		\scriptsize {\it o3} (sparse: $t=1.4$) & \scriptsize 4096 & \scriptsize 3 & \scriptsize 11.62 s & \scriptsize 64.18 s & \scriptsize 18.76 GB & \scriptsize 2.77 GB & \scriptsize 3.86 GB & \scriptsize 5.52x\\
		\scriptsize {\it sphere\_3\_192}     & \scriptsize 192 & \scriptsize 3 & \scriptsize 2.43 s & \scriptsize 36.96 s & \scriptsize 2.92 GB & \scriptsize 2.03 GB & \scriptsize 4.32 GB & \scriptsize 15.21x\\
		\scriptsize {\it Vicsek300\_of\_300} & \scriptsize 300 & \scriptsize 3 & \scriptsize 9.98 s & \scriptsize 248.72 s & \scriptsize 17.53 GB & \scriptsize 11.46 GB & \scriptsize 27.78 GB & \scriptsize 24.92x\\
		\bottomrule
	\end{tabular}
\end{table*}

    Table \ref{tab: performance} shows Ripser++ can achieve 5.52x - 31.33x speedups of total execution time over Ripser in the evaluated datasets. The performance improvement mainly comes from massive parallel operations of finding apparent pairs on GPU
    , and from the fast filtration construction with clearing by GPU using filtering and sorting. We also notice that the speedups of execution time varies in different datasets. That is because the percentages of execution time in the submatrix reduction are different among datasets.   

	It is well known that the memory usage of full Vietoris-Rips filtration grows exponentially in the number of simplices with respect to the dimension of persistence computation. For example, 2000 points at dimension 4 computation may require ${2000}\choose{4+1}$ $\times 8$ bytes = 2 million GB memory. Algorithmically, we avoid allocating memory in the cofacet dimension and keep the memory requirement of Ripser++ asymptotically same as Ripser. Table \ref{tab: performance} also shows the memory usage of Ripser++ on CPU and GPU. Ripser++ can actually lower the memory usage on CPU. This is mostly because Ripser++ offloads the process of finding apparent pairs to GPU and the following matrix reduction only works on much fewer columns than that of Ripser (as the submatrix reduction). Table \ref{tab: performance} also shows that the GPU device memory usage is usually lower than the total memory usage of Ripser. However, in the sparse computation case (dataset \textit{o3}) the algorithm must change; Ripser++ thus allocates memory depending on the hardware instead of the input sizes.

	\subsection{Throughput of Apparent Pairs Discovery with Ripser++ vs. Throughput of Shortcut Pairs Discovery in Ripser}
	
	\begin{figure}[h]
		\includegraphics[width=1.0\columnwidth]{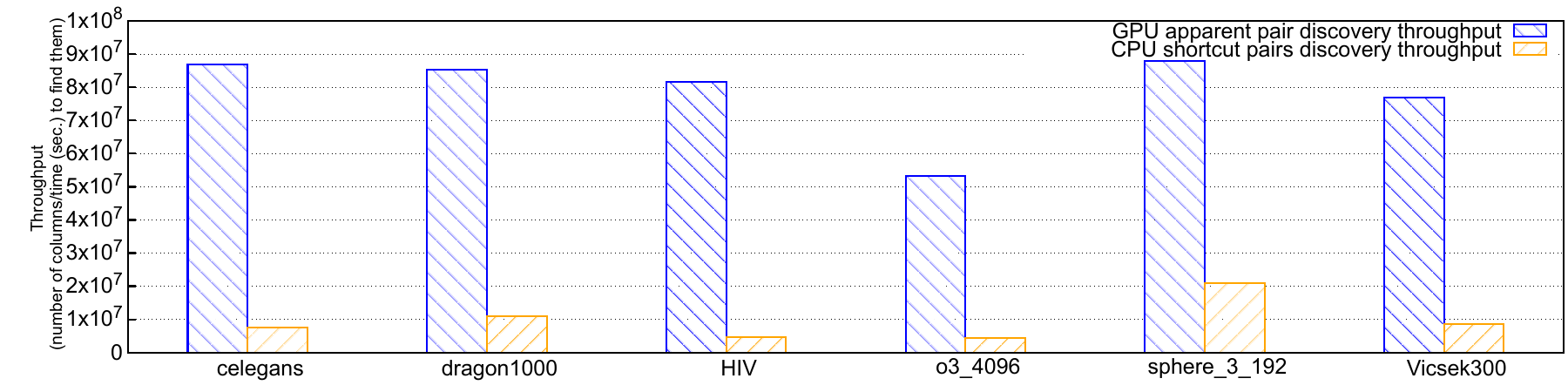}
		\caption{\small{A comparison of column discovery throughput 
		of apparent pair discovery with Ripser++ vs. Ripser's shortcut pair discovery. 
		The time is greatly reduced due to the parallel algorithm of finding apparent pairs on GPU (see Algorithm \ref{alg:apparentpairsalgorithm}). 
		}}
		\label{fig: apparentvsshortcut}
		\centering
	\end{figure}
	
	Discovering shortcut pairs in Ripser and discovering apparent pairs in Ripser++ account for a significant part of the computation. Thus, we compare the discovery throughput of these two types of pairs in Ripser and Ripser++, respectively. The throughput is calculated as the number of a specific type of pair divided by the time to find and store them. The results are reported in Figure \ref{fig: apparentvsshortcut}. We can find for all datasets, our GPU-based solution outperforms the CPU-based algorithm used in Ripser by 4.2x-12.3x. Since the number of the two types of pairs are almost the same (see Table \ref{tab:all-pairs}), such throughput improvement can lead to a significant saving in computation time.

	\subsection{Two-layer Data Structure for Memory Access Optimizations}
	
	\begin{table*} \centering
    \caption{\small{Hashmap Access Throughput, Counts, and Times Comparisons}}
    \label{tab: hashtable}
    \footnotesize
    \begin{tabular}{@{}lrrrrrr@{}} \toprule
              & R.++ write & R. write   & Num. of &  Num. of& R.++  &  R. \\
              & throuput &  throughput &   R.++ reads  &  R. reads     &  read & read \\
     Datasets & (pairs/s) & (pairs/s) & to data struct. & to hashmap & time (s) & time (s) \\
    \midrule
    {\it celegans}             & $7.21\times 10^8$ & $6.98\times 10^7$ & $3.22\times 10^4$ & $5.81\times 10^8$ & $0.00100$ & 11.43  \\
    {\it dragon1000}           & $7.62\times 10^8$ & $6.29\times 10^7$ & $1.19\times 10^5$ & $1.12\times 10^8$ & $0.00460$ & 1.28 \\
    {\it HIV}                  & $7.06\times 10^8$ & $8.85\times 10^7$ & $1.57\times 10^5$ & $3.10\times 10^8$ & $0.00130$ & 5.52 \\
    {\it o3} (sparse: $t=1.4$) & $4.78\times 10^8$ & $6.88\times 10^7$ & $1.65\times 10^6$ & $8.85\times 10^7$ & $0.01500$ & 0.56 \\
    {\it sphere\_3\_192}       & $7.32\times 10^8$ & $9.41\times 10^7$ & $2.71\times 10^5$ & $9.37\times 10^7$ & $0.00068$ & 0.30 \\
    {\it Vicsek300\_of\_300}   & $6.80\times 10^8$ & $8.82\times 10^7$ & $2.12\times 10^5$ & $5.67\times 10^8$ & $0.00053$ & 10.81 \\
    \bottomrule
    \end{tabular}
    \end{table*}

\label{sec:hashmap-performance}
	
	
	Table \ref{tab: hashtable} presents the write throughput of persistence pairs in pairs/s in the 2nd and 3rd columns. 
	In Ripser, we use the measured time of writing pairs to the hashmap to divide the total persistence pair number; while in Ripser++, the time includes writing to the two-layer data structure and sorting the array on GPU. The results show that Ripser++ consistently has one order of magnitude higher write throughput than that of Ripser. 
	
	Table \ref{tab: hashtable} also gives the number of reads in the 4th, 5th, and 6th columns as well as the time consumed in the read operations (in seconds) in the last column. 
	The number of reads in Ripser means the number of reads to its hashmap, while Ripser++ counts the number of reads to the data structure. 
	The reported results confirm that Ripser++ can reduce at least two orders of magnitude memory reads over Ripser. 
	A similar performance improvement can also be observed in the measured read time. 
	
	\subsection{Breakdown of Accelerated Components}
	\label{sec: appendix-breakdown}
	We breakdown the speedup on the two accelerated components of Vietoris-Rips persistence barcode computation over all dimensions $\geq 1$: matrix reduction vs. filtration construction with clearing. Ripser++ accelerates both stages of computation; however, which stage is accelerated more varies. For most datasets, it appears the filtration construction with clearing stage is accelerated more than the matrix reduction stage. This is because this stage is massively parallelized in its entirety while accelerated matrix reduction only parallelizes the finding and aggregation/management of apparent pairs. Speedups on filtration construction with clearing range from 2.87x to 42.67x while speedups on matrix reduction range from 5.90x to 36.71x. 
	
	\begin{figure}[h]
		\includegraphics[width=1.0\columnwidth]{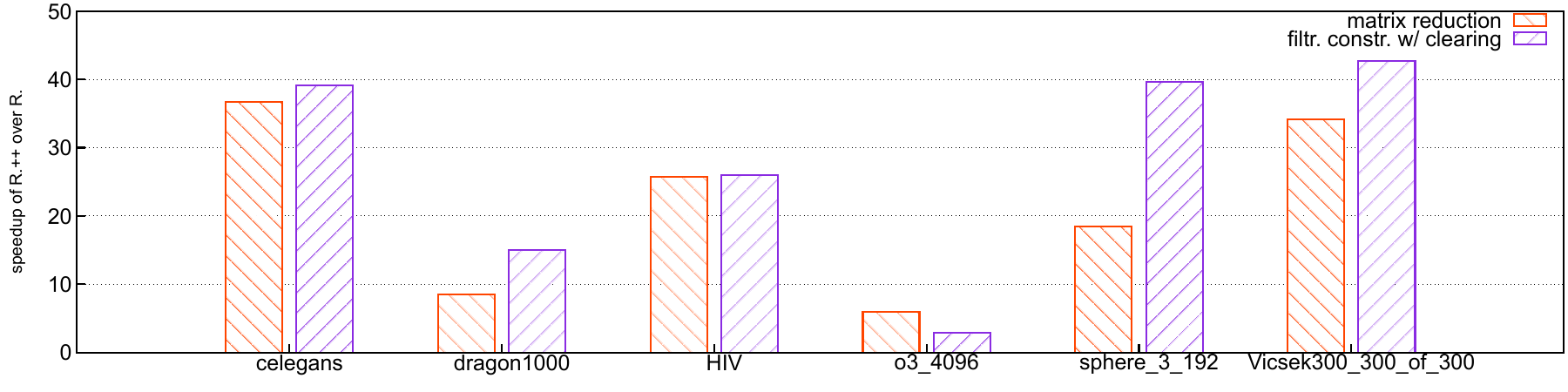}
		\caption{\small{A breakdown of the speedup of Ripser++ over Ripser for computation beyond dimension 0 into the two stages: matrix reduction and filtration construction with clearing.  
		}}
		\label{fig: breakdown}
		\centering
	\end{figure}
	

	\subsection{Experiments on the Apparent Pairs Rate in Dimension 1}
	\label{sec: experiments-apparent}
	We run extensive experiments to analyze the number of apparent pairs in the average case of a random distance matrix input. We make an assumption and one observation about the values of a random distance matrix.
	
    \begin{assumption}
    \label{ass: different diameters}
    In practice, distances between points are almost never exactly equal. Thus we assume the entries of the distance matrix are all different.
    \end{assumption}

    \begin{observation}
    \label{obs: reassignment}
    The persistence barcodes (see Section \ref{sec: preliminaries-computation} on definition of barcodes) executed by the persistent homology algorithm do not change up to endpoint reassignment if we reassign the distances of the input distance matrix while preserving the total order amongst all distances. 
    \end{observation}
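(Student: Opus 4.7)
\medskip

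\noindent\textbf{Proof proposal.} The plan is to show that, under Assumption~\ref{ass: different diameters}, the entire simplex-wise refinement used by Ripser/Ripser++ is determined only by the relative order of the pairwise distances, not by their numerical values. Once this is established, Algorithm~\ref{alg:standard-algorithm} is a purely combinatorial procedure on the ordered list of simplices, so the set of persistence pairs (as pairs of simplex indices $(s,t)$) is unchanged; only the real-valued endpoints $(diam(s), diam(t))$ get reassigned according to the new distances.

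First, I would record the key order-preservation property of the diameter: since $diam(s)$ is the maximum of the edge-lengths of $s$, if $\phi:\mathbb{R}\to\mathbb{R}$ is any strictly increasing reassignment of the distance values (which exists because the original and reassigned distances are in the same total order, using Assumption~\ref{ass: different diameters} to avoid ties), then for any two simplices $s_1,s_2$ we have $diam(s_1)<diam(s_2)$ iff $\phi(diam(s_1))<\phi(diam(s_2))$, with the same equivalence for equality. Next I would verify that each of the three criteria defining the simplex-wise filtration order in Section~\ref{sec: filtration-order}, namely (i) increasing diameter, (ii) increasing dimension, and (iii) decreasing combinatorial index, is invariant: criterion (i) is invariant by the previous sentence, while (ii) and (iii) depend only on the combinatorics of the simplices and not on any distance values at all.

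Having shown that the totally ordered sequence $(s_i)_{i=1..|K|}$ is identical under reassignment, I would then observe that Algorithm~\ref{alg:standard-algorithm} reads its input as this ordered sequence and its associated boundary relations; it never inspects the actual diameters. Hence the reduced matrix $R$, the lowest-nonzero function $low_R$, and the pivot pairs $(low_R(j), j)$ are identical for both distance assignments. The persistence barcode is then obtained by labeling each such pair of simplex indices with the corresponding pair of diameters, and this labeling is precisely the ``endpoint reassignment'' mentioned in the observation.

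There is no real obstacle in this argument; the only subtle point is ensuring that ties do not spoil the order preservation, which is exactly why Assumption~\ref{ass: different diameters} is invoked. If two distinct simplices were to share the same diameter in the original assignment, an order-preserving reassignment could in principle separate them in a way that changes criterion (i); Assumption~\ref{ass: different diameters} together with the fact that $diam$ is a maximum of pairwise distances rules this out for simplices of equal dimension, and the secondary tie-breaking by combinatorial index handles the remaining ties (which occur only between a simplex and its facets/cofacets of different dimension). I would conclude by noting that this observation is what justifies re-assigning the distance matrix of Figure~\ref{fig: max-noequaldiam}(a) to the integers $1,\dots,10$ without loss of generality when exhibiting the upper-bound construction.
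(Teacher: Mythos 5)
The paper states this Observation without an accompanying proof, so there is no in-paper argument to compare against; your proposal supplies exactly the justification the paper implicitly relies on, and its core is correct. An order-preserving reassignment of the (pairwise distinct, by Assumption~\ref{ass: different diameters}) distance entries induces a strictly increasing map $\phi$ on the distance values; since $diam(\cdot)$ is a maximum of entries, the new diameter of every simplex is $\phi$ applied to its old diameter, so all comparisons \emph{and equalities} of diameters --- and therefore the entire simplex-wise order of Section~\ref{sec: filtration-order}, whose other two criteria (dimension, combinatorial index) are purely combinatorial --- are unchanged. Algorithm~\ref{alg:standard-algorithm} then runs identically on the identical ordered (co)boundary matrix, producing the same pivot pairs of simplex indices, with only the real-valued endpoints relabeled by $\phi$, which is precisely ``up to endpoint reassignment.''

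One factual slip in your closing paragraph deserves correction, though it is not load-bearing. You assert that Assumption~\ref{ass: different diameters} rules out equal diameters among distinct simplices of equal dimension, and that the remaining ties occur only between a simplex and its facets or cofacets. Both claims are false: under distinct distance entries, two simplices of dimension $\geq 1$ have equal diameter precisely when their maxima are attained at the same edge, and, for example, the triangles $(210)$ and $(310)$, with edge $(10)$ longest among all their edges, are distinct $2$-simplices of equal diameter that are not facet/cofacet related (similarly, all vertices tie at diameter $0$). Fortunately your proof never needs ties to be absent: since $\phi$ is injective, equalities of diameters are preserved in both directions (under Assumption~\ref{ass: different diameters} equal diameters come from the same matrix entry, which is sent to a single new value), so tied simplices remain tied, and the tie-breaking by dimension and decreasing combinatorial index orders them identically before and after reassignment --- exactly as your second paragraph already establishes. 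With that sentence amended or deleted, the argument is complete and correctly justifies, e.g., replacing the distance matrix of Figure~\ref{fig: max-noequaldiam}(a) by the integers $1,\dots,10$.
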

    
    Thus the setup for our experiments is to uniformly at random sample permutations of the integers 1 through $n(n_0-1)/2$ to fill the lower triangular portion of a distance matrix, where $n_0$ is the number of points. We run Ripser++ for $n_0$ = 50, 100, 200, 400, ..., 1000, ..., 9000 with 10 uniformly random samples with replacement of $n_0(n_0-1)/2$-permutations for a fixed random seed for dimension 1 persistence. We consider the general combinatorial case where the distance matrix does not necessarily satisfy the triangle inequality and thus that the set of points may not form a finite metric space. This is still valid input, as Vietoris-Rips barcode computation is dependent only on the edge relations between points (e.g. the 1-skeleton). 
    \begin{figure}[!h]
		\includegraphics[width=1.0\columnwidth]{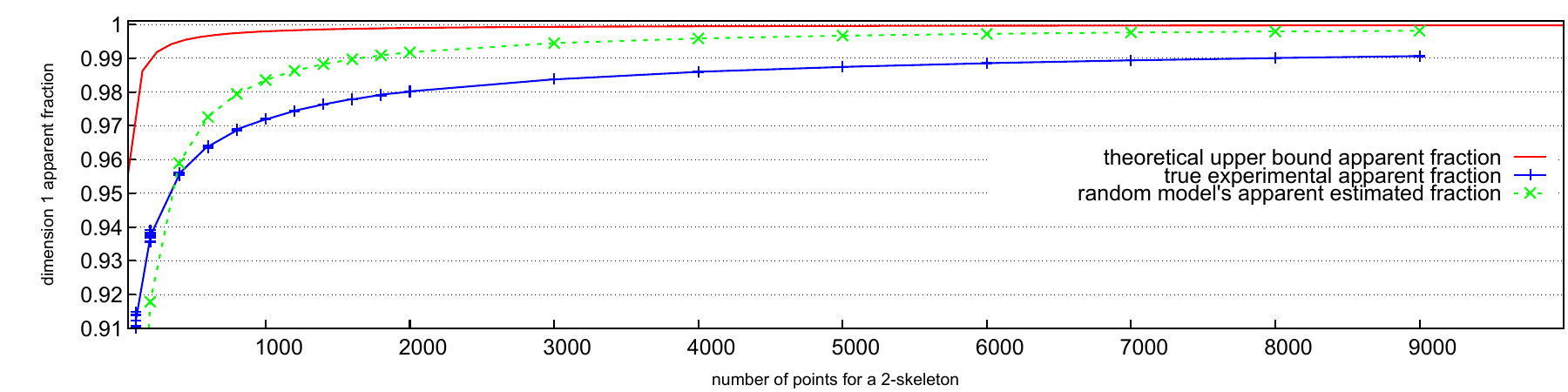}
		\caption{\small{Three different curves of the apparent fraction for a 1-dimensional coboundary matrix as a function of the number of points. The theoretical upper bound is for the case of all diameters the same, but also can be achieved when all diameters are different. Interestingly, the variance of the apparent fraction for each point count from the experiments is very low even though the lower distance matrix entries are uniformly at random permuted. The dotted curve is the piecewise linear interpolated curve of the random model that matches the shape of the empirical and theoretical curve. For more details about the random mathematical model, see the Section \ref{sec: appendix-apparent-model}}}
		\label{fig: apparent-ratios}
		\centering
	\end{figure}
    Figure \ref{fig: apparent-ratios} shows the plot for the percentage of apparent pairs with respect to the total number of 1-simplices for a 1-dimensional coboundary matrix (the apparent fraction) as a function of the number of points of a full 2-skeleton for three different contexts. The first is the theoretical upper bound of $(n_0-2)/n_0$ proven in Theorem \ref{theorem: apparent bounds}, the second is the actual percentage found by our experiments, and the third is the percentage predicted by our random model for the case of dimension 1 coboundary matrices (see Section \ref{sec: appendix-apparent-model}).  
	
	From the experiments, we notice that in the average case there is still a large number of apparent pairs and this number is close to and closely guided by the theoretical upper bound found in Theorem \ref{theorem: apparent bounds}. Furthermore, the model's curve and the theoretical upper bound are asymptotic to 1.00 as $n_0 \rightarrow \infty$. This is calculated by some algebraic manipulations of radical equations. Furthermore, we performed the true apparent pairs fraction search experiment up to 20000 points, with consistent monotonic behavior toward 1.00. For example, at 10000, 20000 points we obtain  0.991127743, and 0.993733522 average apparent fractions respectively.  
    
    \subsection{Algorithm for Randomly Assigning Apparent Pairs on a Full 2-Skeleton}
	
\begin{algorithm}[h] 
\SetKwComment{Comment}{//}{}
    \caption{Algorithm for Random Apparent Pairs Construction}
    \label{alg: apparentconstruction}
    
    \KwIn{$d_i$: a sequence of diameters with $d_1 > d_2 > \ldots > d_{n \choose 2}$; $K = (V, E, T)$: a full 2-skeleton on $n$ points where $V$ is a set of $n$ vertices, $E$ is a set of ${n \choose 2}$ edges, and $\tau$ is a set of ${n \choose 3}$ triangles.}
    \KwOut{A sequence of apparent pairs of edges and triangles $(e_i, t_i)$ with $\textsf{diam}(e_i) = \textsf{diam}(t_i) = d_i$.}
    \SetKwProg{Fn}{Function }{:}{}
   \SetKwFunction{RDA}{RandomDiameterAssignment}{}
   \Fn{\RDA{$K$}} {
    \While{there are triangles left in $\tau$}{
        Uniformly at random pick a 1-dimensional simplex $e_i \in E$ \;
        Assign edge $e_i$ a diameter $d_i$ strictly less than all $d_k$ for $k < i$  \Comment{(e.g., $d_i = {n \choose 2} - i + 1$)} 
        \If{there are triangles incident to $e_i$}{
            Pair up $e_i$ with its oldest cofacet, the unique triangle $t_i \in T$ of highest lexicographic order amongst remaining triangles incident to $e_i$ \;
            Emit $(e_i, t_i)$ \;
            Remove $e_i$ from $E$ and all triangles $t'_i$ containing $e_i$ in their boundary from $\tau$ since these triangles must all have the same diameter $d_i$ \;
            }
        }
    }
\end{algorithm}
    
    Algorithm \ref{alg: apparentconstruction} assigns diameters to a subset of the edges in decreasing order so that each diameter value $d_i$ at iteration $i$ results (if possible) in an apparent pair $(e_i,t_i)$ with edge $e_i$ and triangle $t_i$ both of diameter $d_i$. In the Algorithm \ref{alg: apparentconstruction} at line 5, since we assume at iteration $i, i>i'$ that $d_i<d_{i'}$ and that all triangles of diameter greater than or equal to $d_{i'}$ have already been removed, at iteration $i$ all remaining cofacets $t'$ of $e_i$ must have the same diameter as $e_i$. 
    Recalling Assumption \ref{ass: different diameters} and Observation \ref{obs: reassignment}, this random algorithm is equivalent to uniformly at random assigning permutations of the numbers 1,...,${n \choose 2}$ to the lower triangular part of a symmetric distance matrix $D$ and counting the number of apparent pairs in the 1-dimensional coboundary matrix induced by $D$.
    
    \subsection{A Greedy Deterministic Distance Assignment}
    We consider Algorithm \ref{alg: apparentconstruction} with line 3 changed to pick $e_i \in E$ with maximum number of cofacets remaining, with largest combinatorial index if there is a tie. This greedy deterministic algorithm serves as a lower bound to Algorithm \ref{alg: apparentconstruction}. We plot the resulting apparent fraction and experimentally verify that $\frac{1}{(p+2)}$ is a theoretical lower bound for $p=1$; (notice the theoretical lower bound did not depend on the diameter condition on simplices containing the maximum indexed point of Theorem \ref{theorem: apparent bounds}). 
    It is currently unknown what the theoretical relationship is between the greedy deterministic algorithm and any theoretical lower bounding curve.
    
    \begin{figure}[h]
		\includegraphics[width=1.0\columnwidth]{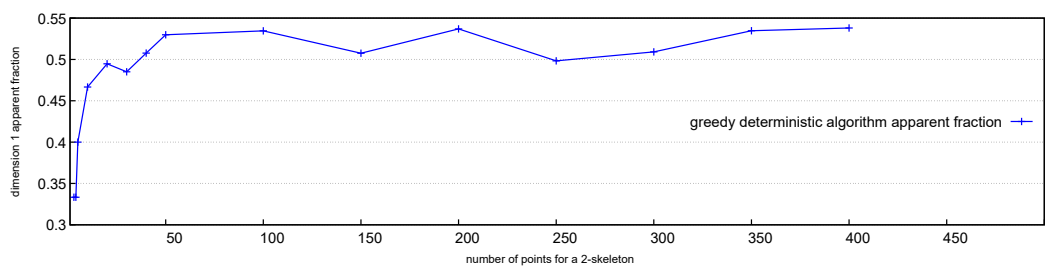}
		\caption{\small{The deterministic greedy apparent fraction curve. Notice the theoretical lower bound of 0.3333 is confirmed experimentally by the experimental curve. The experiments show the apparent fraction stays within a neighborhood of 0.5 as n gets large enough.}}
		\label{fig: greedy-ratio}
		\centering
	\end{figure}
    
    \subsection{A Random Approximation Model for the Number of Apparent Pairs in a Full 2-Skeleton on n Points}
	\label{sec: appendix-apparent-model}
    
    We construct a model for the analysis of random Algorithm \ref{alg: apparentconstruction}. We model Algorithm \ref{alg: apparentconstruction} by analyzing a modified algorithm. Let there be a full 2-skeleton $K= (V,E,T)$ as in Algorithm \ref{alg: apparentconstruction}. Let $E' \subseteq E$ be the subset of edges not including the single point $v \in V$ of highest index: $n_0-1$ and $T' \subseteq T$ be the subset of triangles induced by $E'$. Modify Algorithm \ref{alg: apparentconstruction} to let $j \leq {n_0-1 \choose 2}$ be the fixed number of iterations of the loop, replacing line 8. Modify Algorithm \ref{alg: apparentconstruction} at line 3 to choose uniformly at random from $E'$ instead of $E$, outputting a sequence $C$ with $j$ different edges and having diameters $d_1>d_2>...>d_j$.
    
    We pick edges from $E'$ since this ensures that the If in line 4 of Algorithm \ref{alg: apparentconstruction} will always evaluate to true by the existence of triangles containing vertex $n_0-1$ and thus that there are at least $\|C\|=j$ number of apparent pairs in $K$. $\|C\|=j$ is equal to the number of iterations of the algorithm. After choosing $j$ edges, we count how many triangles are still left in $T'$ in expectation. 
    
    We define a Bernoulli random variable for each triangle $t \in T'$ of the full 2-skeleton $K$. 
   
    $$
    X_{t,j} = \left\{
     \begin{array}{lr}
       1 &  \text{if triangle t}\in T'\text{ is not incident to any edges in } C\\
       0 &  \text{otherwise}
     \end{array}
   \right.$$
    
    We notice that for every triangle, the same random variable can be defined on it, all identically distributed. 
    
    Let $$p_{t,j}= \frac{({n_0-1 \choose 2}-3)\cdot({n_0-1 \choose 2}-4)\cdots({n_0-1 \choose 2}-3-j+1)}{({n_0-1 \choose 2}\cdot({n_0-1 \choose 2}-1)\cdots({n_0-1 \choose 2}-j+1))}$$ be the probability of triangle $t \in T$ not containing any of the $j$ chosen edges in its boundary of 3 edges.
    
    We thus define the random variable $T_j$= $\Sigma_{t \in T'}X_{t,j}$ to count the number of triangles remaining after $j$ edges are chosen in sequence.
    
    Taking expectation, we get 
    $$E[T_j] = \Sigma_{t \in T'}E[X_t] = \Sigma_{t \in T'} 1 \cdot p_{t,j} = {n_0-1 \choose 3} \cdot \frac{({n_0-1 \choose 2}-3)\cdot({n_0-1 \choose 2}-4)\cdots({n_0-1 \choose 2}-3-j+1)}{({n_0-1 \choose 2}\cdot({n_0-1 \choose 2}-1)\cdots({n_0-1 \choose 2}-j+1))}$$
    
    by linearity of expectation, the definition of $T'$ and the definition of $p_{t,j}$.

    Set $E[T_j]=\tau$, with $\tau$ the number of triangles reserved to not be incident to the sequence $C$ of $j$ apparent edges of $K$. Then solve for $j$ from the equation $E[T_j]=\tau$ with a numerical equation solver system; then divide $j$ by ${n \choose 2}$, the total number of edges, and call this the ratio $r_{\tau}(n)$. Since we are just building a mathematical model to match experiment, we fit our curve $r_{\tau}(n)$ to the true experimental curve from Figure \ref{fig: apparent-ratios}. We choose $\tau=500$ to minimize the least squares error on the sampled values of n in Figure \ref{fig: apparent-ratios} (we assume $n,\tau$ s.t. $\binom{n}{3} > \tau$). to the averaged experimental curve, formed by averaging the apparent fraction for each $n$. $\tau$ was chosen in units of 100s due to the high computational cost of solving for $j$ for every $n$. We then obtain the dotted curve in Figure \ref{fig: apparent-ratios}, $r_{500}(n)$: a function of $n$, the number of points.
    
    The shape of the model's curve, which matches experiment and stays within theoretical bounds is the primary goal of our model. The constant, $\tau$=500, suggests that as the number of points increases, in practice the expected percentage of triangles not a cofacet of an apparent edge decreases to 0 and that the the expected value is approximately a constant value. See Section \ref{sec: appendix-equivalent-model} for an equivalent model that counts edges and triangles differently.
    
    \section{The ``width" and ``depth" of Computing Vietoris-Rips Barcodes}
	We are motivated by a common phenomenon in computation of Vietoris-Rips barcodes found in \cite{zhang2019hypha} for the matrix reduction stage: the required sequential computation is concentrated on a very few number of columns in comparison with the filtration size. We further generalize a principle to quantify parallelism in the computation as a guidance for parallel processing.

	We define two concepts: ``computational width" as a measurement of the amount of independent operations and ``computational depth" as a measurement of the amount of dependent operations. Their quotients measure the level of parallelism in computation. We consider rough upper and lower bounds on parallelism for Vietoris-Rips barcode computation using these quotients. 
	
	For an upper bound, let the ``computational width" be 2 $\times$ the number of simplices in the filtration or 2 $\times$ the total number of uncleared columns of the coboundary matrix where the 2 comes from the two stages of persistence computation: filtration construction and matrix reduction. (This quantifies the maximum amount of independent columns achievable if all columns were independent of each other). This rationale comes from the existence of a large percentage of columns that are truly independent of each other (e.g. apparent columns during matrix reduction) as well as the independence amongst simplices for their construction and storage during filtration construction (assume the full Rips computation case). 
	Let the ``computational depth" be 1 + the amount of column additions amongst columns requiring at least one column addition. (The + 1 is to prevent dividing by zero). 
	In this case, the ``computational width" divided by the ``computational depth" thus quantifies an upper bound on the amount of parallelism available for computation. 
	
	For a lower bound one could similarly consider the ``computational width" as the number of apparent columns divided by the ``computational depth" as the sum of all column additions amongst columns plus any dependencies during filtration construction.
	
	
	These bounds along with empirical results on columns additions \cite{zhang2019hypha}, the percentage of apparent pairs in Table \ref{tab:all-pairs}, and the potentially several orders of magnitude factor difference in number of simplices compared to column additions, suggest that there can potentially be a high level of hidden parallelism suitable for GPU in computing Vietoris-Rips barcodes. We are thus led to aim for two objectives for effective performance optimization:
	\begin{enumerate}
		\item To massively parallelize in the ``computational width" direction (e.g. parallelize independent simplex-wise operations).
		\item To exploit locality in the ``computational depth" direction (while using sparse representations).
	\end{enumerate}
	
	Objective 1 is well achievable on GPU while Objective 2 is known to be best done on CPU. In fact depth execution such as column additions are best done sequentially due to the few number of ``tail columns" \cite{zhang2019hypha} of the coboundary matrix of Vietoris-Rips filtrations.
	
	Our ``width" and ``depth" principle gives bounds for developing potential parallel algorithms to accelerate, for example, Vietoris-Rips barcode computation. However, real-world performance improvements must be measured empirically, as in Table \ref{tab: performance}. 
	
	\begin{figure}[h]
		\includegraphics[width=1.0\columnwidth]{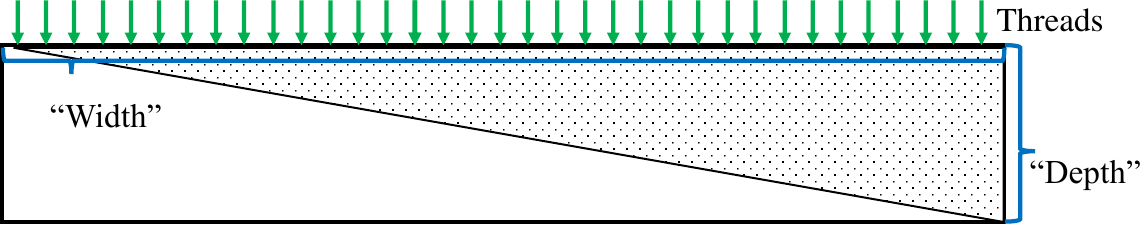}
		\caption{\small{Illustration of the ``width" and ``depth" of computation. The area of the triangle represents the total amount of work performed. Furthermore, the ``width" divided by the ``depth" quantifies the level of parallelism in computation.}}
		\label{fig: widthanddepeth}
		\centering
	\end{figure}

 \subsection{The Growth of the Proportion of Apparent Pairs} \label{sec: appendix-apparent-percentage}
	In our experiments on smaller datasets with high dimension and a low number of points, all persistence pairs eventually become trivial at a relatively low dimension compared to the number of points. Theoretically, of course, there always exists some dimension at which the filtration collapses in persistence (all pairs become 0-persistence), e.g. $n$ points with an $n-1$ dimensional simplex. Apparent pairs are a subset of the 0-persistence pairs, we show empirically in Table \ref{tab: appendix-apparent-percentage} that the proportion of apparent pairs of dimension $p$ to the number of pairs (including infinite persistence ``pairs") in dimension $p$ grows with dimension. The only exception to this is the o3 dataset with sparse computation and a restrictive threshold. We believe this outlier is due to the threshold condition. The other datasets were computed without any threshold condition.
	
	\begin{table*} [!h]\centering
    \caption{\small{Apparent Pair Percentage (Out of All Persistence Pairs) per Dimension}}
    \label{tab: appendix-apparent-percentage}
    \begin{tabular}{@{}lrrrrrr@{}} \toprule
              &  & & \% apparent & \% apparent & \% apparent \\
     Datasets & n      &d  &   in dim 1 &  in dim 2 &  in dim 3 \\
    \midrule
    {\it celegans} & 297&3 &99.661017\% & 99.962021\% & 99.977972\%  \\
    {\it dragon1000} & 1000 & 2 & 99.937011\% & 99.972234\% &   \\
    {\it HIV} & 1088 & 2 & 99.92071\% & 99.972234\% &   \\
    {\it o3} (sparse: $t=1.4$) & 4096 & 3 & 98.928064\% & 98.651461\% & 98.634918\%  \\
    {\it sphere\_3\_192} & 192 & 3 & 99.707909\% & 99.734764\% & 99.802291\%\\
    {\it Vicsek300\_of\_300} & 300 & 3 & 99.849611\% & 99.925678\% & 99.966999\% \\
    \bottomrule
    \end{tabular}
    \end{table*}
	\subsection{Empirical Properties of Filtering by Diameter Threshold and Clearing}
	\begin{table*}[!h]
	\centering
	\caption{\small{Empirical results on Clearing and Threshold Restriction}}
	\label{tab:clearing-threshold}
	\begin{tabular}{@{}lrrrrrrr@{}} \toprule
		
		&   &   & \scriptsize possible         &\scriptsize  num. cols.  &\scriptsize simpl. removed &\scriptsize cols. & \scriptsize \% of simplices \\
		\scriptsize Datasets & \scriptsize n &\scriptsize d & \scriptsize num. simpl. & \scriptsize to reduce &\scriptsize by diameter & \scriptsize cleared & \scriptsize sel. for red. \\
		
		\midrule
		\scriptsize {\it celegans}           & \scriptsize 297 & \scriptsize 3 & \scriptsize 322,058,286 & \scriptsize 256,704,712 & \scriptsize 61,576,563 & \scriptsize 3,777,011 & \scriptsize 79.7075322\% \\
		\scriptsize {\it dragon1000}         & \scriptsize 1000 & \scriptsize 2 & \scriptsize  166,666,500  & \scriptsize 56,110,140   & \scriptsize 110,237,919 & \scriptsize 318,441 & \scriptsize 33.6661177\%\\
		\scriptsize {\it HIV}                & \scriptsize  1088 & \scriptsize 2 & \scriptsize 214,652,064 & \scriptsize 155,009,693 & \scriptsize 59,123,662 & \scriptsize 518,709 & \scriptsize 72.2143967\%\\
		\scriptsize {\it o3} (sparse: $t=1.4$) & \scriptsize 4096 & \scriptsize 3 & \scriptsize $1.17\times10^{13}$ & \scriptsize 44,081,360 & \scriptsize $1.17\times 10^{13}$& \scriptsize 4,347,112 & \scriptsize 0.00037604\%\\
		\scriptsize {\it sphere\_3\_192}     & \scriptsize 192 & \scriptsize 3 & \scriptsize 55,004,996 & \scriptsize 46,817,416 & \scriptsize 8,159,941 & \scriptsize 1,072,739 & \scriptsize 85.114843\%\\
		\scriptsize {\it Vicsek300\_of\_300} & \scriptsize 300 & \scriptsize 3 & \scriptsize  335,291,125 & \scriptsize 283,441,085 & \scriptsize 47,803,132 & \scriptsize 4,046,908 & \scriptsize 84.5358150\%\\
		\bottomrule
	\end{tabular}
\end{table*}
	
	We have done an empirical study on the filtration construction with clearing stage of computation. By Table \ref{tab:clearing-threshold}, except for dragon1000, all full Rips filtrations, after applying the enclosing radius condition (see Section \ref{sec: appendix-enclosing-radius}) and clearing (see Section \ref{sec: clearing-lemma}), result in a large percentage of simplices selected for reduction. More importantly, Algorithm \ref{alg:fullripsalgorithm-sorting} sorts all the possible number of simplices for full Rips computation. If the number of simplices selected is close to the number of simplices sorted, the sorting is effective. By our experiments, even in the dragon1000 case, sorting all simplices is still faster than CPU filtering or even warp-based filtering on GPU (see Section \ref{sec: warp-filtering}).
	
	By Section \ref{sec: appendix-enclosing-radius}, the enclosing radius eliminates simplices when added to a growing simplicial complex, will only contribute 0-persistence pairs. This is equivalent to truncating the coboundary matrix to its bottom right block (or zeroing such rows and columns). There are usually fewer columns zeroed by the clearing lemma than zeroed by the threshold condition, however they correspond to columns that must be completely zeroed during reduction and a lot will form ``tail columns" that dominate the time of matrix reduction.
	
	Notice that the o3\_4096 dataset has a predefined threshold less than the enclosing radius. The number of possible simplices is several orders of magnitude larger than the actually number of columns needed to reduce after clearing. Thus we must use the sparse option of computation and avoid using Algorithm \ref{alg:fullripsalgorithm-sorting}. This means memory is not allocated as a function of the number of possible simplices and is instead allocated with respect to the GPU's memory capacity  as we grow the number of columns for matrix reduction. Sorting is not used and a warp-base filtering is used instead (see Section \ref{sec: warp-filtering}). 

    \section{Related Persistent Homology Computation Software}
	In Section \ref{sec: preliminaries-computation}, we briefly introduce several software to compute persistent homology, paying special attention on computing Vietoris-Rips barcodes. In this section, we elaborate more on this topic. 
	
	The basic algorithm upon which all such software are based on is given in Algorithm \ref{alg:standard-algorithm}. Many optimizations are used by these software, and have made significant progress over the basic computation of Algorithm \ref{alg:standard-algorithm}. Amongst all such software, Ripser is known to achieve state of the art time and memory performance in computing Vietoris-Rips barcodes~\cite{bauer2019ripser, otter2017roadmap}. Thus it should suffice to compare our time and memory performance against Ripser alone. We overview a few, and certainly not all, of the related software besides Ripser. 
	
	Gudhi \cite{gudhi:urm} is a software that computes persistent homology for many filtration types. It uses a general data structure called a simplex tree \cite{boissonnat2012simplex, boissonnat:hal-01883836} for general simplicial complexes for storage of simplices and related operations on simplices as well as the compressed annotation matrix algorithm \cite{boissonnat2013compressed} for computing persistent cohomology. It can compute Vietoris-Rips barcodes.
	
	Eirene \cite{henselman2016eirene} is another software for computing persistent homology. It can also compute Vietoris-Rips barcodes. One of its important features is that it is able to compute cycle representatives. Paper \cite{hylton2017performance} details how Eirene can be optimized with GPU.
	
	Hypha (a hybrid persistent homology matrix reduction accelerator) \cite{zhang2019hypha} is a recent open source software for the matrix reduction part of computing persistent homology for explicitly represented boundary matrices, similar in style to \cite{bauer2017phat, bauer2014distributed}. Hypha is one of the first publicly available softwares using GPU. A framework based on the separation of parallelisms is designed and implemented in Hypha due to the existence of atleast two very different execution patterns during matrix reduction. Hypha also finds apparent pairs on GPU and subsequently forms a submatrix on multi-core similar to in Ripser++. 
	
\section{Conclusion}
\label{sec:impact}

Ripser++ can achieve significant speedup (up to 20x-30x) on representative datasets in our work and thus opens up unprecedented opportunities in many application areas. For example, fast streaming applications \cite{syzdykbayev2019persistent} or point clouds from neuroscience \cite{bendich2016persistent} that spent minutes can now be computed in seconds, significatly advancing the domain fields.



We identify specific properties of Vietoris-Rips filtrations such as the simplicity of diameter computations by individual threads on GPU for Ripser++. Related discussions, both theoretical and empirical, suggest that our approach be applicable to other filtration types such as cubical \cite{bauer2014distributed}, flag \cite{luetgehetmann2019computing}, and alpha shapes \cite{gudhi:urm}. We strongly believe that our acceleration methods are widely applicable beyond computing Vietoris-Rips persistence barcodes. 

We have described the mathematical, algorithmic, and experimental-based foundations of Ripser++, a GPU-accelerated software for computing Vietoris-Rips persistence barcodes. Computationally, we develop massively parallel algorithms directly tied to the GPU hardware, breaking several sequential computation bottlenecks. These bottlenecks include the Filtration Construction with Clearing stage, Finding Apparent Pairs, and the efficient management of persistence pairs for submatrix reduction. Theoretically we have looked into properties of apparent pairs, including the Apparent Pairs Lemma for massively parallel computation and a Theorem of upper and lower bounds on their significantly large count. Empirically we have performed extensive experiments, showing the true consistent behavior of apparent pairs on both random distance matrices as well as real-world datasets, closely matching the theoretical upper bound we have shown. Furthermore, we have measured the time, memory allocation, and memory access performance of our software against the original Ripser software. We achieve up to 2.0x CPU memory efficiency, besides also significantly reducing execution time. We hope to lead a new direction in the field of topological data analysis, accelerating computation in the post Moore's law era and turning theoretical and algorithmic opportunities into a high performance computing reality.
}
	
	\normalsize{
	\bibliography{lipics-v2019-bibliography}

@article{hou2018framework,
  title={A framework for the automatic vectorization of parallel sort on x86-based processors},
  author={Hou, Kaixi and Wang, Hao and Feng, Wu-Chun},
  journal={IEEE Transactions on Parallel and Distributed Systems},
  volume={29},
  number={5},
  pages={958--972},
  year={2018},
  publisher={IEEE}
}

@inproceedings{Satish:2009:DES:1586640.1587667,
 author = {Satish, Nadathur and Harris, Mark and Garland, Michael},
 title = {Designing Efficient Sorting Algorithms for Manycore GPUs},
 booktitle = {Proceedings of the 2009 IEEE International Symposium on Parallel\&Distributed Processing},
 series = {IPDPS '09},
 year = {2009},
 isbn = {978-1-4244-3751-1},
 pages = {1--10},
 numpages = {10},
 url = {https://doi.org/10.1109/IPDPS.2009.5161005},
 doi = {10.1109/IPDPS.2009.5161005},
 acmid = {1587667},
 publisher = {IEEE Computer Society},
 address = {Washington, DC, USA},
}

@article{Sintorn:2008:FPG:1412749.1412828,
 author = {Sintorn, Erik and Assarsson, Ulf},
 title = {Fast Parallel GPU-sorting Using a Hybrid Algorithm},
 journal = {J. Parallel Distrib. Comput.},
 issue_date = {October, 2008},
 volume = {68},
 number = {10},
 month = oct,
 year = {2008},
 issn = {0743-7315},
 pages = {1381--1388},
 numpages = {8},
 url = {https://doi.org/10.1016/j.jpdc.2008.05.012},
 doi = {10.1016/j.jpdc.2008.05.012},
 acmid = {1412828},
 publisher = {Academic Press, Inc.},
 address = {Orlando, FL, USA},
 keywords = {GPU-algorithms, Parallelism, Sorting},
}

@article{adams2011javaplex,
	title={Javaplex tutorial},
	author={Adams, Henry and Tausz, Andrew},
	journal={Google Scholar},
	year={2011}
}

@incollection{bauer2014clear,
	title={Clear and compress: Computing persistent homology in chunks},
	author={Bauer, Ulrich and Kerber, Michael and Reininghaus, Jan},
	booktitle={Topological methods in data analysis and visualization III},
	pages={103--117},
	year={2014},
	publisher={Springer}
}

@article{bauer2017phat,
	title={Phat--persistent homology algorithms toolbox},
	author={Bauer, Ulrich and Kerber, Michael and Reininghaus, Jan and Wagner, Hubert},
	journal={Journal of symbolic computation},
	volume={78},
	pages={76--90},
	year={2017},
	publisher={Elsevier}
}

@inproceedings{boissonnat2013compressed,
	title={The compressed annotation matrix: An efficient data structure for computing persistent cohomology},
	author={Boissonnat, Jean-Daniel and Dey, Tamal K and Maria, Cl{\'e}ment},
	booktitle={European Symposium on Algorithms},
	pages={695--706},
	year={2013},
	organization={Springer}
}

@article{carlsson2009topology,
	title={Topology and data},
	author={Carlsson, Gunnar},
	journal={Bulletin of the American Mathematical Society},
	volume={46},
	number={2},
	pages={255--308},
	year={2009}
}

@inproceedings{chen2011persistent,
	title={Persistent homology computation with a twist},
	author={Chen, Chao and Kerber, Michael},
	booktitle={Proceedings 27th European Workshop on Computational Geometry},
	volume={11},
	year={2011}
}

@article{dabaghian2012topological,
	title={A topological paradigm for hippocampal spatial map formation using persistent homology},
	author={Dabaghian, Yuri and M{\'e}moli, Facundo and Frank, Loren and Carlsson, Gunnar},
	journal={PLoS computational biology},
	volume={8},
	number={8},
	pages={e1002581},
	year={2012},
	publisher={Public Library of Science}
}

@article{dennard1974design,
	title={Design of ion-implanted MOSFET's with very small physical dimensions},
	author={Dennard, Robert H and Gaensslen, Fritz H and Rideout, V Leo and Bassous, Ernest and LeBlanc, Andre R},
	journal={IEEE Journal of Solid-State Circuits},
	volume={9},
	number={5},
	pages={256--268},
	year={1974},
	publisher={IEEE}
}

@article{de2007coverage,
	title={Coverage in sensor networks via persistent homology},
	author={De Silva, Vin and Ghrist, Robert},
	journal={Algebraic \& Geometric Topology},
	volume={7},
	number={1},
	pages={339--358},
	year={2007},
	publisher={Mathematical Sciences Publishers}
}

@article{de2011dualities,
	title={Dualities in persistent (co) homology},
	author={De Silva, Vin and Morozov, Dmitriy and Vejdemo-Johansson, Mikael},
	journal={Inverse Problems},
	volume={27},
	number={12},
	pages={124003},
	year={2011},
	publisher={IOP Publishing}
}

@book{edelsbrunner2010computational,
	title={Computational topology: an introduction},
	author={Edelsbrunner, Herbert and Harer, John},
	year={2010},
	publisher={American Mathematical Soc.}
}

@inproceedings{dey2014computing,
	title={Computing topological persistence for simplicial maps},
	author={Dey, Tamal K and Fan, Fengtao and Wang, Yusu},
	booktitle={Proceedings of the thirtieth annual symposium on Computational geometry},
	pages={345},
	year={2014},
	organization={ACM}
}

@book{gudhi:urm, title        = "{GUDHI} User and Reference Manual"
	, author      = "{The GUDHI Project}"
	, publisher     = "{GUDHI Editorial Board}"
	, year         = 2015
	, url =    "http://gudhi.gforge.inria.fr/doc/latest/"
}

@inproceedings{maria2014gudhi,
  title={The gudhi library: Simplicial complexes and persistent homology},
  author={Maria, Cl{\'e}ment and Boissonnat, Jean-Daniel and Glisse, Marc and Yvinec, Mariette},
  booktitle={International Congress on Mathematical Software},
  pages={167--174},
  year={2014},
  organization={Springer}
}

@article{henselman2016matroid,
	title={Matroid filtrations and computational persistent homology},
	author={Henselman, Gregory and Ghrist, Robert},
	journal={arXiv preprint arXiv:1606.00199},
	year={2016}
}

@inproceedings{hofer2017deep,
	title={Deep learning with topological signatures},
	author={Hofer, Christoph and Kwitt, Roland and Niethammer, Marc and Uhl, Andreas},
	booktitle={Advances in Neural Information Processing Systems},
	pages={1634--1644},
	year={2017}
}

@inproceedings{hylton2017performance,
	title={Performance enhancement of a computational persistent homology package},
	author={Hylton, Alan and Sang, Janche and Henselman-Petrusek, Greg and Short, Robert},
	booktitle={2017 IEEE 36th International Performance Computing and Communications Conference (IPCCC)},
	pages={1--8},
	year={2017},
	organization={IEEE}
}

@article{niyogi2008finding,
	title={Finding the homology of submanifolds with high confidence from random samples},
	author={Niyogi, Partha and Smale, Stephen and Weinberger, Shmuel},
	journal={Discrete \& Computational Geometry},
	volume={39},
	number={1-3},
	pages={419--441},
	year={2008},
	publisher={Springer}
}

@article{otter2017roadmap,
	title={A roadmap for the computation of persistent homology},
	author={Otter, Nina and Porter, Mason A and Tillmann, Ulrike and Grindrod, Peter and Harrington, Heather A},
	journal={EPJ Data Science},
	volume={6},
	number={1},
	pages={17},
	year={2017},
	publisher={Springer}
}

@article{mendoza2017parallel,
	title={Parallel multi-scale reduction of persistent homology filtrations},
	author={Mendoza-Smith, Rodrigo and Tanner, Jared},
	journal={arXiv preprint arXiv:1708.04710},
	year={2017}
}

@misc{Morozov,
	author = {Morozov, Dmitriy},
	title = {Dionysus Software},
	url  = {http://www.mrzv.org/software/dionysus/},
	year = {2017},
	lastaccessed = {01/22/2019}
}

@misc{ripser,
	author = {Bauer, Ulrich},
	title = {Ripser: efficient computation of Vietoris–Rips persistence barcodes},
	url  = {https://github.com/Ripser/ripser},
	year = {2018},
	lastaccessed = {01/22/2019}
}

@article{de2011persistent,
	title={Persistent cohomology and circular coordinates},
	author={De Silva, Vin and Morozov, Dmitriy and Vejdemo-Johansson, Mikael},
	journal={Discrete \& Computational Geometry},
	volume={45},
	number={4},
	pages={737--759},
	year={2011},
	publisher={Springer}
}

@article{bauer2019ripser,
	title={Ripser: efficient computation of Vietoris-Rips persistence barcodes},
	author={Bauer, Ulrich},
	journal={arXiv preprint arXiv:1908.02518},
	year={2019}
}

@article{esmaeilzadeh2012dark,
	title={Dark silicon and the end of multicore scaling},
	author={Esmaeilzadeh, Hadi and Blem, Emily and Amant, Renee St and Sankaralingam, Karthikeyan and Burger, Doug},
	journal={IEEE Micro},
	volume={32},
	number={3},
	pages={122--134},
	year={2012},
	publisher={IEEE}
}

@article{guss2018characterizing,
	title={On characterizing the capacity of neural networks using algebraic topology},
	author={Guss, William H and Salakhutdinov, Ruslan},
	journal={arXiv preprint arXiv:1802.04443},
	year={2018}
}

@article{zomorodian2010fast,
	title={Fast construction of the Vietoris-Rips complex},
	author={Zomorodian, Afra},
	journal={Computers \& Graphics},
	volume={34},
	number={3},
	pages={263--271},
	year={2010},
	publisher={Elsevier}
}

@inproceedings{zhang2019hypha,
	title={HYPHA: a framework based on separation of parallelisms to accelerate persistent homology matrix reduction},
	author={Zhang, Simon and Xiao, Mengbai and Guo, Chengxin and Geng, Liang and Wang, Hao and Zhang, Xiaodong},
	booktitle={Proceedings of the ACM International Conference on Supercomputing},
	pages={69--81},
	year={2019},
	organization={ACM}
}

@article{dey2019simba,
	title={SimBa: An Efficient Tool for Approximating Rips-filtration Persistence via Simplicial Batch Collapse},
	author={Dey, Tamal K and Shi, Dayu and Wang, Yusu},
	journal={Journal of Experimental Alsgorithmics (JEA)},
	volume={24},
	number={1},
	pages={1--5},
	year={2019},
	publisher={ACM}
}

@inproceedings{syzdykbayev2019persistent,
	title={Persistent Homology for Detection of Objects from Mobile LiDAR Point Cloud Data in Autonomous Vehicles},
	author={Syzdykbayev, Meirman and Karimi, Hassan A},
	booktitle={Science and Information Conference},
	pages={458--472},
	year={2019},
	organization={Springer}
}

@article{cavanna2015geometric,
	title={A geometric perspective on sparse filtrations},
	author={Cavanna, Nicholas J and Jahanseir, Mahmoodreza and Sheehy, Donald R},
	journal={arXiv preprint arXiv:1506.03797},
	year={2015}
}

@article{tralie2018ripser,
	title={Ripser. py: A Lean Persistent Homology Library for Python.},
	author={Tralie, Christopher and Saul, Nathaniel and Bar-On, Rann},
	journal={J. Open Source Software},
	volume={3},
	number={29},
	pages={925},
	year={2018}
}

@article{luetgehetmann2019computing,
	title={Computing persistent homology of directed flag complexes},
	author={Luetgehetmann, Daniel and Govc, Dejan and Smith, Jason and Levi, Ran},
	journal={arXiv preprint arXiv:1906.10458},
	year={2019}
}

@inproceedings{bauer2014distributed,
	title={Distributed computation of persistent homology},
	author={Bauer, Ulrich and Kerber, Michael and Reininghaus, Jan},
	booktitle={2014 proceedings of the sixteenth workshop on algorithm engineering and experiments (ALENEX)},
	pages={31--38},
	year={2014},
	organization={SIAM}
}

@article{delgado2014skeletonization,
  title={Skeletonization and partitioning of digital images using discrete morse theory},
  author={Delgado-Friedrichs, Olaf and Robins, Vanessa and Sheppard, Adrian},
  journal={IEEE transactions on pattern analysis and machine intelligence},
  volume={37},
  number={3},
  pages={654--666},
  year={2014},
  publisher={IEEE}
}

@book{knuth1997art,
  title={The art of computer programming},
  author={Knuth, Donald Ervin},
  volume={3},
  year={1997},
  publisher={Pearson Education}
}

@article{Kim:2009:SVH:1687553.1687564,
 author = {Kim, Changkyu and Kaldewey, Tim and Lee, Victor W. and Sedlar, Eric and Nguyen, Anthony D. and Satish, Nadathur and Chhugani, Jatin and Di Blas, Andrea and Dubey, Pradeep},
 title = {Sort vs. Hash Revisited: Fast Join Implementation on Modern Multi-core CPUs},
 journal = {Proc. VLDB Endow.},
 issue_date = {August 2009},
 volume = {2},
 number = {2},
 month = aug,
 year = {2009},
 issn = {2150-8097},
 pages = {1378--1389},
 numpages = {12},
 url = {https://doi.org/10.14778/1687553.1687564},
 doi = {10.14778/1687553.1687564},
 acmid = {1687564},
 publisher = {VLDB Endowment},
}

@article{bendich2016persistent,
  title={Persistent homology analysis of brain artery trees},
  author={Bendich, Paul and Marron, James S and Miller, Ezra and Pieloch, Alex and Skwerer, Sean},
  journal={The annals of applied statistics},
  volume={10},
  number={1},
  pages={198},
  year={2016},
  publisher={NIH Public Access}
}

@article{siddique2016proof,
  title={Proof of bijection for combinatorial number system},
  author={Siddique, Abu Bakar and Farid, Saadia and Tahir, Muhammad},
  journal={arXiv preprint arXiv:1601.05794},
  year={2016}
}

@article{sheehy13linear,
  Title = {Linear-Size Approximations to the {V}ietoris-{R}ips Filtration},
  Author = {Donald R. Sheehy},
  Journal = {Discrete \& Computational Geometry},
  Volume = {49},
  Number = {4},
  Pages = {778--796},
  Year = {2013}}

@article{aktas2019persistence,
  title={Persistence homology of networks: methods and applications},
  author={Aktas, Mehmet E and Akbas, Esra and El Fatmaoui, Ahmed},
  journal={Applied Network Science},
  volume={4},
  number={1},
  pages={61},
  year={2019},
  publisher={Springer}
}

@article{theis2017end,
  title={The end of moore's law: A new beginning for information technology},
  author={Theis, Thomas N and Wong, H-S Philip},
  journal={Computing in Science \& Engineering},
  volume={19},
  number={2},
  pages={41},
  year={2017},
  publisher={IEEE Computer Society}
}

@misc{henselman2016eirene,
  title={Eirene: a platform for computational homological algebra},
  author={Henselman, G},
  year={2016}
}

@inproceedings{boissonnat2012simplex,
  title={The simplex tree: An efficient data structure for general simplicial complexes},
  author={Boissonnat, Jean-Daniel and Maria, Cl{\'e}ment},
  booktitle={European Symposium on Algorithms},
  pages={731--742},
  year={2012},
  organization={Springer}
}

@article{boissonnat:hal-01883836,
  TITLE = {{An Efficient Representation for Filtrations of Simplicial Complexes}},
  AUTHOR = {Boissonnat, Jean-Daniel and Karthik, C.},
  URL = {https://hal.inria.fr/hal-01883836},
  JOURNAL = {{ACM Transactions on Algorithms}},
  PUBLISHER = {{Association for Computing Machinery}},
  VOLUME = {14},
  YEAR = {2018},
  MONTH = Sep,
  KEYWORDS = {Compact data structures ; Filtration ; Additional Key Words and Phrases: Simplicial complex ; CCS Concepts: $\bullet$Theory of computation $\rightarrow$ Computational geometry ; Data structures design and analysis},
  PDF = {https://hal.inria.fr/hal-01883836/file/filtrations-arxiv.pdf},
  HAL_ID = {hal-01883836},
  HAL_VERSION = {v1},
}

@book{pascal1887sopra,
  title={Sopra una formola numerica},
  author={Pascal, Ernesto},
  year={1887}
}

@article{carlsson2019persistent,
  title={Persistent and Zigzag Homology: A Matrix Factorization Viewpoint},
  author={Carlsson, Gunnar and Dwaraknath, Anjan and Nelson, Bradley J},
  journal={arXiv preprint arXiv:1911.10693},
  year={2019}
}

@inproceedings{edelsbrunner2000topological,
  title={Topological persistence and simplification},
  author={Edelsbrunner, Herbert and Letscher, David and Zomorodian, Afra},
  booktitle={Proceedings 41st annual symposium on foundations of computer science},
  pages={454--463},
  year={2000},
  organization={IEEE}
}

@article{kruskal1956shortest,
  title={On the shortest spanning subtree of a graph and the traveling salesman problem},
  author={Kruskal, Joseph B},
  journal={Proceedings of the American Mathematical society},
  volume={7},
  number={1},
  pages={48--50},
  year={1956},
  publisher={JSTOR}
}

@article{zhang2025computing,
  title={Computing and Learning on Combinatorial Data},
  author={Zhang, Simon},
  journal={arXiv preprint arXiv:2502.05063},
  year={2025}
}

@incollection{wandinger2005introduction,
  title={Introduction to lidar},
  author={Wandinger, Ulla},
  booktitle={Lidar: range-resolved optical remote sensing of the atmosphere},
  pages={1--18},
  year={2005},
  publisher={Springer}
}

@article{bodlaender2009problems,
  title={On problems without polynomial kernels},
  author={Bodlaender, Hans L and Downey, Rodney G and Fellows, Michael R and Hermelin, Danny},
  journal={Journal of Computer and System Sciences},
  volume={75},
  number={8},
  pages={423--434},
  year={2009},
  publisher={Elsevier}
}

@book{kozlov2008combinatorial,
  title={Combinatorial algebraic topology},
  author={Kozlov, Dmitry},
  year={2008},
  publisher={Springer}
}
    }
	\normalsize{
		\appendix
	\if 0
	
	\section{Filtering by Diameter for Persistence Computation of Vietoris-Rips Filtrations}
	
	\subsection{Enclosing Radius}
	\label{sec: appendix-enclosing-radius}
	The threshold condition (restricting all simplices and their cofacets to have diameter atmost a threshold value) can significantly reduce the amount of computation for matrix reduction. When computing persistence for full Rips filtrations, we may eliminate zero-persistence barcodes by defining a threshold that still preserves all nonzero persistence Rips barcodes. We define the enclosing radius $R$ as $min_{x \in X} max_{y \in X} d(x,y)$ where $d$ is the underlying metric of our finite metric space $X$.
	If we compute Vietoris-Rips barcodes up to diameter $\leq$ the enclosing radius, then the nonzero persistence pairs will not change after the threshold condition is applied \cite{henselman2016matroid}. Notice that applying the threshold condition is equivalent to truncating the coboundary matrix to a lower right block submatrix, potentially significantly lowering the size of the coboundary matrix to consider. We prove the following claim used in \cite{henselman2016eirene}.
	
	\begin{proposition}
	Computing persistence barcodes for full Rips filtrations with diameter threshold set to the enclosing radius does not change the nonzero persistence pairs.   
	\end{proposition}
	\begin{proof}
	Notice that when we reach the "enclosing radius" $R$ length in the filtration, every point has an edge to one apex point $p \in  X$. This means that any $d$-dimensional cycle must actually be a boundary. Thus the following two statements are true. 1. Any persistence interval $[birth,death)$ with $birth$<$R$ must have $death$<$R$. 2. Since no cycles that are not boundaries can form after $R$, there will be no nonzero persistence barcodes $[birth,death)$ with $birth>R$.
	
	By statements 1 and 2, we can restrict all simplices considered to have diameter $\leq$ $R$ and this does not change any of the nonzero persistence intervals of the original full Rips filtration.
	
	
	
	\end{proof}
	\subsubsection{Sparsification}
	Due to the exponential growth in the number of simplices by dimension during computation of Vietoris-Rips filtrations, demanding a high memory capacity and long execution time, we also consider sparse distance matrices. A sparse distance matrix $D$ simply means that we consider certain distances between points to be "infinite." This prevents certain edges from contributing to forming a simplex since a simplex's diameter must be finite. Computationally this results in a reduction in the number of simplices to consider when constructing the filtration, potentially saving both time and memory if the distance matrix is "sparse" enough.
	 
	There are two uses of sparse distance matrices. One usage is to truncate the filtration used in matrix reduction and compute Vietoris-Rips barcodes up to a particular diameter threshold, resulting in the same barcodes as in the dense case. The second usage is to approximate a finite metric space at the cost of a multiplicative approximation factor of barcode lengths~\cite{dey2019simba, sheehy13linear, cavanna2015geometric}.
	
	\section{The Key Ingredients to Ripser}
	\label{sec: appendix-ripser-ingredients}
	\subsection{Clearing Lemma}
	As shown in \cite{carlsson2019persistent}, there is a partition of all simplices into "birth" and "death" simplices. The introduction of a "birth" simplex in a simplex-wise refinement of the Rips filtration creates a homology class. On the other hand, "death" simplices zero a homology class or merge two homology classes upon introduction into the simplex-wise filtration. Death simplcies are of exactly one higher dimension than their corresponding birth simplex. Paired birth and death simplices are represented by a pair of columns in the boundary matrix. For a boundary matrix, the clearing lemma states that birth columns must be zero after reduction~\cite{chen2011persistent, bauer2014clear}. Furthermore, death columns are nonzero when fully reduced and their lowest nonzero entry after reduction by Algorithm \ref{alg:standard-algorithm}, the standard algorithm, determines its corresponding birth column. 
	
	This lemma optimizes the matrix reduction stage of computation and is most effective when used before any column additions. The clearing lemma is widely used in all persistent homology computation software packages to lower the computation time of matrix reduction. As shown in \cite{zhang2019hypha}, the smallest set of columns taking up 50\% of all column additions, also known as "tail columns" are mostly columns that can be zeroed by the clearing lemma. Since column additions are the bottleneck to matrix reduction, the clearing lemma can significantly speedup matrix reduction by zeroing computationally heavy columns.
	
	\subsection{Cohomology}	
	It has been proven through linear algebraic techniques that persistence barcodes can be equivalently computed by the matrix reduction, Algorithm \ref{alg:standard-algorithm} of a coboundary instead of a boundary matrix~\cite{de2011dualities}. The coboundary matrix is defined as the anti-transpose of the boundary matrix or equivalently, the transpose of the boundary matrix with reversed indices of the simplex-wise refinement of the original filtration (see Section \ref{sec: filtration-order} for filtration order and Figure \ref{fig: coboundarymatrix} for an illustration of a coboundary matrix). Thus the persistence pairs $(s,t)$ correspond to entries in the coboundary matrix of the form $(t,s)$ where $t$ is the cofacet/row of simplex/column $s$. 
	
	As shown in \cite{bauer2019ripser}, in the case of a full Rips filtration on $n$ points along with the clearing lemma, a significant number of creator columns can be eliminated due to the application of clearing to the top dimensional simplices. Computing cohomology for Vietoris-Rips filtrations, furthermore, significantly lowers the number of columns of a coboundary matrix of dimension $d$ to consider and thus the memory consumption needed to represent a coboundary matrix instead of a boundary matrix. This is because there are at most ${n \choose d+2}$ number of $d+1$ dimensional simplices (sparsely reprsented rows) and only ${n \choose d+1}$ $d$ dimensional simplices (columns), a multiplicative factor of $n$ difference in number of simplices.
	
	\subsubsection{Low Complexity 0-Dimensional Persistence Computation}
	\label{sec: 0-persistence}
    0-dimensional persistence can be computed by a union-find algorithm in Ripser. This algorithm has complexity of  $O(\alpha(n^2) \cdot n^2)$, where $n$ is the number of points and $\alpha$ is the inverse of the Ackermann's function (essentially a constant). There is no known algorithm that can achieve this low complexity for persistence computation in higher dimensions. This is one of the reasons for computing cohomology. (In computing cohomology, clearing is applied from lower dimension to higher dimension (clearing out columns in the higher dimension), the 0th dimension has no cleared simplices and there are very few $d$-dimensional simplices compared to $d+1$ dimension simplicies.) Furthermore, since the computation for 0-dimensional persistence with the union-find algorithm is highly efficient compared with parallel union-find algorithm demanding device to host and host to device memory transfer, in Ripser++ 0-dimension persistence computation is kept on CPU. We focus only on dimension $\geq 1$ persistence computation in this paper, where GPU can offer speedup, especially in the top dimension.
	
	\subsection{Implicit Matrix Reduction}
	In Ripser, the coboundary matrix is not fully represented in memory. Instead, the columns or simplices are represented by natural numbers via the combinatorial number system~\cite{knuth1997art, bauer2019ripser, pascal1887sopra} and cofacets are generated as needed and represented by combinatorial indices. This saves on memory allocation along the row-dimension of the coboundary matrix, which is exponentially larger in cardinality than the column-dimension. Furthermore, the generation of cofacets allows us to trade computation for memory. Memory address accesses are replaced by arithmetic and repeated accesses of the much smaller distance matrix and a small binomial coefficient table. Implicit matrix reduction intertwines coboundary matrix construction with matrix reduction.
	
	\subsubsection{Reduction Matrix vs. Oblivious Matrix Reduction }
	There are two matrix reduction techniques, see Algorithm \ref{alg:standard-algorithm}, in Ripser that work on top of implicit matrix reduction. These techniques are also applied on a much smaller submatrix of the original matrix in Ripser++ significantly improving performance over full matrix reduction, see Section \ref{sec: performance-optimization}. 
	
	The first is called the reduction matrix matrix reduction technique. This involves storing the column operations on a particular column in a $V$ reduction matrix by repeating the same column operations on the initially identity matrix $V$; in other words, $R=\partial \cdot V$ where $\partial$ is the boundary operator where $R$ is a reduction of $\partial$. To obtain a column of $R$ for reduction or to add to another column, the nonzeros of a column $V_{i,j}$ of $V_i$ are used to get column $R_i= \Sigma_j \partial_j \cdot V_{i,j}$.
	
	The second is called the oblivious matrix reduction technique. This involves not caching any previous computation with the $R$ or $V$ matrix, see Algorithm \ref{alg:oblivious}. Instead, only the coboundary matrix pivot (see Algorithm \ref{alg:standard-algorithm} indices are stored and a column addition is performed when the lowest $\bf{1}$ of a column being reduced has a matching row with some previous pivot to its left. 
	
    \begin{algorithm}[h]
    \caption{Oblivious Column Reduction}\label{alg:oblivious}
    \begin{algorithmic}[1]
    \Require $i$: column to reduce index, $D$: boundary matrix, $lookup[$rows $0..i-1]$: lookup table with $lookup[row]= col$ if $(row,col)$ is a pivot, -1 otherwise; $low(i)$: the maximum row index of any nonzero entry in column $i$, -1 if the column $i$ is 0.
    \Ensure fully reduced column $R[i]$ by oblivious reduction equivalent to $R_i$ =$D[i]$ reduced by Algorithm \ref{alg:standard-algorithm},
    \State \Comment{assume columns of index 0 to i-1 have all been reduced by the oblivious column reduction algorithm}
    \State $R[i] \gets D[i]$;
    \While{$lookup[low(R[i])] \neq -1$}
    \State $R[i] \gets R[i] + D[lookup[low(R[i])]]$
    \EndWhile
    \If{$R[i] \neq 0$}
    \State $lookup[low(R[i])] \gets i$
    \EndIf
    \end{algorithmic}
    \end{algorithm}
    
    The Reduction Matrix matrix reduction technique is correct by the fact that it recomputes $R_i= \Sigma_j \partial_j \cdot V_{i,j}$ as needed before adding it with $R_k$ for $k>i$. Thus it involves the same column additions as the standard algorithm \ref{alg:standard-algorithm}. The oblivious matrix reduction technique, on the other hand, requires a proof of its correctness. We prove that for any column, it reduces that column into the same column that the Algorithm \ref{alg:standard-algorithm} would reduce it to. We will use the notation $R_i$ to denote a column reduced by the standard algorithm (Algorithm \ref{alg:standard-algorithm} and $R[i]$ to denote a (partially) obliviously reduced column. Furthermore, $D[i]=D_i$ in the following proof.
    
    \begin{proof} (for Algorithm \ref{alg:oblivious} over all columns)
    
    $\bf{base\ case}$:
    The first nonzero column $i_0$ requires no column additions. $R_{i_0}= D[i_0]$ is equivalent to a fully reduced column by standard algorithm.

    The first nonzero column with column additions: $i_1$ is a sum of columns of index $k<i_1$ of $D$ as in the standard algorithm where $R_{k<i_1} = D[k<i_1]$.
    
    $\bf{induction\ hypothesis}$: 
    We have reduced all columns from 0 to $i \geq i_1$ by the oblivious matrix reduction algorithm. Each such column $i'$ s.t. $i \geq i' \geq i_1$ was reduced by a sum of boundaries equivalent to a sum of fully reduced $R_j, j<i'$ from the standard algorithm. This is still true if we restrict to the nonzero indices of column $i'$ from row $low(D[i'])$ down to row $low(R[i'])$ .
    
    $\bf{induction\ step}$
    Let $\it{next}$ nonzero column $R[k]$ (initially $D[k]$) which is partially reduced, $k>i$ that needs column reductions. Let $j= lookup[low(R[k])]$, the column with matching lowest $\bf{1}$ with column $k$, that adds with column $k$.

    If $R[j]=D[j]$, then if column $k$ needs to add with $D[j]$, then certainly this is an equivalent column addition as in the standard algorithm.

    Otherwise if $R[j]$ $\neq$ $D[j]$, add column $R[k]$ with $D[j]$ and call this new column $R[k]'$ and notice that all entries from $low(R[k])+1$ down to $low(R[k]')$ (viewing the column from top to bottom) of the working column $R[k]'$ are now exactly equivalent to the nonzeros from index $low(R[k])+1$=$low(R_j)+1$ down to $low(R[k]')=low(D[j])$ of column $D[j]$. This is because $low(R[k])$ is the maximum nonzero so all entries below it are zero, so we can recover a block of nonzeros equivalent to a bottom portion of column $D[j]$ upon adding $D[j]$ to $R[k]$. 
    
    Since we have recovered the exact same nonzeros of column $D[j]$ from $low(R_j)+1=low(R[k])+1$ down to $low(D[j])$, by induction hypothesis column $R[k]$ will add with $R_j$ (from standard algorithm)= sum of some boundaries $D[i<j]$ (from oblivious column reduction). We thus obtain $R[k] \gets R[k]+R_j$.
    \end{proof}
	
	The first technique can lower the column additions (addition of $D_j$ to $R[k]$) needed to reduce any particular column $k$ since after many column additions, many of the nonzeros of $V_k$ will cancel out by modulo 2 arithmetic. This is in contrast to oblivious matrix reduction where there cannot be any cancellation of column additions. It turns out by empirical fact that datasets with large number of column additions are executed more efficiently with the reduction matrix matrix reduction technique rather than the oblivious matrix reduction technique.

	\subsection{The Emergent Pairs Lemma}
	If we generate column cofacets during matrix reduction, we may "skip over" their construction if we can determine that they are 0-addition columns or have an "emergent pair" \cite{zhang2019hypha, bauer2019ripser}. These columns have no column to their left that can add with them. The lemma is stated in \cite{bauer2019ripser} and involves a sufficient condition to find a "lowest $\bf{1}$" or maximal indexed nonzero in a column followed by a check for any columns to its left that can add with it. These nonzero entries correspond to ``shortcut pairs" that form a subset of all persistence pairs. We may pair implicit matrix reduction with the emergent pairs lemma to achieve speedup over explicit matrix reduction techniques~\cite{bauer2017phat, zhang2019hypha}. 
	
	\section{More Algorithms}
	\label{sec: enumerating-cofacets}
	\subsection{Enumerating Cofacets in Ripser}
	In Algorithm \ref{alg:cofacets_vr}, we enumerate cofacets of a simplex $s$ by iterating through all vertices $v$ of $X$. We keep track of an integer $k$. If $v$ matches a vertex of $s$, then we decrement $k$ and $v$ until we can add ${v \choose k}$ legally as a binomial coefficient of the combinatorial index of a cofacet of $s$. 
		\begin{algorithm}[H]
		\caption{Enumerating Cofacets of a Simplex}\label{alg:cofacets_vr}
		\begin{algorithmic}[1]
			\Require $\pmb{X}=\{0...n-1\}$: a finite metric space; $s$: a simplex with vertices in $\pmb{X}$; $vertices(\cdot)$: the vertices of a simplex; $cidx(\cdot)$: the combinatorial index of a simplex; 
			\Ensure $\pmb{S}$: the facets of $s$ in lexicographically decreasing order.
			\State $\pmb{V} \gets vertices(s)$
			\State $cidx(s'_{high}) \gets 0$
			\State $cidx(s'_{low}) \gets cidx(s)$
			\While{$v \in \pmb{X}= \{0..n-1\}$}
			\If{$v\notin \pmb{V}$}
		    \State $cidx(s') \gets cidx(s'_{high})+{v \choose k}+cidx(s'_{low})$
		    \State $v \gets v-1$
		    \Else 
		    \While{$v \in \pmb{V}$}
		    \State $cidx(s'_{high}) \gets cidx(s'_{high})+ {v \choose k+1}$ \Comment{$vertices(s'_{high}) \gets vertices(s'_{high}) \cup \{v\}$}
		    \State $cidx(s'_{low}) \gets cidx(s'_{low})- {v \choose k}$ \Comment{$vertices(s'_{low}) \gets vertices(s'_{low})-\{v\}$}
		    \State $v\gets v-1; k\gets k-1$
		    \EndWhile
			\EndIf
			\State $append(\pmb{S},s')$
			\EndWhile
		\end{algorithmic}
	\end{algorithm}
	\subsection{Enumerating Cofacets of Simplices Induced by a Sparse 1-Skeleton in Ripser}
	The enumeration of cofacets for sparse edge graphs (sparse distance matrices involving few neighboring relations between all vertices) must be changed from Algorithm \ref{alg:cofacets_vr} for performance reasons. The sparsity of neighboring relations can significantly reduce the number of cofacets that need to be searched for. 
	Similar to the inductive algorithm described in \cite{zomorodian2010fast}, cofacet enumeration is in Algorithm \ref{alg:cofacets_sparse}.
		\begin{algorithm}[H]
		\caption{Enumerating Cofacets of Simplex for Sparse 1-Skeletons}\label{alg:cofacets_sparse}
		\begin{algorithmic}[1]
			\Require $\pmb{X}= \{0..n-1\}$: a finite metric space; $s$: a simplex with vertices in $\pmb{X}$; $vertices(\cdot)$: the vertices of a simplex; $cidx(\cdot)$: the combinatorial index of a simplex; $cidx_{vert}(\cdot)$: calculate the combinatorial index from the vertices; 
			\Ensure $\pmb{S}$: the facets of $s$ in lexicographically decreasing order.
			\State $\pmb{V} \gets vertices(s)$ 
			\State fix some $v_0 \in \pmb{V} \subset \pmb{X}$ 
			\For{each neighbor $v' \neq v_0$, $v' \in \pmb{X}-\pmb{V}$ of $v_0$ in decreasing order}
			\For{$w \in \pmb{V}$, $w \neq v_0$ and $w \neq v'$}
			\If{$w$ is a neighbor of $v'$}
			$\bf{continue}$ \Comment{jump to line 5 (inner for loop)}
			\Else
			\If{all neighboring vertices to $w$ are all greater than $v'$}
			\State \Return \Comment{there are no more cofacets that can be enumerated}
			\Else \Comment{there is some vertex $w'$ neighboring $w$ that is less than $v'$}
			\State $\bf{goto}$ try\_next\_vertex \Comment{jump to line 14}
			\EndIf
			\EndIf
			\EndFor
			\State $s' \gets$ $cidx_{vert}(V \cup v'$)
			\State $append(S,s')$
			\State try\_next\_vertex:
			\EndFor
		\end{algorithmic}
	\end{algorithm}
	
	\fi
    \clearpage
	\section{Some Proofs}
    
    We can connect $p$-dimensional simplices with their next higher $p+1$-dimension via a single point, which we call the \textbf{apex}.
    \begin{definition}
    Let $p \leq d_{amb}$.
    
    For a $t$-equilateral $p$-dimensional simplex $\sigma \subseteq \mathbb{R}^{d_{amb}}$, we call any point $a \in \mathbb{R}^{d_{amb}}$ that makes $\textsf{conv}(\sigma \cup \{a\})$ a $t$-equilateral $p+1$-dimensional simplex as a $t$-equilateral-\textbf{apex} of $\sigma$.

    Let $\textbf{t\text{-equilateral-apexes}}(\sigma)$ denote the set of all $t$-equilateral-apex points for $\sigma$.
    \end{definition}\begin{proposition}\label{prop: twoorthoapex}
    Amongst all apexes for a $t$-equilateral $p$-dimensional simplex $\sigma$, there are exactly two apexes that pass through the line passing through the centroid of a $p$-dimensional simplex $\sigma$ in the direction of one of its normals $\overrightarrow{n}_{\sigma} \in N_{\sigma}$. 
    \end{proposition}
    \begin{proof}
       The line passing through the centroid of a $p$-dimensional simplex $\sigma$ in the direction of its normal comprises the set of points:
    $\{c_{\sigma}+t\overrightarrow{n}_{\sigma}: t \in \mathbb{R}\}$.

    Since the line only has two directions, it can have atmost two intersection points with \\$\textbf{t\text{-equilateral-apexes}}(\sigma)$. Both intersection points are apexes since the line is centered at $c_{\sigma}$ as can be checked. 
    \end{proof}
    \begin{proposition}(Inscribing a  Polytope)\label{prop: ball-intersection-polytope}
    
    Let $t>0$ and $p,d_{amb} \in \mathbb{N}, 0\leq p\leq d_{amb}$.

    For $p+1$ $t$-equidistant points $\sigma_p=\{x_1,...,x_{p+1}\}$, there exists a polytope $P$ with  
    \begin{equation}
        P \subseteq \bigcap_{i=1}^{p+1}B(x_i,t)  \subseteq \mathbb{R}^{d_{amb}}, P \neq \emptyset
    \end{equation} 
    spanning $2^{d_{amb}-p}(p+1)$ many points. 
    \end{proposition}
    \begin{proof}
        Let \begin{equation}
            S_{p+1}= \bigcap_{i=1}^{p+1} \partial(B(x_i,t)) 
        \end{equation}
        for $p\leq d_{amb}$. 

        Certainly by definition of the boundary of a ball, all $s \in S_{p+1}$ have the property that $\|s-x_i\|_2 =t, \forall i=1,...,p+1$.

        We construct $P$ inductively.

        Let $P_{q}, p \leq q \leq d_{amb}$ be a collection of point sets we construct from $\sigma_p$. This can be defined inductively:
        
        \begin{enumerate}
            \item $P_p\gets \{\sigma_p\}$.
            \item $P_{q+1}\gets P_{q} \cup \{\textsf{conv}((\{u\} \cup \tau): \forall u \in (S_{p+1} \cap\text{apexes}(\tau) \cap \{c_{\tau}+t \overrightarrow{n}_{\tau} : t \in \mathbb{R}, \text{ for a single }\overrightarrow{n}_{\tau} \in N_{\tau}\}), \forall \tau \in P_{q}\} $
        \end{enumerate}
        From $q$ to $q+1$, we add $t$-equilateral apex points for each simplex $\tau$ in $P_q$ on the orthogonal axis passing through the centroid $c_{\tau}$ of $\tau$. By Proposition \ref{prop: twoorthoapex}, we are adding two diametrically opposing apex points. 
        
        Since taking the convex hull of a $t$-equidistant simplex with its apex maintains $t$-equidistance, all simplices in $P_q$ are $t$-equidistant simplices. Furthermore, since $\tau \subseteq \textsf{conv}((\{u\} \cup \tau) , \forall \tau \in P_q, p\leq q\leq d_{amb}$ and $\sigma_p \in P_p$, we must have that $\sigma_p\subseteq \tau, \forall \tau \in P_q, p\leq q\leq d_{amb}$. Thus, each $\tau$ is the convex hull of $\sigma_p$ with points $t$-equidistant from all of the points $x_1,....,x_{p+1}$. 

        Since $\bigcap_{i=1}^{p+1}B(x_i,t)$ is convex, each $\tau \in P_q$ is the smallest convex set containing $x_1,...,x_{p+1}$ and the apex points used to form $\tau$, and all apex points belong to $S_{p+1}$. We must have that $\tau \subseteq \bigcap_{i=1}^{p+1}B(x_i,t)$.
         
        Since constructing $P_{q+1}$ from $P_q$ introduces $2\lvert P_q \rvert$ many new apex points. From $P_p$ to $P_{d_{amb}}$ we must double the size of the starting set of $p+1$ equidistant points $d_{amb}-p$ times. 

        Let $P\gets \bigcup_{\tau \in P_{d_{amb}}} \tau$. 
        
       There are in total $2^{d_{amb}-p}(p+1)$ many points used to span $P$. 

       It is a union of $d_{amb}$-dimensional simplices where by construction of $P_{d_{amb}}$, each simplex intersects another at a $d_{amb}-1$ dimensional face and all simplices were constructed from $d_{amb}-1$-dimensional faces. Thus $P$ is a polytope.
    \end{proof}
    \subsection{Cavities in Hypercubes}
    Here we discuss cavities in hypercubes. This section supports Corollary \ref{cor: k-cavity-corollary}. 
    Here we define a cavity in a hypercube, which is intuitively a convexly deformed high dimensional closed ball.
    \begin{definition}
        A cavity in a hypercube $H=[0,1]^{d_{amb}}$ is a strict open subset $C\subsetneq X$ where $C$ is a convex connected compact set diffeomorphic to the hypercube $[0,1]^{d_{amb}}$.

        The boundary of a cavity, $\partial(C)$ is the subset of $C$ that is diffeomorphic to the $d_{amb}$-dimensional sphere: $\mathbb{S}^{d_{amb}-1}$. 
    \end{definition}
    \begin{definition}
        For a cavity $C$, a Gauss map $\mathbf{n}:  \partial(C)\rightarrow \mathbb{S}^{d_{amb}-1}$ defines the unit normal vector at each point $p \in\partial(C)$. This is the vector orthogonal to the tangent space of $\partial(C)$ at $p$. 
    \end{definition}
    \begin{definition}
        For a cavity $C \subsetneq [0,1]^{d_{amb}}$, we can define the medial axis $M(C) \subseteq C$ as the the set of points where every point on $M(C)$ has atleast two closest points from $\partial(C)$: 
        \begin{equation}
           M(C)\triangleq \{x \in C: 
           \lvert \arg\min_{y\in \partial(C)} d(x,y) \rvert \geq 2\}
        \end{equation}
        \end{definition}
        
    \begin{definition}
    Let $C$ be a cavity of a hypercube $[0,1]^{d_{amb}}$, the in-Radius over $C$ is defined as follows: 
\begin{equation}
        \textsf{inRadius}(C)\triangleq \min_{x \in \partial(C)} d(M,x)
    \end{equation}
    \end{definition}
    \begin{lemma}\label{lemma: appendix-k-cavities-betti}
        For a hypercube $H=[0,1]^{d_{amb}}$  containing $k$ cavities $C_1,...,C_k$ and let $0<R<1$ satisfy the following inequality: 
        \begin{equation}
         R< \min(\textsf{inRadius}(C_i),d(C_i,\partial(H)), \min_j\min_{x \in C_i} d(x,C_j), \forall i=1,...,k
        \end{equation}
        where the last two quantities in the minimum are the distances between cavities and the distances from cavities to the boundary. 
        
        If for any finite sample set $S \subseteq H$ every unit cell 
        \begin{equation}
            c_{j_1,...,j_{d_{amb}}}:=\Pi_{i=1}^{d_{amb}}[\frac{j_iR}{2\sqrt{d_{amb}}}, \frac{(j_i+1)R}{2\sqrt{d_{amb}}}] \subseteq H\setminus \bigcup_{i=1}^k C_i
        \end{equation}
        contains atleast one point from $S$, 
       then: 
        \begin{equation}
             \beta_p\geq k, \forall p=1,...,d_{amb}\text{ for the Betti numbers on }\textsf{VR}_{R}(S)
        \end{equation}
    \end{lemma}
    \begin{proof}
        For each cavity $C$, consider the following ball:
        \begin{equation}
            B:=\{x \in C: d(x,x^*)<R,\min_{y \in \partial(C)}d(x^*,y)=\textsf{inRadius}(C)\}
        \end{equation}
        Certainly $B \subsetneq C$ since $R< \textsf{inRadius}(C)$.

        \textbf{Claim 1. }We claim that no pair of boundary points on $\partial(C)$ that pass through $B$ can have distance less than or equal to $2R$:

        Say there is a pair $(x,y) \in \partial(C)\times \partial(C)$ with $d(x,y) \leq 2R$ and so that there is a point $z \in B$ with: 
        \begin{equation}
            d(x,y)=d(x,z)+d(z,y)\leq 2R
        \end{equation}
        By minimality of $x^*$, we have upon replacing with the suboptimal $z$:  
        \begin{subequations}
        \begin{equation}
            d(x,z)\geq \min_{x' \in \partial(C)} d(x',z)> R
        \end{equation}
        \begin{equation}
            d(y,z)\geq \min_{y' \in \partial(C)} d(y',z)> R
        \end{equation}
        \end{subequations}
        Thus:
        \begin{equation}
        d(x,y)> 2R
        \end{equation}
        This is a contradiction. 

       \textbf{2. }By Claim 1, we must have that no pair of points on $\partial(C)$ can pass through the ball $B$ with distance less than $2R$. 

       By assumption, we have that the grid of resolution $\frac{R}{2\sqrt{d_{amb}}}$ on $H\setminus \bigcup_{i=1}^k C_i$ has atleast one point in each cell $c_{j_1,...,j_{d_{amb}}}$. 

       For the cavity $C$, we know that no cells completely contained in $C$ can have a single point from $S$. This is by definition of the $k$-cavity sampling criterion. 
       
        Consider the set of cells $N(C)$ adjacent to the missing cells. These are the cells adjacent by a length of $\frac{R}{2\sqrt{d_{amb}}}$ in an orthogonal direction to some missing cell. Let 
        \begin{equation}
            \hat{C}:= \{x \in H\setminus \bigcup_{i=1}^kC_i: x \in N(C)\}
        \end{equation}
       
       \textbf{Claim 2a. } 
    $\textsf{Rips}_R(\hat{C})$ forms atleast one $p$-dimensional homological cycle for $p=1,...,d_{amb}-1$.

    Proof by contradiction: 

    Say $\hat{C}$ does not form a $p$-dimensional homological cycle, then there is some $p-1$ dimensional face of a simplex $\sigma \in \hat{C}$ that has no incident $p$-dimensional simplices from $\hat{C}$. This means that there are $p+1$ points $P$ that did not form a $p$-dimensional simplex in $\hat{C}$. In $\textsf{VR}_R(S)$, these $p+1$ points must have that $\textsf{diam}(P)>R$. This contradicts the largest distance between two points: $2\frac{R}{2}$, which is twice the length of the diameter of the cells: $\frac{R}{2}$. We thus have that $\textsf{Rips}_R(\hat{C})$ contains a $p$-dimensional homological cycle.
    
    Since $\hat{C}$ forms atleast one $p$-dimensional homological cycle, there must be atleast one $p$-dimensional simplex causing a $p$-dimensional creation event in $\textsf{VR}_R(S)$. This creation event has infinite persistence since it can never close the $d_{amb}$-dimensional  ball $B$ by Claim 1. Thus, we get the inequality:
       \begin{equation}
           \beta_{p}(\textsf{VR}_R(S))\geq k, \forall p= 1,...,d_{amb}
       \end{equation}
    \end{proof}
	\subsection{Oblivious Column Reduction Proof}
	\label{sec: appendix-oblivious-proof}
	Recall the following notation: let $D$ and $R$ be as in Algorithm \ref{alg:standard-algorithm}, let $D_j$ denote the $jth$ column of $D$, $R_j$ denote a fully reduced column, i.e. the $jth$ column of $R$ after Algorithm \ref{alg:standard-algorithm} terminates, and let $R[j]$ denote the $jth$ column of $R$ during Algorithm \ref{alg:oblivious}, partially reduced.
    \begin{proof} 
    $\bf{base\ case}$:
    The first nonzero column $j_0$ requires no column additions. $R_{j_0}=D_{j_0}$ is equivalent to a fully reduced column by standard algorithm.

    
    $\bf{induction\ hypothesis}$: 
    We have reduced all columns from 0 to $j \geq j_0$ by the oblivious matrix reduction algorithm. Each such column $R[j']$, $j \geq j' \geq j_0$ (initially $D_{j'}$) was reduced by a sum of a sequence of $D_{i'}$, $i'<j'$, equivalent to a sum of a sequence of fully reduced $R_i, i<j'$ from the standard algorithm. 
    
    $\bf{induction\ step}:$
    Let the $\it{next}$ column $k$, to the right of column $j$, be a nonzero partially reduced column that needs column additions and call it $R[k]$ (initially $D_k$). Let $j= lookup[low(R[k])]$, the column index with matching lowest $\bf{1}$ with column $k$, that must add with column $k$. 

    If $R[j]=D_j$, then since column $k$ adds with $D_j$, certainly $R[k] \gets R[k]+(R_j=R[j]=D_j)$

    Otherwise if $R[j]$ $\neq$ $D_j$, add column $D_j$ with $R[k]$ and call this new column $R[k]'$ and notice that all nonzeros from $low(R[k])+1$ down to $low(R[k]')$ (viewing the column from top to bottom) of the working column $R[k]'$ are now exactly equivalent to the nonzeros from index $low(R[k])+1$=$low(R_j)+1$ down to $low(R[k]')=low(D_j)$ of column $D_j$. This is because $low(R[k])$ is the lowest $\bf{1}$ so all entries below it are zero, so we can recover a block of nonzeros equivalent to a bottom portion of column $D_j$ upon adding $D_j$ to $R[k]$. 
    
    We have recovered the exact same nonzeros of column $D_j$ from $low(R_j)+1=low(R[k])+1$ down to $low(D_j)$. Thus by the oblivious algorithm, before $low(R[k]')$ rises above $low(R[k])$ during column reduction, the sequence of columns to add to $R[k]'$ is equivalent to the sequence of columns to add to $D_j$. By the induction hypothesis on column $D_j$ $R_j=\Sigma_{i<j} D_i$, where the right hand side comes from Algorithm \ref{alg:oblivious}. 
    We thus have, $R[k] \gets R[k]+(R_j=\Sigma_{i<j} D_i)$.
    \end{proof}
	
	\if 0
	\section{A Random Approximation Model for the Number of Apparent Pairs in a 2-Skeleton on n Points}
	\label{sec: appendix-apparent-model}
	\if 0
	\subsection{Algorithm for Randomly Assigning Apparent Pairs on a 2-Skeleton}
	Consider the following algorithm:
	\begin{algorithm}
    \caption{Algorithm for Random Apparent Pairs Construction}\label{alg: apparentconstruction}
    \begin{algorithmic}[1]
    \Require $d_i$ is a sequence of diameters with $d_1>d_2>...d_{n \choose 2}$; a 2-skeleton on n points: $X$= $(V,E,T)$ where $V$ is a set of $n$ vertices, $E$ is a set of ${n \choose 2}$ edges and $T$ is a set of ${n \choose 3}$ triangles.  
    \Ensure a sequence of apparent pairs of edges and triangles $(e_i,t_i)$ emitted with $diam(e_i)=diam(t_i)=d_i$
    
    \Procedure {random diameter assignment}{$X$}
    \Repeat
    \State Uniformly at random pick a 1-dimensional simplex $e_i \in E$, and assign it an edge diameter $d_i$ strictly less than all $d_j$ $j<i$. (e.g. let $d_i= {n \choose 2}-i+1$). 
    \If{there are triangles incident to $e_i$}
    \State Pair up $e_i$ with its oldest cofacet, the unique triangle $t_i \in T$ of highest lexicographic order as an apparent pair $(e_i,t_i)$; 
    \State Emit $(e_i,t_i)$.
    \State Remove this edge $e_i$ from $E$ and all triangles $t'_i$ containing $e_i$ in their boundary from $T$ since these triangles must all have the same diameter $d_i$.
    \EndIf
    \Until{there are no more triangles left in $T$}
    \EndProcedure
    \end{algorithmic}
    \end{algorithm}
    
    The simple random algorithm assigns diameters to a subset of the edges in decreasing order so that each diameter value $d_i$ at iteration $i$ results (if possible) in an apparent pair $(e_i,t_i)$ with edges $e_i$ and triangle $t_i$ both of diameter $d_i$. $d_1>d_2>...>d_j$ where $j$ is the last iteration of the algorithm and $j \leq {n \choose  2}$. In the Algorithm \ref{alg: apparentconstruction} at line 5, since we assume at iteration $i, i>i'$ that $d_i<d_{i'}$ and that all triangles of higher diameter than $d_{i'}$ have already been removed, at iteration $i$ all cofacets $t'$ of $e_i$ must have the same diameter as $e_i$. 
    Recalling Assumption \ref{ass: different diameters} and Observation \ref{obs: reassignment}, this random algorithm is equivalent to uniformly at random assigning permutations of the numbers 1...${n \choose 2}$ to the lower triangular part of a symmetric distance matrix $D$ and estimating the number of apparent pairs in the 1-dimensional coboundary matrix induced by $D$ from the first $j$ edges from largest diameter to smallest diameter.  
    \fi
    \subsection{A Model for Analysis of Random Algorithm \ref{alg: apparentconstruction}}
    \label{sec: appendix-apparent-model-bernoulli}
    
    We approximately analyze Algorithm \ref{alg: apparentconstruction} by changing the algorithm as follows. Let there be a 2-skeleton $X= (V,E,T)$ as in Algorithm \ref{alg: apparentconstruction}. Let $E' \subset E$ be the subset of edges not including a single point $v \in V$ (pick the point with highest index) and $T' \subset T$ be the subset of triangles induced by $E'$. Modify Algorithm \ref{alg: apparentconstruction} to let $j \leq {n-1 \choose 2}$ be the number of iterations before a forced exit, replacing line 8. Modify Algorithm \ref{alg: apparentconstruction} at line 3 to choose uniformly at random from $E'$ instead of $E$, forming a sequence $C$ with $j$ different edges. 
    
    We pick edges from $E'$ since this ensures that the If in line 4 of Algorithm \ref{alg: apparentconstruction} will always evaluate to true and thus that the cardinality of $C$, the number of apparent pairs in $X$, is the same as the number of iterations of the algorithm. After choosing $j$ edges, we count how many triangles are still left in $T'$ in expectation. 
    
    We define a Bernoulli random variable for each triangle $t \in T'$ of the 2-skeleton $X$. 
   
    $$
    X_{t,j} = \left\{
     \begin{array}{lr}
       1 &  \text{if triangle t}\in T'\text{ is not incident to any edges in } C\\
       0 &  \text{otherwise}
     \end{array}
   \right.$$
    
    We notice that for every triangle, the same random variable can be defined on it, all identically distributed. 
    
    Let $$p_{t,j}= \frac{({n-1 \choose 2}-3)\cdot({n-1 \choose 2}-4)\cdots({n-1 \choose 2}-3-j+1)}{({n-1 \choose 2}\cdot({n-1 \choose 2}-1)\cdots({n-1 \choose 2}-j+1))}$$ be the probability of triangle $t \in T$ not containing any of the $j$ chosen edges in its boundary of 3 edges.
    
    We thus define the random variable $T_j$= $\Sigma_{t \in T'}X_{t,j}$ to count the number of triangles remaining after $j$ edges are chosen in sequence.
    
    Taking expectation, we get 
    $$E[T_j] = \Sigma_{t \in T'}E[X_t] = \Sigma_{t \in T'} 1 \cdot p_{t,j} = {n-1 \choose 3} \cdot \frac{({n-1 \choose 2}-3)\cdot({n-1 \choose 2}-4)\cdots({n-1 \choose 2}-3-j+1)}{({n-1 \choose 2}\cdot({n-1 \choose 2}-1)\cdots({n-1 \choose 2}-j+1))}$$
    
    by linearity of expectation, the definition of $T'$ and the definition of $p_{t,j}$.

    Set $E[T_j]=\tau$, with $\tau$ the number of triangles reserved to not be incident to the sequence $C$ of $j$ apparent edges of $X$. Then solve for $j$ from the equation $E[T_j]=\tau$ with a numerical equation solver system, and then divide $j$ by ${n \choose 2}$, the total number of edges, call this the ratio $r_{\tau}(n)$. Since we are just building a mathematical model to match experiment, we fit our curve $r_{\tau}(n)$ to the true experimental curve from Figure \ref{fig: apparent-ratios}, arriving at an estimated $\tau=500$. We then obtain the dotted curve in Figure \ref{fig: apparent-ratios} $r_{500}(n)$, a function of $n$, the number of points.
    
    The shape of the model's curve, which matches experiment and stays within theoretical bounds is the primary goal of our model. The constant, $\tau$=500, suggests that as the number of points increases, in practice the expected percentage of triangles not a cofacet of an apparent edge decreases to 0 and that the the expected value is approximately a constant value. 
    
    \fi
    \section{An Equivalent Model to Section \ref{sec: appendix-apparent-model} for Analyzing Algorithm \ref{alg: apparentconstruction}}
    \label{sec: appendix-equivalent-model}
    Consider the same random variable $T_j$, as before, as the number of triangles in $T'$ not incident to any edge chosen from a sequence of edges $C$ of length $j$. Recall that in the model, there are only $j$ iterations of the modified Algorithm \ref{alg: apparentconstruction}. Consider the recurrence relation:
    
    $$T_j= T_{j-1}-Y_j$$ 
    where $Y_j$ is the random variable for the number of triangles removed at step $j$. 
    
    Taking expectations on both sides and using linearity of expectation, we get:
    $$E[Y_j] = \frac{\Sigma_{u_1...u_j}Y_j(u_1...u_{j-1},u_j)}{({n-1 \choose 2}-j+1)\cdot({n-1 \choose 2}-j+2)\cdots {n-1 \choose 2}}$$ 
    $$= \frac{\Sigma_{u_1...u_{j-1}}\Sigma_{u_j}     Y_j(u_1...u_{j-1},u_j)}{({n-1 \choose 2}-j+1)\cdot({n-1 \choose 2}-j+2)\cdots {n-1 \choose 2}}$$
    $$=\frac{\Sigma_{u_1...u_{j-1}} 3 \cdot ({n-1 \choose 3}-\Sigma_{k\leq j-1}Y_{k}(u_1...,u_{k}))}{({n-1 \choose 2}-j+1)\cdot({n-1 \choose 2}-j+2)\cdots {n-1 \choose 2}}$$ by the fact that the total number of triangles incident to all remaining edges at the $j$th step must be (3 $\cdot$ the remaining number of triangles after $j$-1 iterations). (Think of the bipartite graph between edges (left nodes) and triangles (right nodes); the total number of remaining bipartite edges at the jth step in this bipartite graph is what we are counting. These bipartite edges represent the triangles incident to the remaining edges or equivalently, the edges incident to the remaining triangles.)
    $$=\frac{\frac{\Sigma_{u_1...u_{j-1}} 3 \cdot T_{j-1}}{({n-1 \choose 2}-j+2)\cdots {n-1 \choose 2}}}{{n-1 \choose 2}-j+1}= \frac{3 \cdot E[T_{j-1}]}{{n-1 \choose 2}-j+1}$$
    

    We thus have the recurrence relation:
    $$E[T_j] = E[T_{j-1}]\cdot(1-\frac{3}{{n-1 \choose 2}-j+1})$$ with $E[T_0]= {n-1 \choose 3}$.
    Solving the recurrence, we get:
    $$= {n-1 \choose 3} \cdot \Pi_{i=0}^{j-1}(1-\frac{3}{{n-1 \choose 2}-i})= {n-1 \choose 3}\cdot \Pi_{i=0}^{j-1}\frac{{n-1 \choose 2}-i-3}{{n-1 \choose 2}-i}$$
    
    $$= {n-1 \choose 3} \cdot \frac{({n-1 \choose 2}-3)\cdot({n-1 \choose 2}-4)\cdots({n-1 \choose 2}-3-j+1)}{({n-1 \choose 2}\cdot({n-1 \choose 2}-1)\cdots({n-1 \choose 2}-j+1))}$$
    
    Notice this is the same equation as in Section \ref{sec: appendix-apparent-model}.
    
    \if 0
    \section{The Theoretical Upper Bound on The Number of Apparent Pairs Can be Achieved Under the Assumption that all Diameters are Different in Dimension 1}
    
    The theoretical upper bound of Theorem \ref{theorem: apparent bounds} can still be achieved in dimension 1 while under Assumption \ref{ass: different diameters}, a very natural assumption to make in nature. Consider the following diameter assignments (distance matrix); let $d_1>d_2 > ... >d_{n\cdot(n-1)/2}>0$ be a sequence of diameters. Assign them in increasing lexicographic order on the 1-simplices ordered on the vertices sorted in decreasing order. For example, 1-simplex (10) with vertices 1 and 0 gets assigned $d_1$, 1-simplex (20) with vertices 2 and 0 gets assigned $d_2$. See the following distance matrix for how the diameters are assigned.
    \begin{figure}[h]
	\includegraphics[width=0.25\columnwidth]{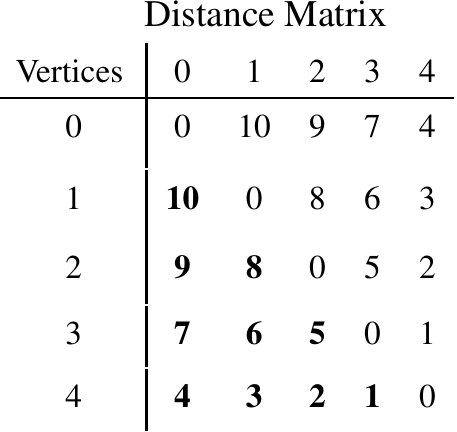}
	\caption{An example distance matrix with edge diameters assigned in decreasing order for increasing lexicographic order on the decreasing sequence of vertices. By Observation \ref{obs: reassignment}, the barcodes are equivalent up to scaling so long as the values are in the same order; thus we use the numbers 1 to 10.}
	\end{figure}
    
    To form the corresponding coboundary matrix, we transform the coboundary matrix for the case of all equivalent diameters over all columns to the case where all diameters are different. This is done by the following greedy algorithm:
    
    \begin{algorithm}
    \caption{Algorithm for Coboundary Matrix Transformation Keeping Maximum Apparent Pair Fraction}\label{alg: coboundary-matrix-transformation}
    \begin{algorithmic}[1]
    \Require $d_1>d_2>...>d_{n\cdot(n+1)/2}$, the diameters to assign to each column; $D$ a coboundary matrix with all column diameters the same
    \Ensure a coboundary matrix still with maximum number of apparent pairs
    \For{every column $e_i$ of $D$ iterated in increasing lexicographic order and decreasing diameter assignment with increasing $i$ starting at 1}
    \State Assign column $e_i$ diameter $d_i$ 
    \State All rows/cofaces of column/edge $e_i$ are grouped together with the rows sorted by lexicographic order, not moving a row once it has been assigned a diameter 
    \EndFor
    \end{algorithmic}
    \end{algorithm}
    
    Notice that in the algorithm we preserve the apparent columns but permute the rows of the coboundary matrix of the case where all diameters are the same. In particular, all lowest $\bf{1}$'s are preserved but permuted upward in the coboundary matrix while no columns are permuted.
    
    Thus each lowest $\bf{1}$ (the entry with all zeros below it) must be the same youngest cofacet it was in the original coboundary matrix and is thus still an oldest facet (having all zeros to its left) since that cofacet had the property to begin with. 
    
    \begin{figure}[h]
	\includegraphics[width=1.0\columnwidth]{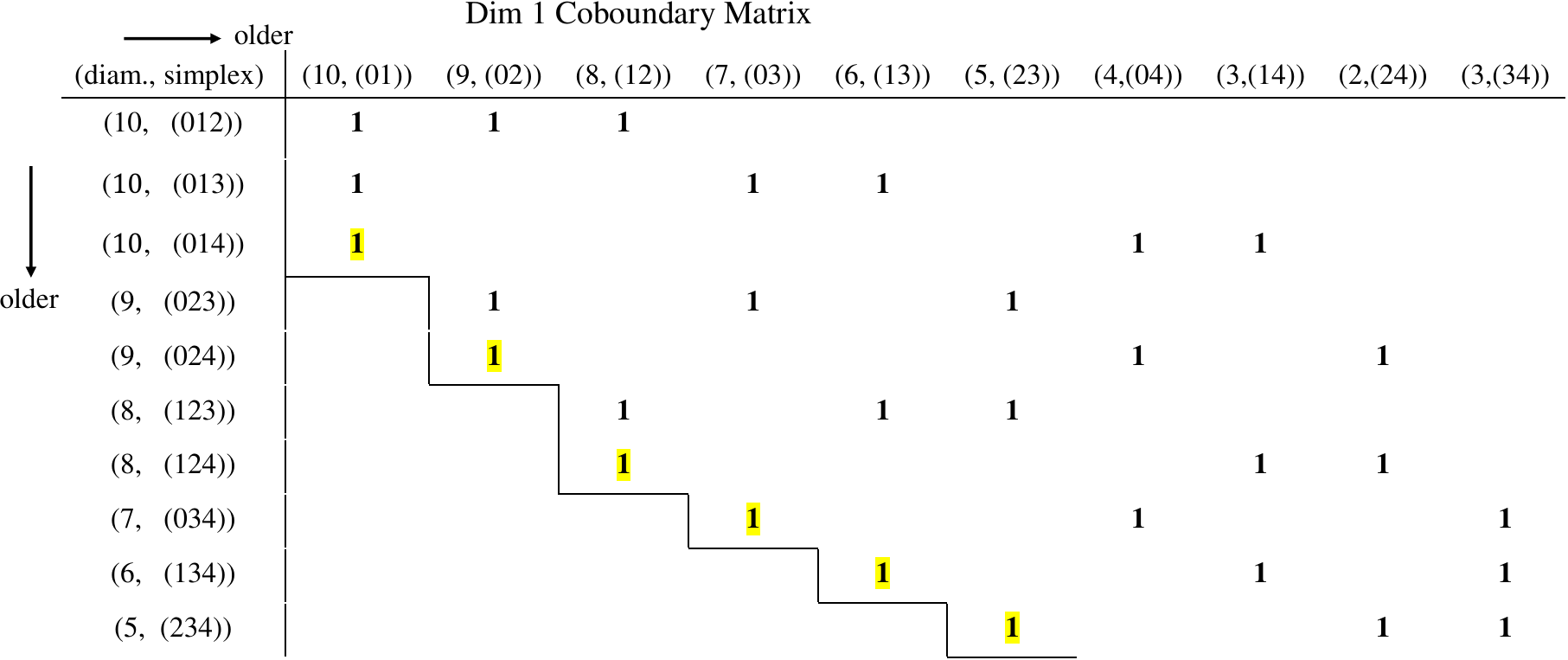}
	\caption{On the left is a dimension 1 coboundary matrix of the full Rips filtration of the 2-skeleton on 5 points with all simplices of diameter 1. The yellow highlighted entries above the staircase correspond to apparent pairs. Notice the permutation of the rows into groups of 2-simplices with the same diameter; the columns do not get permuted.}
	\end{figure}
	
    \fi
    }

	
\end{document}